\documentclass[letterpaper,twocolumn,10pt]{article}
\usepackage{usenix}

\usepackage{tikz}
\usepackage{amsmath}

\usepackage{filecontents}

\usepackage{cite}
\usepackage{amsmath,amssymb,amsfonts}
\usepackage{algorithmic}
\usepackage[algo2e,ruled,vlined]{algorithm2e}
\usepackage{graphicx}
\usepackage{xcolor}
\usepackage{booktabs}       % professional-quality tables
\usepackage{multirow, makecell}       %合并行列
\usepackage{subfigure}
\usepackage{diagbox}
\usepackage{amsthm}

\usepackage{dblfloatfix}
\usepackage{afterpage}
\usepackage{array}
\usepackage{tabularx}
\usepackage{cleveref}
\usepackage{subcaption}
\usepackage{comment}

\usepackage{url}
\theoremstyle{plain}
\newtheorem{theorem}{Theorem}[section]

\theoremstyle{definition}

\theoremstyle{remark}
\newtheorem{remark}{Remark}[section]
\begin{document}
%-------------------------------------------------------------------------------

%don't want date printed
\date{}
\makeatletter
\renewcommand{\@fnsymbol}[1]{}   % 去掉 thanks 的星号
\makeatother

% make title bold and 14 pt font (Latex default is non-bold, 16 pt)
\title{\Large \bf TopFeaRe: Locating Critical State of Adversarial Resilience for Graphs Regarding Topology-Feature Entanglement \thanks{* The authors contribute equally. Corresponding author: Xinxin Fan.} \thanks{This work has been accepted to appear in USENIX Security'26.}\\
%  An (Incomplete) Example
}

%for single author (just remove % characters)
%\author{
%{\rm Xinxin Fan}\\
%State Key Laboratory of AI Safety, \\Institute of Computing Technology, \\Chinese Academy of Sciences; 
%University of Chinese Academy of Sciences
%\and
%{\rm Second Name}\\
%Second Institution
%% copy the following lines to add more authors
%% \and
%% {\rm Name}\\
%%Name Institution
%} % end author

\author{
	\rm Xinxin Fan$^{1,2,*}$ Wenxiong Chen$^{3,1,*}$ Quanliang Jing$^{1}$ Chi Lin$^3$ Shaoye Luo$^{1,2}$ Wenbo Song$^{3,1}$ Yunfeng Lu$^4$ 
	\\ $^1$State Key Laboratory of AI Safety, Institute of Computing Technology, Chinese Academy of Sciences 
    \\ $^2$University of Chinese Academy of Sciences, \textit{\{fanxinxin,jingquanliang, luoshaoye\}@ict.ac.cn}
    \\ $^3$Dalian University of Technology, \textit{c.lin@dlut.edu.cn} \textit{\{cwx1581015, songwenbo\}@mail.dlut.edu.cn} 
    \\ $^4$Beihang University, \textit{lyf@buaa.edu.cn}
}

\maketitle

%-------------------------------------------------------------------------------
\begin{abstract}
%-------------------------------------------------------------------------------
Graph adversarial attacks are usually produced from the two perspectives of topology/structure and node feature, both of them represent the paramount characteristics learned by today's deep learning models. Although some defense countermeasures are proposed at present, they fails to disclose the intrinsic reasons why these two aspects necessitate and how they are adequately fused to co-learn the graph representation. 
Towards this question, we in this paper propose an adversarial defense approach through locating the graph's critical state of adversarial resilience, resorting to the equilibrium-point theory in the discipline of complex dynamic system (CDS). In brief, our work has three novelties: i) Adversarial-Attack Modeling, i.e. map a graph regime into CDS, and use the oscillation of dynamic system to model the behavior of adversarial perturbation; ii) 2D Topology-Feature-Entangled Function Design for Perturbed Graph, i.e. project graph topology and node feature as two characteristic spaces, and define two-dimensional entangled perturbation functions to represent the dynamic variance under adversarial attacks; and iii) Location of Critical State of Adversarial Resilience, i.e. utilize the equilibrium-point theory to locate the graph's critical state of attack resilience resorting to the perturbation-reflected 2D function. Finally, multi-facet experiments on five commonly-used realistic datasets validate the effectiveness of our proposed approach, and the results show our approach can significantly outperform the state-of-the-art baselines under four representative graph adversarial attacks. 
\end{abstract}

%-------------------------------------------------------------------------------
\section{Introduction}
%-------------------------------------------------------------------------------
%Graphs are unstructured data that describe the relationships between entities in the real world, such as social networks \cite{social}, citation networks \cite{cite}, and transaction networks \cite{trade}. The emergence of graph neural networks (GNNs) \cite{GNN1,GNN2nc,GNN3} has significantly enhanced the ability to analyze graph data. Node classification \cite{GNN2nc}, as an important application of GNNs, aims to predict the categories of unlabeled nodes based on the features and structural information of the nodes in the graph. Understanding the information of unknown nodes can be useful in practical scenarios. For example, in recommendation systems \cite{reconmmender}, classifying customers can help recommend suitable products to them.
Graph not only demonstrates the linkage/connectivity relationship between homogeneous and heterogeneous nodes, but also bears the attributes (features) among individuals, such as social networks \cite{social}, paper-citation network \cite{cite}, and trade networks \cite{trade}, etc. Resorting to the graph topology and node features, some typical tasks can be conducted against today's deep-learning models, such as node classification, link prediction, graph alignment, etc. As the technique of deep neural network advances, a set of graph-focused deep learning models (a.k.a. target models) are proposed, including graph neural networks (GNNs) \cite{GNN1, GNN2nc, GNN3}, graph convolution network (GCN) \cite{gcnKipf}, graph attention network (GAT) \cite{GATVelickovic}, etc. 

To date, although these advanced neural networks have greatly enhanced the analytics of graph data, they are vulnerable and become highly susceptible to graph various adversarial attack (GAA) \cite{gaa1, gaa2, gaa3} by launching imperceptible perturbations via edge addition/removal, node injection, or/and feature modification, by which the target models are deceived to output incorrect inference. GAA causes severe consequence in security-critical applications, for instance, in loan-networked systems \cite{risk}, an inauthentic debtor might create multiple transaction records with other authentic users to camouflage himself/herself as a high-credit user \cite{GroupTrustFan, CSURFan, TAEffectLu}, as a result, a high reputation/trust score for this inauthentic debtor can be gained to succeed the illegitimate trade under the analytics of conventional GNNs, so do the malicious participants in other interactive (e.g. eCommerce) systems/networks. %platform, Twprotein-protein interaction network, etc.    

%\par
\begin{sloppypar}
In order to mitigate the adverseness, currently two lines of works have been proposed from the viewpoints of adversarial purification and robustness enhancement on GNNs. The former aims to eliminate the perturbations on edges/nodes, or node features to make the graph attack-resilient. The commonly-used strategy is to preprocess the contaminated graphs, for example, the representative works are GCN-SVD \cite{svd}, GCN-Jaccard \cite{jaccard}, and GNNGuard\cite{gnnguard}. The latter aims to learn robust representation through designing adequate neural (information-propagation) flow. The popular methods involve GAT \cite{gat}, RGCN \cite{rgcn}, HANG \cite{hang}, etc. Our work, aiming at exploring the generalized intrinsic critical state of attack resilience from the perspectives of graph topology and node features, belongs to the former category.  

\end{sloppypar}
\begin{figure*}[htbp]
\centering
\hspace*{0pt} 
\subfigure[Rank Growth] 
{
\begin{minipage}[b]{.28\linewidth}
\centering
\includegraphics[width=1.8in, height=1.3in]{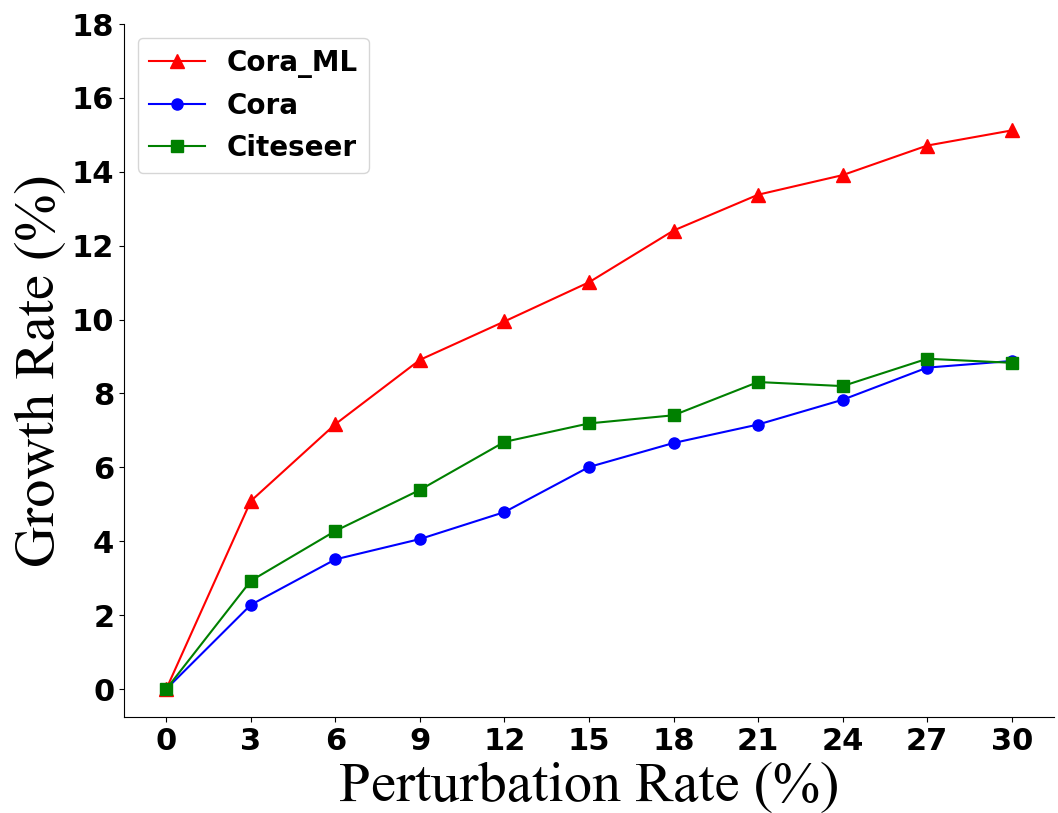}
\end{minipage}
}
\hfill 
\subfigure[Singular Value] 
{
\begin{minipage}[b]{.28\linewidth}
\centering
\includegraphics[width=1.8in, height=1.3in]{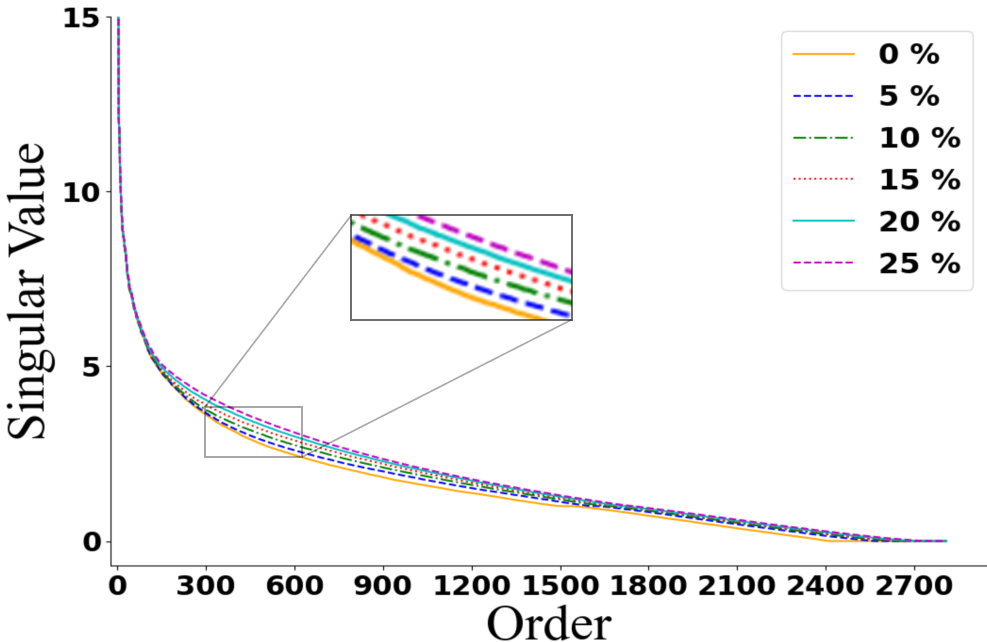}
\end{minipage}
}
\hfill 
\subfigure[Feature Smoothness] 
{
\begin{minipage}[b]{.34\linewidth}
\centering
\includegraphics[width=2.2in, height=1.3in]{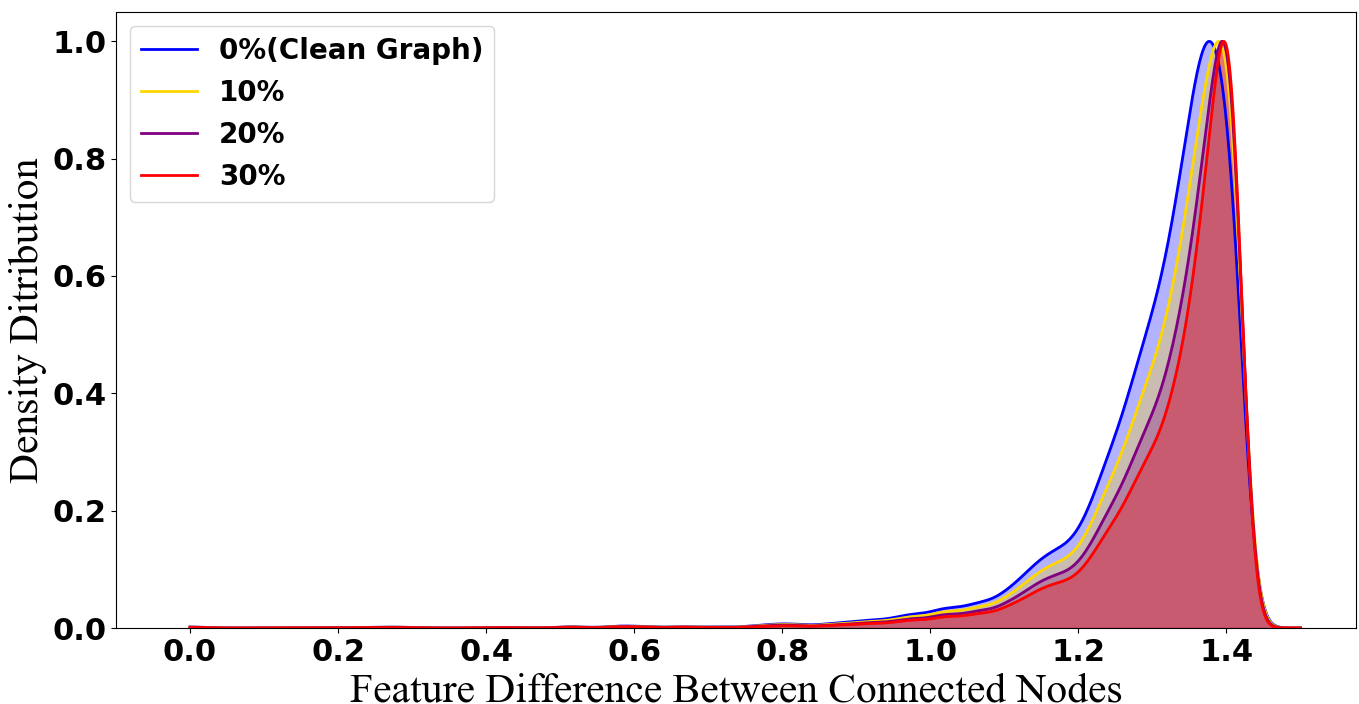}
\end{minipage}
}
\caption{Variations of adjacency-matrix rank and singular value, feature smoothness under attack Metattack}
\label{fig:AttackGraphProperty}
\end{figure*}

\par
\begin{sloppypar}
As the two basic yet pivotal elements, the topology and feature become the disturbed objects in the procedure of GAA. As well known, the topology apparently demonstrates the correlation among nodes, while the feature indicates the semantics (dis)similarity over two individuals. To make the target models learn inaccurate representation, GAA usually builds linkages between structure-/semantic-dissimilar nodes. In the same vein, our work also leverages such two elements to defeat diversities of adversarial attacks from the standpoint of searching and locating the graph's intrinsic critical state of attack resilience. In brief, our work attempts to explore the dual-resilience domains in terms of graph topology and node feature through resorting to the equilibrium-point (stability) theory of CDS, that is, the process of perturbations under GAA can be recognized as the dynamic variances (oscillations) of CDS over time, and the equilibrium point with the asymptotically-stable status is corresponding to the graph's critical attack-resilient state. Pursuant to this idea, we propose an attack-resilient approach TopFeaRe to resist various GAAs. In a nutshell, three main contributions are involved:
\par
%Our contributions can be summarized as follows:
\begin{itemize}
\item 
We exhibit what aspects are influenced by GAA, and theoretically explain why the target models are easily deceived by adversarial perturbations.  
\item 
We design topology-feature-entangled function to model the graph's dynamics in terms of topology and feature under adversarial perturbations, and formally analyze what prerequisites are required to satisfy the existence theory of asymptotically-stable equilibrium point. 
\item We conduct extensive experiments on five real-world datasets under three non-targeted and one targeted popular GAAs to validate the effectiveness of our approach, as well as the rationality of mapping adversarial perturbations into oscillations of CDSs. The experimental results show that our proposed TopFeaRe significantly outperforms the sate-of-the-art baselines. 
\end{itemize}
\end{sloppypar}

\section{Problem Statement and Our Solution}
\subsection{Affections of Adversarial Attack}
We at first investigate what affections the adversarial attack may bring in. Fig. \ref{fig:AttackGraphProperty} exhibits the variational tendencies of adjacency-matrix rank, singular value, and feature smoothness under the representative graph adversarial attack Metattack \cite{r4}, from which we can observe: i) the rank of adjacency matrix goes up apparently as the rate of adversarial perturbation (a.k.a. perturbation rate) enlarges for the three real-world datasets Cora, Cora\_ML, and Citeseer; ii) the singular value of adjacency matrix raises as well, especially in the middle segments; and iii) the density distribution of feature difference over connected nodes has a right-transfer phenomenon after adversarial attacks. Pursuant to the experimental results, we know that the GAA can at least causes the three facets of affections in terms of topology and feature. As we know, the currently-existing defense countermeasures are also proposed in consideration of these affections, such as the popular category of preprocessing methods.  
%The rest is organized as follows. Section 2 \textcolor{red}{xxxx.}

\subsection{Problem Definition}
At present, no matter the preprocessing-based adversarial purification or neural flow-based robustness enhancement, these countermeasures are witnessed to be effective against various GAAs. However, we think the following fundamental questions deserve more inspections and reasonable interpretation.   
%Upon the in-depth analysis on the existing defense methods, we can summarize the following drawbacks of currently existing defense countermeasures as follows:

\textbf{Question I: Why the existing preprocessing-based methods are effective and whether the original characteristics might be mistakenly deleted.} As well known, the preprocessing methods
%, such as GCN-SVD \cite{svd}, GCN-Jaccard \cite{jaccard} and GNNGuard \cite{gnnguard}, 
almost purify the adversarial perturbations through removing a certain ratio of contaminated edges/nodes based on the condition of low-rank of adjacency matrix, or high feature-similarity over connected nodes. However, how many edges/nodes ought to be exactly removed is not known, the existing works just depend on pre-defined ratios or similarity thresholds \cite{svd, jaccard}, that is to say, there does not exist an explicit referral to guide the ratio and threshold, which might mistakenly under/over-delete the original nodes/edges, leading to the loss of valuable feature. Furthermore, even if the mischievously-added edges or injected nodes are removed, why such operations are effective is unknown. 

\textbf{Question II: Whether to achieve the true meaning of jointly-learning of graph topology and node features.} The existing defense methods in general utilize two separate loss functions to train the topology and feature through setting different hyperparameters, which is not a joint-learning in the true sense. For instance, Pro-GNN \cite{JinMa20} defines three loss functions to enhance the robustness, one is to constrain the contribution of the properties of sparsity and low rank from the viewpoint of graph topology, the second is to measure the feature difference over pairwise nodes, and the third is to guide the graph-learning process for specific downstream tasks. Hence, how to realize and guarantee the genuine fusion of jointly-learning regarding topology and feature together, rather than through separate constraints via respective loss functions, is still blank.   

\textbf{Question III: What the relationship between graph data and neural-network learning is.} 
%\textbf{Question III: Whether Neural Network Learning Benefits from Graph Attributes.}  
Currently, the two lines of adversarial defenses develop separately, limited studies are dedicated to the relationship between graph data and neural networks. As we know, various graph-focused neural networks, such as GAT \cite{gat}, RGCN \cite{rgcn}, HANG \cite{hang}, Mid-GCN \cite{HuangJin25}, have manifested the advantage to learn robust embeddings for nodes. However, the more we want to know is, whether the graph data in a proper (equilibrium-point) state can further boost the robustness-learning capacity of such category of neural networks.

%\textcolor{red}{\textbf{Extra Module Brings Expensive Time Overheads.} To date, various graph-focused neural networks, such as GAT\cite{gat}, RGCN\cite{rgcn}, HANG\cite{hang}, Mid-GCN\cite{HuangJin25}, have manifested the advantage to learn robust embeddings for graph nodes, however, besides that a large number of hyparameters need to be fine-tuned in consideration of different combinations, some extra modules also need additional overhead to handle, such as the attention module in GAT \cite{gat}, and the mid-frequency filter module in Mid-GCN\cite{HuangJin25}.} 
%\subsection {Problem Definition} 
Keeping the three fundamental questions in mind, a solution naturally appears, i.e. whether there exists an inherent critical state of attack resilience for each graph, to answer this question, we propose a universal adversarial defense mechanism through exploring such a critical attack-resilient state resorting to the equilibrium-point theory of CDS. In brief, we first map the adversarial perturbations into the oscillation of CDS and measure the adverseness using differential function. Second, we embody topology and feature into two topology-feature-tangled functions to sketch the adversarial perturbations, by which the asymptotically-stable equilibrium point (ASEP) is inferred. Third, referring to such ASEP, we rebuild the contaminated graph to make it attack-resilient. 

\subsection{Solution Overview}
From Fig. \ref{fig:AttackGraphProperty} we know GAA usually causes the increase of adjacency-matrix rank and singular value, in addition to the right-transfer of feature-density distribution. From the viewpoint of defense, a direct yet valid idea is to descend the rank and singular value through removing the mischievously-added edges and injected nodes, and simultaneously make the feature smoothly through disconnecting the feature-dissimilar nodes that are connected maliciously. However, which dissimilar nodes should be disconnected and how many edges need to be removed remain unclear. Towards this doubt, an idea naturally emerges, namely, for each graph, whether there exists a critical attack-resilient state to resist adversarial attacks, if exists, how to find out it. To this end, regarding the topology and feature together, we employ the equilibrium-point theory of CDS to deduce the asymptotically-stable equilibrium point, by which the necessarily-removed edges are inferred to boost the attack resilience of perturbed graphs. 
\begin{center}
\fcolorbox{black}{gray!10}{\parbox{1.0\linewidth}{
\begin{remark}
\textbf{Relationship Between Oscillations and Perturbations.} CDS can converge into an asymptotically stable equilibrium point (ASEP) after continuous oscillations, such ASEP corresponds to CDS’s resilient state that restores to its original state even after long-time oscillations, that is to say, this resilient state implies high robustness. Inspired by this observation, we bridge the connection between oscillations and perturbations from the viewpoint of interpretability.
\end{remark}}}
\end{center}

\section{Adversarial Attack Modeling}
\subsection{Graph-Dynamics Metric}
\begin{sloppypar}
GAA aims to repeatedly modify pivotal edges/nodes, and create connection between feature-dissimilar nodes to fool graph neural networks. Here we refer to such a procedure as adversarial learning, which exactly sketches the dynamic variance of the victim graph under adversarial attacks. Take into consideration the complexity of adversarial perturbations and dependency on concrete topology and feature, it is difficult to give an accurate formula to exactly demonstrate the process of adversarial learning, if not impossible. 
However, the conventional handling is to model the dynamics of graph, as stated in the literature \cite{degree, gao}, one node's dynamics can be influenced by its inner self-dynamics and the outer affections by neighboring nodes. Concretely, a node $i$'s dynamics can generally be presented as a linear weighted average of the $N$ state variables each of which is associated with a node $i$, and the nonlinear influence of other nodes on node $i$. In reality, the study \cite{gao} unveils there indeed exists the nonlinear interactively-affected effect in ecological networks, e.g., the symbiotically connected species networks, such as plants and pollinators or fish and anemone networks. From the standpoint of adversarial perturbations, we also follow this idea of linear and nonlinear perturbations, and provide a universal differential-form definition to sketch the dynamics under continuous adversarial perturbations:

\begin{equation} 
\label{Equ:GeneDyn}
\begin{array}{l}
\frac{d}{{dt}}\mathord{\buildrel{\lower3pt\hbox{$\scriptscriptstyle\rightharpoonup$}} 
\over x} \left( t \right) = \mathbf{A}\mathord{\buildrel{\lower3pt\hbox{$\scriptscriptstyle\rightharpoonup$}} 
\over x} \left( t \right) + \mathbf{B}\left( { - \mathop \phi \limits^ \to  \left( {\mathord{\buildrel{\lower3pt\hbox{$\scriptscriptstyle\rightharpoonup$}} 
\over y} \left( t \right)} \right)} \right) \\ 
\mathord{\buildrel{\lower3pt\hbox{$\scriptscriptstyle\rightharpoonup$}} 
\over y} \left( t \right) = \mathbf{C}\mathord{\buildrel{\lower3pt\hbox{$\scriptscriptstyle\rightharpoonup$}} 
\over x} \left( t \right)\begin{array}{*{20}c}
{} & {} & {}  \\
\end{array} \\ 
\end{array},
\end{equation}
where $\frac{d}{{dt}}\mathord{\buildrel{\lower3pt\hbox{$\scriptscriptstyle\rightharpoonup$}} \over x} \left( t \right)$ is an $N$-dimensional vector indicating the state (connectivity) variance of all $N$ nodes during the adversarial perturbations over time $t$, and $ \mathbf{A}\mathord{\buildrel{\lower3pt\hbox{$\scriptscriptstyle\rightharpoonup$}} \over x} \left( t \right) $ denotes $N$-dimensional vector denoting the affected linear perturbations. Given the continuously-perturbed misbehavior on edges in the process of adversarial learning, variable $\mathord{\buildrel{\lower3pt\hbox{$\scriptscriptstyle\rightharpoonup$}} \over y} \left( t \right)$ is the output vector standing for the states of all nodes afterward the one-round adversarial perturbations. You can imagine from the angle of routine forward/back-propagation principle in deep learning, the output vector $\mathord{\buildrel{\lower3pt\hbox{$\scriptscriptstyle\rightharpoonup$}} \over y} \left( t \right)$ can in return to influence the dynamics of $\vec x\left( t \right) $ in a nonlinear manner, i.e. -$\phi(\mathord{\buildrel{\lower3pt\hbox{$\scriptscriptstyle\rightharpoonup$}} \over y} \left( t \right)) $. Thus, $\frac{d}{{dt}}\mathord{\buildrel{\lower3pt\hbox{$\scriptscriptstyle\rightharpoonup$}} 	\over x} \left( t \right) $ and $\vec x\left( t \right) $ are entangled and interplay iteratively, which implies the embodiment of continuous adversarial perturbations on graph topology and node features. The matrices $\mathbf{A}$, $\mathbf{B}$, $\mathbf{C}$ are the coefficients to demonstrate the detailed dynamic variations of graph regime in the process of adversarial learning.

Accordingly, the accumulated adverseness can be calculated using the integral from starting time $t_0$ to $t>0$, which equivalently corresponds to the aggregation of a period of time continuous perturbations. Therefore, mathematically, for each node $i$, the metric on time-aware dynamic variation under adversarial perturbations can be formulated as:
\begin{equation}
X_{i}\left( t \right) = \int_{t_0 }^t {x_{i}\left( \kappa  \right)d\kappa}. 
\end{equation}

Of course, the accumulated dynamics for each node are brought by both topology perturbation and feature perturbation, and finally leading to the affection on the whole graph. 
%\textcolor{red}{Need Laplace Transform?}
\end{sloppypar} 

\subsection{1D Condensed Perturbation Metric}
\begin{sloppypar}
%As aforementioned, we can define the node-state dynamic variation for each individual as shown in Eq. \eqref{Equ:GeneDyn}. However, for present-day's large-scale networks/graphs, it is impossible to quantify each individual's dynamics under continuous perturbations. To conquer this problem, some dimension-reduction handling is workable to present the unified dynamics for all nodes as a whole, that is to say, the whole graph can be treated as a unity to together consider the dynamics by adversarial attacks. 
As aforementioned in Eq. \eqref{Equ:GeneDyn}, the best way is to build a dimension for each node, which can capture the fine-grained perturbations. Nevertheless, such operation may be difficult resulting from two reasons: i) to date there is no principled guidance on how to establish an ASEP-reserved function, referring to the existing knowledge of CDS, in other words, the candidate oscillation-mapping functions need to be figured out manually; and ii) for high-dimensional function, the analysis on whether it has an ASEP would become extremely sophisticated, due to the correlation among different dimensions. Therefore, the common treatment is to use a dimension-reduction function to approximately express the dynamics of the whole graph. Specifically, the edge perturbation can, on one hand, cause the connected nodes' state variation resulting from the change of learnable semantics; on the other hand, each individual itself may have dynamics, e.g. the attribute update itself. Thereby, referring to the individual's own dynamics and pairwise connectivity-caused influence under adversarial perturbations, the universal dynamic-variation of each individual over time can be expressed as:         
%The evolution of each node's state over time in a graph network described by a dynamical system can be expressed through a set of coupled nonlinear ordinary differential equations\cite{eq1-1,eq1-2}
%\begin{equation}
%    \label{eq:1}
%    \frac {d{x}_{i}} {dt}=f(\textstyle\sum ^{N}_{j=1} {{A}_{ij}},{x}_{i})=F({x}_{i})+\textstyle\sum ^{N}_{j=1} {{A}_{ij}}G({x}_{i},{x}_{j})
%\end{equation}
\begin{equation}
\label{Equ:NodeDyna}
\frac {d{x}_{i}} {dt}=F({x}_{i})+\textstyle\sum ^{N}_{j=1} {\mathcal{A}_{ij}}G({x}_{i},{x}_{j}),
\end{equation}
where $x_i$ is the state variable of node $i$, $F(x_i)$ represents the dynamics of node $i$ itself, and $G(x_i, x_j)$ is the function describing the pairwise connectivity between nodes $i$ and $j$, ${A}_{ij}$ denotes the element of the adjacency matrix, indicating the strength of connectivity (a.k.a. semantic-correlation w.r.t. node feature).  

%For a network with \( N \) nodes, representing the network's resilience from a high-dimensional perspective requires computing Equation \eqref{eq:1} for all \( N \) nodes\cite{realistic}. However, in large-scale networks, computing the state of each node is impractical.  
%Gao et al. developed a method to simplify the analysis of network dynamical systems\cite{gao}, enabling efficient exploration and prediction of changes in the network's resilience over time, and providing insights into the network's dynamic behavior under external disturbances.

Starting from the dimension reduction, we define $\left \langle \hat y_j(x_i) \right \rangle$ to denote the average dynamics of node $i$ caused by every neighboring node $j$. Thus, we have that 
\begin{equation}
\label{Equ:AvgDyn}
\textstyle\sum_{j=1}^{N}\mathcal{A}_{ij}G({x}_{i},{x}_{j})={{s}_{i}}^{in}{\left \langle \hat {y}_{j}({x}_{i})\right \rangle},
\end{equation}  
where ${{s}_{i}}^{in}=\textstyle\sum_{j=1}^{N}\mathcal{A}_{ij}$, and $\left \langle \hat y_j(x_i) \right \rangle$ represents the weighted average of the connectivity strength $ G({x}_{i},{x}_{j})$ over all in-degree neighbors of node $i$.

%In the process of analyzing Equation \eqref{Equ:NodeDyna} using dimensionality reduction methods, firstly consider the scalar \({y}_{i}\) associated with the \(i\)-th node and its weighted in-degree \({{s}_{i}}^{in}\), and denote them as \({y}_{j}({x}_{i})=G({x}_{i},{x}_{j})\) and \({{s}_{i}}^{in}=\textstyle\sum_{j=1}^{N}{A}_{ij}\), respectively.Therefore,we have
%\begin{equation}
%    \label{eq:2}
%    \textstyle\sum_{j=1}^{N}{A}_{ij}G({x}_{i},{x}_{j})={{s}_{i}}^{in}{\left \langle {y}_{j}({x}_{i})\right \rangle}_{j\in i}
%\end{equation}
%in Equation \eqref{eq:2}, \( \left \langle {y}_{j}({x}_{i})\right \rangle=  \textstyle\sum_{j=1}^{N}{A}_{ij}G({x}_{i},{x}_{j}) \slash {{s}_{i}}^{in}\) can be defined as the weighted average of \( G({x}_{i},{x}_{j}) \) over the in-degree neighbors of the \( i \)-th node, with the weights given by \( \textstyle\sum_{j=1}^{N}{A}_{ij} \).
\end{sloppypar}
\par
\begin{sloppypar}
%To simplify the interactions between nodes, the heterogeneous mean field (HMF)\cite{hmf} approximation is used to compute \({\left \langle {y}_{j}({x}_{i})\right \rangle}_{j\in i}\), representing the average behavior of all nodes in the entire network system.
%Assume that, regardless of node \(i\), the probability \(P(j|i)\) of node \(i\) having an in-degree neighbor at node \(j\) is proportional to the out-degree \({{s}_{j}}^{out}\) of node \(j\), i.e., \(P(j|i)\propto {{s}_{j}}^{out}\). To normalize this proportional relationship, let the normalization factor be the sum of the out-degrees of all nodes \(j\), denoted as \(\textstyle\sum_{j=1}^{N}{{s}_{j}}^{out}\). Then, \(P(j|i)\) can be written as: \(P(j|i)={{s}_{j}}^{out}\slash\textstyle\sum_{j=1}^{N}{{s}_{j}}^{out} \). Since \((P(j|i)\) reflects the influence of node \(j\) on node \(i\), it is treated as the weight of \({y}_{j}({x}_{i})\). Through weighted averaging, can obtain

To unify the entangled and sophisticated adversarial perturbations over all pairs of nodes, we employ the classic and commonly-used heterogeneous mean field (HMF) theory \cite{hmf} to approximate $ \left \langle \hat y_j(x_i) \right \rangle$. Normally, the affection probability $P(j|i)$ of node $i$ placed by its neighboring node $j$ is usually proportional to the $j$'s out-degree \({{s}_{j}}^{out}\), i.e. \(P(j|i)\propto {{s}_{j}}^{out}\). To normalize the affection, the regular way is to regard the normalization factor as the sum of all nodes' out-degrees, i.e. $\textstyle\sum_{j=1}^{N}{{s}_{j}}^{out}$. Thus, the affection probability by node $j$ can defined as $P(j|i)={{s}_{j}}^{out}\slash\textstyle\sum_{k=1}^{N}{{s}_{k}}^{out}$. Accordingly, the weighted average dynamics of node $i$ can be formulated as,
\begin{equation}
\begin{aligned}
\label{Equ:HMFDyna}
{\left \langle \hat{y}_{j}({x}_{i})\right \rangle}_{HMF} &= \textstyle\sum_{j=1}^{N}P(j|i)\hat{y}_{j}({x}_{i}) =\frac{\frac{1}{N}\textstyle\sum_{j=1}^{N}{{s}_{j}}^{out}\hat{y}_{j}({x}_{i})}{\frac{1}{N}\textstyle\sum_{j=1}^{N}{{s}_{j}}^{out}}.
\end{aligned}
\end{equation}

As reported in literature \cite{degree}, if the degree correlation between nodes is small, each individual can be considered to draw from the same distribution, and the average behavior of the graph as a whole can be used to approximate the local behavior of each single node. That is to say, for all nodes we have the approximation equation ${\left \langle \hat{y}_{j}({x}_{i})\right \rangle}\approx {\left \langle \hat{y}_{j}({x}_{i})\right \rangle}_{HMF}$. As we know, in the interactive networks, such as Twitter and Facebook, the users (nodes) usually have independent activities, leading to small correlation among different individuals, thus, we think the statement above well-matches the reality and the approximation equation is reasonable. The dataset statistics in Table \ref{tab:1} also states this point, i.e., the degree correlations of the five popular datasets are all near to zero. To simplify Eq. \eqref{Equ:HMFDyna}, we define the mathematical operator $\boldsymbol{\Phi}$ as
\begin{equation}
\label{eq:5}
\boldsymbol{\Phi}(\mathrm{\mathbf{\hat Y}})=\frac{{\mathbf{1}}^{\top}\mathcal{A}\mathrm{\mathbf{\hat y}}}{{\mathbf{1}}^{\top}\mathcal{A}\mathrm{\mathbf{1}}}=\frac{\textstyle\sum_{i=1}^{N}\textstyle\sum_{j=1}^{N}\mathcal{A}_{ij}{\hat y}_{j}}{\textstyle\sum_{i=1}^{N}\textstyle\sum_{j=1}^{N}\mathcal{A}_{ij}},
\end{equation}
where $\mathrm{\mathbf{\hat Y}}$ = ${\left({\hat y}_{1}, {\hat y}_{2}, \cdots, {\hat y}_{N}\right)}^\top$, $\mathbf{1}$ = $\left(1, 1, \cdots, 1\right)^\top$. Hence, Eq. \eqref{Equ:NodeDyna} can be rewritten as
\begin{equation}
\label{eq:6}
\frac{{d{x}_{i}}}{{d}t} \cong F({x}_{i})+{{s}_{i}}^{in}\mathrm{\boldsymbol{\Phi}}(G({x}_{i},\mathrm{\mathbf{x}})),
\end{equation}
where $\mathbf{x}$=$\left(x_{1}, \cdots, x_{N}\right)^\top$, $G(x_{i},\mathbf{x})$=$\left(G(x_{i}, x_{1}), \cdots, G(x_{i}, x_{N})\right)^\top$, and $\mathrm{\boldsymbol{\Phi}}(G({x}_{i},\mathrm{\mathbf{x}}))$ denotes the average edge weight between node $i$ and its neighboring nodes, under HMF theory, it can be further approximated as $G({x}_{i},\boldsymbol{\Phi}(\mathrm{\mathbf{x}}))$. Writing Eq. \eqref{eq:6} in vector form, we have  
\begin{equation}
\label{eq:7}
\frac{{d{\mathrm{\mathbf{x}}}}}{{d}t}= F(\mathrm{\mathbf{x}})+{\mathrm{\mathbf{s}}}^{in}\odot G(\mathrm{\mathbf{x}},\boldsymbol{\Phi}(\mathrm{\mathbf{x}})),
\end{equation}
where $G(\mathbf{x}, \boldsymbol{\Phi}(\mathbf{x}))=(G(x_1, \boldsymbol{\Phi}(\mathbf{x})), \cdots, G(x_N, \boldsymbol{\Phi}(\mathbf{x})))^\top$, $\mathbf{s}^{in}=(s_1^{in}, \cdots, s_N^{in})^\top$, symbol $\odot$ represents the Hadamard product between vectors, i.e. pairwise elements multiply. Then, substituting the linear mathematical operator $\boldsymbol{\Phi}$ into Eq. \eqref{eq:7}, 
\begin{equation}
\begin{aligned}
\label{eq:8}
\frac{d{\boldsymbol{\Phi}}({\mathbf{x}})}{dt} &= {\boldsymbol{\Phi}}(F(\mathbf{x}) + {\mathbf{s}}^{in} \odot G(\mathbf{x}, \boldsymbol{\Phi}(\mathbf{x}))) \\
&= {\boldsymbol{\Phi}}(F(\mathbf{x})) + \boldsymbol{\Phi}({\mathbf{s}}^{in} \odot G(\mathbf{x}, \boldsymbol{\Phi}(\mathbf{x}))) \\
&\cong F(\Phi(\mathbf{x})) + \boldsymbol{\Phi}({\mathbf{s}}^{in}) G(\boldsymbol{\Phi}(\mathbf{x}), \boldsymbol{\Phi}(\mathbf{x}))
\end{aligned}.
\end{equation}
Pursuant to HMF theory \cite{hmf}, we can obtain $\boldsymbol{\Phi}(F(\mathbf{x})) \cong F(\boldsymbol{\Phi}(\mathbf{x}))$ and $\boldsymbol{\Phi}({\mathbf{s}}^{in} \odot G(\mathrm{\mathbf{x}},\boldsymbol{\Phi}(\mathrm{\mathbf{x}}))) \cong \boldsymbol{\Phi}({\mathbf{s}}^{in})G(\boldsymbol{\Phi}(\mathrm{\mathbf{x}}),\boldsymbol{\Phi}(\mathrm{\mathbf{x}}))$. 

%节点自身状态的改变均受到其邻近节点的影响，进而导致整个系统状态的改变，所以这种平均邻近节点活动可以来刻画系统整体的平均有效状态
As previously analyzed, an individual's perturbation could render the variations of its neighbors' states, further giving rise to state-change of the whole graph from the global viewpoint. In view of this, the weighted average state of all nodes can be used to sketch the whole graph's state $x_{Gra}$ in a unity way, 
\begin{equation}
\label{eq:9}
{x}_{Gra}=\frac{\left \langle{s}^{out}x \right \rangle}{\left \langle{s}^{out} \right \rangle}.
\end{equation}
%it represents the weighted average of the states \( x_i \) of all nodes, with the weights being \( {s_i}^{out} \). 
We define a node's state as its degree centrality, stemming from the intuition that the basic yet pivotal degree property plays an important role for the robustness, the subsequent experiments also verify this point. On the other hand, the in-degree of a node can be straightly leveraged to execute perturbation, thus, we take it into account and define a control parameter for the topology-feature-entangled perturbations as
\begin{equation}
\label{eq:10}
{\beta}_{Gra}=\frac{\left \langle{s}^{out}{s}^{in} \right \rangle}{\left \langle{s}^{out} \right \rangle}.
\end{equation}

Substituting $\boldsymbol{\Phi}(\mathbf{x})$=$x_{Gra}$ and $\boldsymbol{\Phi}({\mathbf{s}}^{in})$=$\beta_{Gra}$ into Eq. \eqref{eq:8}, a one-dimensional state-variation function can be inferred from the viewpoint of whole graph, 
\begin{equation}
\label{eq:11}
\frac {dx_{Gra}} {dt}=F(x_{Gra})+{\beta }_{Gra}G(x_{Gra}, x_{Gra}).
\end{equation}

The equation above is a coupled nonlinear ordinary differential equation, $F(x_{Gra})$ and $G(x_{Gra}, x_{Gra})$ respectively reflect the linear and nonlinear properties, here we leverage the 1D form of Michaelis–Menten Equation \cite{alon} to represent the nonlinearity, and introduce a constant to present the linearity,
\begin{equation}
\label{eq:12}
\frac {d{x}_{Gra}} {dt}=\mathcal{B}{x}_{Gra}+{\beta}_{Gra}\frac{\left({{x}_{Gra}}\right)^{h}}{1+ \left({{x}_{Gra}}\right)^{h}}, 
\end{equation}
where $\mathcal{B}$ and $h$ are two adaptive parameters to represent the node-state dynamics under continuous perturbations for different graph regimes. Of course, Michaelis–Menten Equation is just one selection, other functions that can appropriately describe the dynamics of graph perturbations are applied as well, this is because there does not exist guiding principle on how to establish such an dynamics-mapping function. For clarity, we treat the condensed $x_{Gra}$ and $\beta_{Gra}$ as the dynamic variable and crucial parameter. That is to say, the affection on node-state dynamics by perturbation is subject to $\beta_{Gra}$, and for an anticipated critical attack-resilient state $x_0 \in (0,+\infty)$, there exists an instantaneous state $\beta_{Gra}$ corresponding to $x_{Gra}=x_0$. Easy to understand that, the higher the value of $x_{Gra}$, the better the graph's attack resilience (robustness). Extremely, if $x_{Gra} \to 0$, it means the entire graph is destroyed completely.
\end{sloppypar}

%\begin{sloppypar}
%Let \( \frac {d{x}_{0}} {dt} = 0\), i.e., \( f(\beta,x_0) = 0 \), and this is called the steady-state solution of the system. We can use the steady-state solution to derive a function\(\beta_{eff}(x_{eff})\) of \( \beta_{eff} \) in terms of \( x_{eff} \)
%\begin{equation}
%    \label{eq:13}
%    {\beta }_{eff}({{x}_{eff}})=-\frac{B(1+{{x}_{eff}}^{h})}{{{x}_{eff}}^{h-1}}
%\end{equation}
%by swapping the \( x \)- and \( y \)-axes, we can obtain the function \( x_{eff}(\beta_{eff}) \) of \( x_{eff} \) in terms of \( \beta_{eff} \). We will use Equation \eqref{eq:12} to explore the dynamic behavior of the graph network under adversarial graph attacks. by swapping the \( x \)- and \( y \)-axes, we can obtain the function \( x_{eff}(\beta_{eff}) \) of \( x_{eff} \) in terms of \( \beta_{eff} \). The one-dimensional simplified equation applied to graph data is shown in the figure. We will use equation (12) to explore the dynamic behavior of the graph network under adversarial graph attacks.
%\end{sloppypar}

\subsection{2D Entangled Perturbation Metric}
%Applying one-dimensional simplification methods to graph datasets may not be suitable, because models like graph neural networks (GNNs) learn node representations by considering both the connectivity and feature information of the nodes to make predictions. On the other hand, Graph Adversarial Attacks (GAAs) design attack strategies based on the model's learning characteristics, with the core goal of disrupting these two types of relationships between nodes. As shown in Figure \ref{fig:AttackGraphProperty}, when the graph is subjected to adversarial perturbations, the feature differences between nodes increase, and the rank of the adjacency matrix gradually rises. In this case, we believe that the graph's resilience will decrease as the perturbation rate increases.
%\par
%To more accurately represent the impact of perturbations on the graph, we extend the one-dimensional simplification method to a two-dimensional approach for dimensionality reduction of the graph network. In this case, two nonlinear ordinary differential equations are used as our simplification model, leading to two dynamic variables and two independent critical parameters\cite{2D}. According to Equation \eqref{eq:1}, the dynamic behavior of each node is extended to the following form
%图神经网络的学习过程本质上是基于节点的结构和特征来进行信息传播与更新的。因此，节点的结构状态 r和特征状态 q在动态系统中相互耦合。在对抗攻击中，攻击者试图通过扰动图的结构或节点特征来破坏模型的学习效果。扰动可能通过改变图的邻接关系（结构扰动）或改变节点之间的特征关系来实现。公式中的γr和γq是在结构与特征上节点之间连接关系强度有关的关键参数，合理反映了扰动对节点之间结构和特征关系的影响，例如，γr和γq的减小可以模拟对抗攻击对图的破坏程度。当这些参数增大时，可能意味着节点之间的连接性或特征差异更强，从而使得扰动对图的影响更大，这与对抗攻击中的破坏效果是一致的。
\begin{sloppypar}
In essence, both topology and feature play important roles for the graph representation for GNNS. The more sophisticated and powerful adversarial attacks also deceive target models through disturbing the two radical attributes, as shown in Fig. \ref{fig:AttackGraphProperty}, not only do the adjacency-matrix's rank and singular value ascend after attack, but also the distribution of feature difference has a transfer. That is to say, graph topology and node feature are entangled, this is because the edge-connection is intentionally created among feature-dissimilar nodes with the purpose of misleading the node's semantic learning, implying the distinction of feature also influences the edge perturbation during adversarial attack; on the other hand, the topology relying on the edges also disrupts the feature learning. Thus, we ought to redefine the two-factor function into $G(r_i,r_j)$=$H(r_i,r_j,q_{j})$, where $r_i$ and $q_i$ can be respectively deemed as graph-topology variable and node-feature variable, $H(r_i,r_j,q_{j})$ indicates the entanglement of topology and feature, i.e. the perturbation on node $i$ can potentially affect the variations of edge-connection to node $j$ and the feature-correlation between node $j$ while executing adversarial perturbation. Therefore, we measure the topology-feature-entangled perturbations, and extend the predefined topology-specific dynamics function into two nonlinear ordinary differential functions, 
\begin{equation}
\begin{aligned}
\label{eq:14}
\frac {d{r}_{i}} {dt} &= F({r}_{i}) + \sum ^{N}_{j=1} \mathcal{A}_{ij} H({r}_{i},{r}_{j},{q}_{j}) \\
\frac {d{q}_{i}} {dt} &= F({q}_{i}) + \sum ^{N}_{j=1} \mathcal{A}_{ij} H({q}_{i},{q}_{j},{r}_{j})
\end{aligned},
\end{equation}
where $r_i$ and $q_i$ are used to represent the graph-topology dynamics and node-feature dynamics of node $i$. Pursuant to Eq. \eqref{eq:12}, we can infer the following formulas:
%more detailed information is expressed as follows \cite{eq15-1,eq15-2}
\begin{equation}
\begin{aligned}
\label{eq:15}
\frac {d{r}_{i}} {dt} &= -m_i {r}_{ij}+\frac{\textstyle\sum ^{N}_{j=1}{{\gamma }_{ij}}^{r}{q}_{j}}{1+\textstyle\sum ^{N}_{j=1}{{\gamma }_{ij}}^{r}{q}_{j}}\\
\frac {d{q}_{i}} {dt} &= -c_i {q}_{ij}+\frac{\textstyle\sum ^{N}_{j=1}{{\gamma }_{ij}}^{q}{r}_{j}}{1+\textstyle\sum ^{N}_{j=1}{{\gamma }_{ij}}^{q}{r}_{j}}
\end{aligned},
\end{equation}
where $m_i$ and $c_i$ represent the degree centrality of node $i$ and the sum of the feature distinctions between node $i$ and its neighboring nodes respectively, ${\gamma_{ij}}^{r}$ and ${\gamma_{ij}}^{q}$ are two crucial parameters used to measure the connectivity (correlation) strength between nodes $i$ and $j$, $\gamma_{ij}=0$ indicates no connection. Clearly, $\gamma_{ij}$ is determined by both node degree and feature discrepancy. Naturally, they can be defined as
\begin{equation}
\begin{aligned}
\label{eq:16}
{{\gamma }_{ij}}^{r} &= {\varepsilon }_{ij}\frac{1}{\mathcal{K}_{i}}\\
{{\gamma }_{ij}}^{q} &= {\varepsilon }_{ij}\frac{1}{\mathcal{F}_{i}}\\
\end{aligned},    
\end{equation}
where $\varepsilon_{ij}$=$1$ indicates node $i$ is connected to node $j$, otherwise $\varepsilon_{ij}$ = $0$. $\mathcal{K}_i$ and $\mathcal{F}_i$ respectively represent the number of neighbors of node $i$ and the sum of feature distinctions between node $i$ and its neighbors. 

Referring to the weighted average state (Eq. \eqref{eq:9}), we set $q_{Gra}=\frac{\left \langle{\mathcal{F}} x \right \rangle}{\left \langle{\mathcal{F}} \right \rangle}$. The difference between $q_{Gra}$ and $r_{Gra}$ (equal to $x_{Gra}$) lies in that the dynamic state of each node $i$ is weighted by $\mathcal{F}_{i}$. Then, considering the entire graph, we have the following two condensed entanglement formulas:
\begin{equation}
\begin{aligned}
\label{eq:17}
\sum_{i=1}^{N} \sum_{j=1}^{N} \gamma_{ij}^{r} q_{j} &\cong \langle \gamma_{r} \rangle q_{Gra} \\
\sum_{i=1}^{N} \sum_{j=1}^{N} \gamma_{ij}^{q} r_{j} &\cong \langle \gamma_{q} \rangle r_{Gra}
\end{aligned}.
\end{equation}

To make the two crucial parameters accurately reflect the graph's perturbations, similarly, we leverage the weighted average manner to process $\gamma_{ij}^{r}$ and $\gamma_{ij}^{q}$,
\begin{equation}
\begin{aligned}
\label{eq:18}
\left \langle {\gamma }_{r}\right \rangle&=\frac{\textstyle\sum_{i=1}^{N}\textstyle\sum_{j=1}^{N}{{\gamma }_{ij}}^{r}\times {\mathcal{F}_{i}}}{\textstyle\sum_{i=1}^{N}{\mathcal{F}_{i}}} \\
\left \langle {\gamma }_{q}\right \rangle&=\frac{\textstyle\sum_{i=1}^{N}\textstyle\sum_{j=1}^{N}{{\gamma }_{ij}}^{q}\times {\mathcal{K}_i}}{\textstyle\sum_{i=1}^{N}{\mathcal{K}_i}}
\end{aligned},
\end{equation}
where $\left \langle \gamma_{r}\right \rangle$ and $\left \langle \gamma_{q}\right \rangle$ are obtained by computing the feature distinction-weighted average and the neighbor-count weighted average. In the same way, from the standpoint of dimension reduction, the perturbations in the unified form are reflected by $\left \langle \gamma_{r}\right \rangle$ and $\left \langle \gamma_{q}\right \rangle$. Finally, by calculating the average of the sum of degree centralities of all nodes and the average of the sum of feature distinctions of all nodes to their respective neighboring individuals, the average degree centrality $\mathcal{M}$ and average feature distinction $\mathcal{C}$ are inferred, the node-state dynamic-variation Eq. \eqref{eq:15} can be rewritten as
\begin{equation}
\begin{aligned}
\label{eq:19}
\frac {d{r}_{Gra}} {dt} &= -\mathcal{M}{r}_{Gra}+\frac{\left \langle {\gamma }_{r}\right \rangle \left({{q}_{Gra}}\right)^h}{1+\left \langle {\gamma }_{r}\right \rangle \left({{q}_{Gra}}\right)^h}\\
\frac {d{q}_{Gra}} {dt} &= -\mathcal{C}{q}_{Gra}+\frac{\left \langle {\gamma }_{q}\right \rangle \left({{r}_{Gra}}\right)^h}{1+\left \langle {\gamma }_{q}\right \rangle \left({{r}_{Gra}}\right)^h}
\end{aligned}.
\end{equation} 

\begin{center}
\fcolorbox{black}{gray!10}{\parbox{1.0\linewidth}{
\begin{remark}
Our method regards structure topology and node features, and searches for the ASEP for both of them. Eq. \eqref{eq:14} presents the topology and feature dynamics under adversarial perturbations. The subsequent Theorem \ref{Theorem: CondnsFormlaAsymStab} proves the asymptotic stability of two-dimensional dynamic-variation mapping function (Eq. \eqref{eq:19}) deduced from Eq. \eqref{eq:14}.
\end{remark}
}}
\end{center}
\end{sloppypar}  

\subsection{Generalizability}
Although our work provides a generalized theoretical framework for multi-dimensional functions, however, how to find a proper dynamic and ASEP-reserved function is still hard today. Thus, TopFeaRe utilizes HME theory to condense adversarial perturbations of whole graph as a 1D function. For graph data, HMF theory enables to leverage the basic in/out-degree of nodes to approximate the whole graph’s information into 1D condensed expression, thus for whatever type of graphs, as long as there exist nodes and edges, HMF can always be applied.
Furthermore, although Michaelis-Menten equation is used to design ASEP surface, we can alternatively choose another equation with the only requirement of satisfying the existence of ASEP. Thus, we think our theoretical assumptions can applied into diverse real-world graphs.

\section{Theoretical Analytics}
\subsection{Critical Attack-Resilient State Prerequisite}
During adversarial attack, each node would undergoes continuous dynamics, and it is hard to calibrate the node's concrete and instantaneous variation state in the course of continuous perturbations epoch by epoch. Hence, one viable solution is to map the whole graph's dynamic-variation process into a multi-object CDS, in this way, we can analyze the node-state dynamics from a global horizon. Moreover, for CDSs one of most important properties is the stability (a.k.a. equilibrium point). As stated in \cite{XinxinFan21}, a dynamic system has three statuses: non-stability, stability, and asymptotic stability, wherein the former implies the system departs from the original status and evolves into a chaotic state; the middle represents the system changes in a bounded scope; the latter indicates the system departs from the original status, and finally converges at an equilibrium point no matter how long it may experience. 

We think the three statuses of CDS can well-match the dynamics of graph under the continuous adversarial perturbations. Starting from adversarial defense, the more stable the dynamic system is, the more resilient it would be against adversarial attacks, thus we want to go a step further and anticipate the graph undergoing whatever perturbations can finally evolve into the asymptotically-stable state, corresponding to the ASEP. Toward this end, we introduce the following theorem to warrant the existence of asymptotic stability. 
\begin{sloppypar}
\begin{theorem} 
\label{Theorem: AsymptoticStability}
\textbf{$\text{(}$The Existence of Asymptotic Stability State of Graph \cite{XinxinFan21}$\text{)}$} 
The dynamic-variance Eq. \eqref{Equ:GeneDyn} has an asymptotically-stable equilibrium point if matrix $ \mathbf{A} $ is Hurwitz, the output $\mathord{\buildrel{\lower3pt\hbox{$\scriptscriptstyle\rightharpoonup$}}\over y} \left( t \right)$ and its associated nonlinear-mapping input $\phi \left( {\mathord{\buildrel{\lower3pt\hbox{$\scriptscriptstyle\rightharpoonup$}} \over y} \left( t \right)} \right) $ satisfy the condition that $ \mathord{\buildrel{\lower3pt\hbox{$\scriptscriptstyle\rightharpoonup$}} 
\over \phi } \left( {\mathord{\buildrel{\lower3pt\hbox{$\scriptscriptstyle\rightharpoonup$}} 
\over y} \left( t \right)} \right)^T  \cdot \left[ {\mathord{\buildrel{\lower3pt\hbox{$\scriptscriptstyle\rightharpoonup$}} 
\over y} \left( t \right) - M\mathord{\buildrel{\lower3pt\hbox{$\scriptscriptstyle\rightharpoonup$}} 
\over \phi } \left( {\mathord{\buildrel{\lower3pt\hbox{$\scriptscriptstyle\rightharpoonup$}} 
\over y} \left( t \right)} \right)} \right] > 0 $, where matrix $ M = diag\left( {{\raise0.7ex\hbox{$1$} \!\mathord{\left/{\vphantom {1 {k_1 }}}\right.\kern-\nulldelimiterspace}	\!\lower0.7ex\hbox{${k_1 }$}}, \cdots ,{\raise0.7ex\hbox{$1$} \!\mathord{\left/ {\vphantom {1 {k_p }}}\right.\kern-\nulldelimiterspace}
\!\lower0.7ex\hbox{${k_p }$}}} \right)$, $k_i$$>$0. Furthermore, given $ \psi  = diag\left( {\chi  _1 , \cdots ,\chi  _p } \right) $, there exists constant $\chi _i  \ge 0 $, such that $ (1 + \lambda _k \chi  _i ) \ne 0 $ for each eigenvalue $ \lambda _k $ of matrix $ \mathbf{A} $, 
%and meanwhile $ M + \left( {I + s\psi } \right)G\left( s \right) $ is strictly positive real.
and meanwhile, $ M + \left( {I + s\psi } \right) \mathbf{C}\left( {sI - \mathbf{A}} \right)^{ - 1} \mathbf{B} $ is strictly positive real. The number of diagonal elements in matrices $M$ and $\psi$, as well as the number of eigenvalues of matrix $\mathbf{A}$, depends on the dimension of graph.
\end{theorem}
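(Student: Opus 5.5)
This is the multivariable Popov criterion for a Lur'e system, so I will follow the classical Popov--Lyapunov route: Kalman--Yakubovich--Popov (KYP) lemma, then a Lur'e--Postnikov Lyapunov function.

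\emph{Setup.} Write Eq. \eqref{Equ:GeneDyn} as the Lur'e system $\dot{\vec x}=\mathbf{A}\vec x-\mathbf{B}\vec\phi(\vec y)$, $\vec y=\mathbf{C}\vec x$, with transfer matrix $G(s)=\mathbf{C}(sI-\mathbf{A})^{-1}\mathbf{B}$. I will assume, as the sector condition implicitly does, that $\vec\phi$ is decentralized, $\phi_i=\phi_i(y_i)$, and that $\vec\phi(0)=0$. Then $\vec x=0$ is an equilibrium.

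The condition $\vec\phi^{T}[\vec y-M\vec\phi]>0$, componentwise, says that each $\phi_i$ lies in the sector $[0,k_i]$, so $y_i\phi_i(y_i)\ge 0$. The first step is to introduce the Lur'e--Postnikov candidate
\[
V(\vec x)=\vec x^{T}P\vec x+2\sum_{i=1}^{p}\chi_i\int_0^{y_i}\phi_i(\sigma)\,d\sigma ,
\]
with $P=P^{T}>0$ still to be chosen. The sector property makes each integral term nonnegative, so $V$ is positive definite.

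\emph{Derivative of $V$.} Differentiating along trajectories gives
\[
\dot V=\vec x^{T}(\mathbf{A}^{T}P+P\mathbf{A})\vec x-2\vec x^{T}P\mathbf{B}\vec\phi+2\vec\phi^{T}\psi\mathbf{C}(\mathbf{A}\vec x-\mathbf{B}\vec\phi).
\]
I add the nonnegative quantity $2\vec\phi^{T}(\vec y-M\vec\phi)$, so that $\dot V\le\dot V+2\vec\phi^{T}(\vec y-M\vec\phi)$. The right-hand side is a quadratic form in $(\vec x,\vec\phi)$.

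\emph{Choice of $P$ via KYP.} The key step is to pick $P$ so that this quadratic form is negative definite. Note that
\[
M+(I+s\psi)G(s)=M+\psi\mathbf{C}\mathbf{B}+(\mathbf{C}+\psi\mathbf{C}\mathbf{A})(sI-\mathbf{A})^{-1}\mathbf{B},
\]
using $s(sI-\mathbf{A})^{-1}=I+\mathbf{A}(sI-\mathbf{A})^{-1}$. This is a realization $(\mathbf{A},\mathbf{B},\mathbf{C}+\psi\mathbf{C}\mathbf{A},M+\psi\mathbf{C}\mathbf{B})$. Since $\mathbf{A}$ is Hurwitz and this transfer matrix is strictly positive real by hypothesis, the KYP lemma supplies $P>0$, $L$, $W$ and $\varepsilon>0$ such that
\begin{align*}
P\mathbf{A}+\mathbf{A}^{T}P&=-L^{T}L-\varepsilon P,\\
P\mathbf{B}&=(\mathbf{C}+\psi\mathbf{C}\mathbf{A})^{T}-L^{T}W,\\
W^{T}W&=2M+\psi\mathbf{C}\mathbf{B}+(\psi\mathbf{C}\mathbf{B})^{T}.
\end{align*}
Substituting these identities, the bound completes a square:
\[
\dot V\le-\varepsilon\vec x^{T}P\vec x-\|L\vec x-W\vec\phi\|^{2}<0\quad\text{for }\vec x\neq 0.
\]
The hypothesis $1+\lambda_k\chi_i\neq0$ is the nondegeneracy condition that keeps the realization minimal, so that no pole--zero cancellation occurs. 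I will use it, with the Hurwitz assumption, to justify applying KYP to a possibly non-minimal realization.

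\emph{Conclusion and main obstacle.} Lyapunov's theorem then gives asymptotic stability of the equilibrium, and global asymptotic stability if the sector condition holds on all of $\mathbb{R}^{p}$, because $V$ is radially unbounded. I expect the KYP step to be the main obstacle: it requires checking minimality (controllability and observability) or invoking the non-minimal version of the lemma, and handling the feedthrough term $M+\psi\mathbf{C}\mathbf{B}$ carefully. If that fails, I would cite Khalil's multivariable Popov theorem, which is the result that \cite{XinxinFan21} transfers, and verify its hypotheses one by one.
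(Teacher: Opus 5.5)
The paper gives no proof of this statement; it imports it by citation, and it is the multivariable Popov criterion. Your argument is the standard proof of that criterion, and the algebra you carry out is correct. With $P\mathbf{B}=(\mathbf{C}+\psi\mathbf{C}\mathbf{A})^{T}-L^{T}W$, the cross terms $\vec\phi^{T}\mathbf{C}\vec x$ and $\vec\phi^{T}\psi\mathbf{C}\mathbf{A}\vec x$ cancel, and the terms quadratic in $\vec\phi$ combine into $-\vec\phi^{T}W^{T}W\vec\phi$. The assumptions you add are repairs the statement genuinely needs: decentralized $\vec\phi$, $\vec\phi(0)=0$, and reading the sector inequality as $\ge 0$, since the strict version fails at $\vec y=0$.

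The gap is the KYP step. The standard lemma in the form you quote (Khalil's: SPR transfer matrix, conclusion $P>0$ with the $-\varepsilon P$ term) assumes a minimal realization. That means $(\mathbf{A},\mathbf{B})$ controllable and $(\mathbf{A},\mathbf{C}+\psi\mathbf{C}\mathbf{A})$ observable, and neither the Hurwitz property nor $1+\lambda_k\chi_i\neq0$ supplies it.

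Here is what the eigenvalue condition actually does. If $\mathbf{A}v=\lambda_k v$, then $(\mathbf{C}+\psi\mathbf{C}\mathbf{A})v=(I+\lambda_k\psi)\mathbf{C}v$, and $I+\lambda_k\psi$ is invertible. So by the PBH test the condition \emph{transfers} observability from $(\mathbf{A},\mathbf{C})$ to $(\mathbf{A},\mathbf{C}+\psi\mathbf{C}\mathbf{A})$: it stops a zero $s=-1/\chi_i$ of the Popov multiplier from cancelling a pole of $G$. It presupposes $(\mathbf{A},\mathbf{C})$ observable and says nothing about controllability. It cannot justify KYP on a non-minimal realization, so that sentence of your plan has the logic backwards.

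Citing Khalil's theorem does not rescue the step either. Its hypotheses include $(\mathbf{A},\mathbf{B})$ controllable and $(\mathbf{A},\mathbf{C})$ observable, and the statement as transcribed here drops both, so there is nothing to verify them from.

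You can close the proof in one of two ways.
\begin{itemize}
\item Add those two hypotheses explicitly. Your argument then goes through verbatim, with the eigenvalue condition used exactly as above.
\item Use the strict KYP lemma (Rantzer), which needs no controllability or observability. Set $\tilde{\mathbf{C}}=\mathbf{C}+\psi\mathbf{C}\mathbf{A}$ and $\tilde D=M+\psi\mathbf{C}\mathbf{B}$, and suppose $\tilde D+\tilde D^{T}>0$. Then the frequency inequality is strict on $\mathbb{R}\cup\{\infty\}$, which gives a symmetric $P$ with
\[
\begin{bmatrix}\mathbf{A}^{T}P+P\mathbf{A} & P\mathbf{B}-\tilde{\mathbf{C}}^{T}\\ \mathbf{B}^{T}P-\tilde{\mathbf{C}} & -(\tilde D+\tilde D^{T})\end{bmatrix}<0 .
\]
Since $\mathbf{A}$ is Hurwitz, this forces $P>0$. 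Evaluating the form at $(\vec x,-\vec\phi)$ gives exactly $\dot V+2\vec\phi^{T}(\vec y-M\vec\phi)\le-\delta(\|\vec x\|^{2}+\|\vec\phi\|^{2})$.
\end{itemize}
If $\tilde D+\tilde D^{T}$ is only semidefinite, which SPR allows, the strict lemma does not apply and you are back to needing minimality.
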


\begin{center}
\fcolorbox{black}{gray!10}{\parbox{1.0\linewidth}{
\begin{remark}
\textbf{Theorem \ref{Theorem: AsymptoticStability}} demonstrates the generalized guarantee for the graph's asymptotic stability. Towards the condensed topology-feature-entangled dynamics formula, the prerequisite that the perturbed graph can finally converges into an ASEP is figured out, i.e. in what perturbation domain such ASEP can be found out. 
\end{remark}
}}
\end{center}

Next, we infer the following theorem to present the boundary of perturbation domain in which the ASEP can be located.
\begin{theorem}
\label{Theorem: PerturbDomain}
\textbf{$\text{(}$Boundary of Perturbation-Domain$\text{)}$} To achieve the asymptotic stability equilibrium state of perturbed graph, the boundary of adversarial perturbations must satisfy the inequality $\frac{\left({{x}_{Gra}}\right)^{h-1}}{1+ \left({{x}_{Gra}}\right)^{h}} \leqslant k$.
\end{theorem}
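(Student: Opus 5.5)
The plan is to cast the 1D condensed dynamics Eq. \eqref{eq:12} in the Lur'e form of Eq. \eqref{Equ:GeneDyn} and then read the perturbation boundary off the sector condition of Theorem \ref{Theorem: AsymptoticStability}. In the scalar case I take $\mathbf{A}=\mathcal{B}$, $\mathbf{C}=1$, $y(t)=x_{Gra}(t)$, and $\mathbf{B}=-\beta_{Gra}$. The nonlinearity is then
\[
\phi(y)=\frac{y^{h}}{1+y^{h}},
\]
so that $\mathbf{B}(-\phi(y))$ reproduces $\beta_{Gra}x_{Gra}^h/(1+x_{Gra}^h)$. For $\mathbf{A}$ to be Hurwitz we need $\mathcal{B}<0$. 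This is consistent with the $-\mathcal{M}r_{Gra}$ and $-\mathcal{C}q_{Gra}$ decay terms in Eq. \eqref{eq:19}, and I would state it as a standing assumption on the adaptive parameter.

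The key step is the sector condition. With $M=1/k$ (scalar, $p=1$), Theorem \ref{Theorem: AsymptoticStability} requires
\[
\phi(y)\,\bigl[y-\tfrac{1}{k}\phi(y)\bigr]>0 .
\]
On the admissible state range $x_{Gra}\in(0,+\infty)$ we have $\phi(y)>0$, so the condition reduces to $\phi(y)<k\,y$. Dividing by $y>0$ gives
\[
\frac{x_{Gra}^{h-1}}{1+x_{Gra}^{h}}<k,
\]
which is the claimed boundary. The non-strict form $\leqslant$ in the statement I would justify as the closure of this region, i.e.\ the boundary of the admissible perturbation domain. I would also remark that the incremental form $\phi(y)/y\le k$ is exactly the standard Popov sector $[0,k]$.

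It then remains to check that the frequency-domain hypotheses are satisfiable once the sector condition holds. Here $\mathbf{C}(sI-\mathbf{A})^{-1}\mathbf{B}=-\beta_{Gra}/(s-\mathcal{B})$. I choose $\chi\ge 0$ with $1+\mathcal{B}\chi\neq 0$ and then verify that
\[
\tfrac{1}{k}+(1+s\chi)\,\frac{-\beta_{Gra}}{s-\mathcal{B}}
\]
is strictly positive real, by computing the real part on $s=j\omega$. The sign of $\beta_{Gra}$ must be handled: the sector for the Popov criterion should be posed with $\mathbf{B}$ chosen to absorb it, e.g.\ by taking $\mathbf{B}=\beta_{Gra}$ and writing the input as $+\phi$.

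I expect this last sign and strict-positive-realness bookkeeping to be the main obstacle. To handle it, I would first try $\chi=0$ (the circle-criterion case), which gives the condition $1/k>\beta_{Gra}/(-\mathcal{B})$ up to sign. If that fails, I would increase $\chi$ using the Popov-line argument, so that $k$ can be taken as large as the Hurwitz margin allows. This would show that the only graph-dependent constraint left is the inequality in Theorem \ref{Theorem: PerturbDomain}.
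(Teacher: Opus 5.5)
Your core argument is the paper's proof in Appendix~\ref{Prf_PerturbDomain}: read off $y=x_{Gra}$ and $\phi(y)=x_{Gra}^{h}/(1+x_{Gra}^{h})$ from Eq.~\eqref{eq:12}, take the scalar $M=1/k$, use $\phi>0$ to reduce the sector condition of Theorem~\ref{Theorem: AsymptoticStability} to $\phi(y)/y\le k$. The paper stops there. It states the sector condition in the non-strict form $\phi^{T}[y-M\phi]\geqslant 0$, so it needs no closure argument. Your last two paragraphs are not needed for the statement, and as sketched their conclusion is wrong, so drop or correct them.

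With $\mathcal{B}<0$, $\mathbf{B}=-\beta_{Gra}$ and $\beta_{Gra}>0$, on $s=j\omega$ you get
\[
\mathrm{Re}\Bigl[\tfrac{1}{k}+(1+j\omega\chi)\tfrac{-\beta_{Gra}}{j\omega-\mathcal{B}}\Bigr]=\tfrac{1}{k}-\beta_{Gra}\tfrac{-\mathcal{B}+\chi\omega^{2}}{\mathcal{B}^{2}+\omega^{2}} .
\]
At $\omega=0$ this value does not depend on $\chi$. The supremum of the fraction over $\omega$ is $\max\{-1/\mathcal{B},\chi\}$. So strict positive realness forces $k<-\mathcal{B}/\beta_{Gra}$, and taking $\chi>0$ can only shrink the admissible range of $k$, never enlarge it. The Popov line therefore cannot rescue a failed circle test here.

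You cannot absorb the sign into $\mathbf{B}$ either. Writing $+\beta_{Gra}\phi$ as negative feedback of $-\phi$ puts the nonlinearity in the sector $[-k,0]$, where the sector hypothesis of Theorem~\ref{Theorem: AsymptoticStability} fails. The loop transformation back to $[0,k]$ gives the same requirement $\mathcal{B}+k\beta_{Gra}<0$.

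So the frequency-domain hypothesis is a second constraint that depends on the graph through $\beta_{Gra}$. Satisfying all hypotheses requires $\sup_{x>0}x^{h-1}/(1+x^{h})\le k<-\mathcal{B}/\beta_{Gra}$. This contradicts your claim that the inequality of Theorem~\ref{Theorem: PerturbDomain} is the only constraint left. It does not affect the theorem as stated, which is only the necessary condition extracted from the sector hypothesis.
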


\textit{Proof.} See Appendix \ref{Prf_PerturbDomain}.

Thereby, by determining the maximum value of $ \frac{\phi \left( {\mathord{\buildrel{\lower3pt\hbox{$\scriptscriptstyle\rightharpoonup$}} \over y} \left( t \right)} \right)}{\mathord{\buildrel{\lower3pt\hbox{$\scriptscriptstyle\rightharpoonup$}} \over y} \left( t \right)} $, we can identify the perturbation domain to which the element $k$ belongs. If the graph is affected by perturbations from GAAs and $k$ falls into the perturbation domain, then the state of perturbed graph will ultimately converge to the ASEP. To obtain such critical state of adversarial resilience, we set $\frac {d{r}_{graph}} {dt}=0$ and $\frac {d{q}_{graph}} {dt}=0$, 
%we can obtain a steady-state solution equation similar to Equation \eqref{eq:13},
and calculate the topology-feature-entangled equations for $r_{Gra}$ and $q_{Gra}$ w.r.t. the parameters $\left \langle {\gamma }_{r}\right \rangle $ and $\left \langle {\gamma }_{q}\right \rangle$,  
\begin{equation}
\begin{aligned}
\label{eq:20}
{r}_{Gra} \left( \left \langle {\gamma }_{r}\right \rangle, \left \langle {\gamma }_{q}\right \rangle \right) &= \frac{\left \langle {\gamma }_{r}\right \rangle\left \langle {\gamma }_{q}\right \rangle+\mathcal{M}\mathcal{C}}{-\mathcal{M} \left(\mathcal{C}-\left \langle {\gamma }_{r}\right \rangle \right)\left \langle {\gamma }_{q}\right \rangle}\\
{q}_{Gra} \left(\left \langle {\gamma }_{r}\right \rangle, \left \langle {\gamma }_{q}\right \rangle\right) &= \frac{\left \langle {\gamma }_{r}\right \rangle\left \langle {\gamma }_{q}\right \rangle+\mathcal{M}\mathcal{C}}{\mathcal{C} \left(-\mathcal{M}-\left \langle {\gamma }_{q}\right \rangle \right)\left \langle {\gamma }_{r}\right \rangle}
\end{aligned}.
\end{equation}

Furthermore, we can adjust the variation of ${q}_{Gra}$, such that there exists a variable $\theta$ satisfying ${r}_{Gra} = \theta{q}_{Gra} $
\begin{equation}
\begin{aligned}
\label{eq:21}
{q}_{Gra}\left(\left \langle {\gamma }_{r}\right \rangle, \left \langle {\gamma }_{q}\right \rangle \right) &= \frac{\left \langle {\gamma }_{r}\right \rangle\left \langle {\gamma }_{q}\right \rangle+\mathcal{M}\mathcal{C}}{-\theta \mathcal{M}\left(\mathcal{C}-\left \langle {\gamma }_{r}\right \rangle \right)\left \langle {\gamma }_{q}\right \rangle}
\end{aligned}.
\end{equation}

We recognize the parts $\frac{\left \langle {\gamma }_{r}\right \rangle{q}_{Gra}}{1+\left \langle {\gamma }_{r}\right \rangle{q}_{Gra}}$ and $\frac{\left \langle {\gamma }_{q}\right \rangle{r}_{Gra}}{1+\left \langle {\gamma }_{q}\right \rangle{r}_{Gra}}$ of Eq. \eqref{eq:19} ($h$=1) as the nonlinear perturbations. Given that the control parameters $\left \langle {\gamma }_{r}\right \rangle$ and $\left \langle {\gamma }_{q}\right \rangle$ can be separated, thus we can obtain the approximate nonlinear parts $\frac{{q}_{Gra}}{1+{q}_{Gra}}$ and $\frac{{r}_{Gra}}{1+{r}_{Gra}}$. Since the number of elements in matrix $M$ depends on the graph's dimension, for one-dimensional mapping, the matrix $M$ contains only a single element, we decompose the two-dimensional entangled mapping functions into two one-dimensional mapping function with each describing the topology state and feature state respectively. We define $\frac{1}{k_{r}}$ and $\frac{1}{k_{q}}$ as the diagonal elements of the one-dimensional mapping function, corresponding to the matrix $M$ of the original $N$-dimensional graph. According to Theorem \ref{Theorem: PerturbDomain}, when $k_{r}$ and $k_{q}$ satisfy $k_{r} \geqslant \frac{1/\theta}{1+({r}_{Gra}/\theta)}$ and $k_{q}\geqslant \frac{\theta}{1+(\theta{r}_{Gra})}$, respectively, the two-dimensional entanglement function would have an ASEP. Clearly, as $r_{Gra} \to 0$, the minimum values of $k_{r}$ and $k_{q}$ can be attained. Therefore, the constants $k_{r}$ and $k_{q}$ in Theorem \ref{Theorem: PerturbDomain} belong to the intervals $[1/\theta, \infty)$ and $[\theta, \infty)$ respectively.

In addition, to prove the asymptotic stability of topology-feature-entangled Eq. \eqref{eq:19}, we offer the following theorem.

\begin{theorem}
\label{Theorem: CondnsFormlaAsymStab}
\textbf{$\text{(}$Asymptotic Stability of Two-Dimensional Dynamic-Variation Mapping Equation$\text{)}$}
Decompose the two-dimensional mapping function (Eq. \eqref{eq:19}) into two the one-dimensional and analyze the asymptotic stability of each separately. According to the structure of Eq. \eqref{Equ:GeneDyn}, let $\mathbf{A}_{r}$ and $\mathbf{B}_{r}$ correspond to $-\mathcal{M}$ and $\left \langle {\gamma }_{q}\right \rangle $ in the dynamic equations describing the state of graph topology, while $\mathbf{A}_{q}$ and $\mathbf{B}_{q}$ correspond to $-\mathcal{C}$ and $\left \langle {\gamma }_{r}\right \rangle$ in the dynamic equations describing the state of node feature. As long as $\mathbf{A}_{r}$, $\mathbf{B}_{r}$, $\mathbf{A}_{q}$ and $\mathbf{B}_{q}$ can form strictly positive real matrix $\frac{1}{k_{r}} + \left( {I + s\psi } \right) \mathbf{C}\left( {sI - \mathbf{A}_{r}} \right)^{ - 1} \mathbf{B}_{r}$ and $\frac{1}{k_{q}} + \left( {I + s\psi } \right) \mathbf{C}\left( {sI - \mathbf{A}_{r}} \right)^{ - 1} \mathbf{B}_{r}$, the two-dimensional dynamic equation exhibits asymptotic stability.
\end{theorem}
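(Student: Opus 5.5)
The plan is to reduce Theorem \ref{Theorem: CondnsFormlaAsymStab} to two applications of Theorem \ref{Theorem: AsymptoticStability}, one per decoupled scalar subsystem. I would first fix $h=1$ and write each equation of Eq. \eqref{eq:19} in the Lur'e form of Eq. \eqref{Equ:GeneDyn}. For the topology component this means
\begin{equation}
\frac{dr_{Gra}}{dt} = \mathbf{A}_r r_{Gra} + \mathbf{B}_r\bigl(-\phi_r(y_r)\bigr), \qquad y_r = \mathbf{C} r_{Gra},
\end{equation}
with $\mathbf{A}_r=-\mathcal{M}$ and $\mathbf{B}_r=\langle\gamma_q\rangle$. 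The sign of the nonlinear part is absorbed into $\phi_r$, and the separated Michaelis--Menten term $\frac{y}{1+y}$ serves as the nonlinearity, following the separation of control parameters described before the theorem. The feature component is written the same way with $\mathbf{A}_q=-\mathcal{C}$, $\mathbf{B}_q=\langle\gamma_r\rangle$ and nonlinearity $\phi_q$. The coupling through $q_{Gra}=r_{Gra}/\theta$ at equilibrium, from Eq. \eqref{eq:21}, is used to express each subsystem in a single state variable.

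Next I would verify the hypotheses of Theorem \ref{Theorem: AsymptoticStability} for each scalar subsystem.
\begin{itemize}
\item \emph{Hurwitz condition.} $\mathcal{M}>0$ is an average degree and $\mathcal{C}>0$ is an average feature distinction on any nonempty graph. Hence the $1\times1$ matrices $\mathbf{A}_r=-\mathcal{M}$ and $\mathbf{A}_q=-\mathcal{C}$ have negative eigenvalues and are Hurwitz.
\item \emph{Sector condition.} The requirement $\phi(y)\,[\,y-\phi(y)/k\,]>0$ is equivalent, for $y>0$, to $0<\phi(y)/y<k$. By Theorem \ref{Theorem: PerturbDomain} this holds whenever $\frac{x^{h-1}}{1+x^h}\le k$. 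Using the intervals derived just before the theorem, choosing $k_r\in[1/\theta,\infty)$ and $k_q\in[\theta,\infty)$ makes the sector condition hold for every $r_{Gra}>0$.
\item \emph{Eigenvalue condition.} With $\chi\ge0$ and the single eigenvalue $\lambda=-\mathcal{M}$, the condition $1+\lambda\chi\neq0$ is met by any $\chi\neq 1/\mathcal{M}$. The choice $\chi=0$ works trivially; a positive $\chi$ is kept free only if it is needed for positivity below. The same applies with $-\mathcal{C}$.
\end{itemize}

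The remaining hypothesis, strict positive realness of $\frac{1}{k_r}+(1+s\psi)\mathbf{C}(s+\mathcal{M})^{-1}\langle\gamma_q\rangle$ (and its $q$ analogue), is exactly what the theorem assumes. Once all hypotheses hold, Theorem \ref{Theorem: AsymptoticStability} yields an ASEP for each scalar subsystem. I would still check that the assumption is non-vacuous. Taking $\mathbf{C}=1$, the real part on $s=j\omega$ is
\begin{equation}
\frac{1}{k_r}+\langle\gamma_q\rangle\,\frac{\mathcal{M}+\psi\omega^2}{\mathcal{M}^2+\omega^2},
\end{equation}
which is strictly positive because $\langle\gamma_q\rangle,\mathcal{M},\psi\ge0$ and $1/k_r>0$. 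It also stays bounded away from zero as $\omega\to\infty$, and the transfer function has its pole at $-\mathcal{M}<0$, so it is analytic in the closed right half-plane. Hence the SPR assumption is satisfiable.

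The last and hardest step is to lift the two scalar conclusions to the coupled two-dimensional system. Decoupling is legitimate only near the equilibrium, where $r_{Gra}=\theta q_{Gra}$. I would first try to add the two Popov--Lyapunov functions, $V=V_r+V_q$, each of the form $p x^2+\chi\int_0^{y}\phi$, and show that $\dot V<0$ along the coupled flow. The cross terms come from $q_{Gra}$ appearing inside $\phi_r$ and vice versa. Because $\frac{y}{1+y}$ is globally Lipschitz with constant $1$, these cross terms are bounded by $\langle\gamma_q\rangle|r||q|$ and $\langle\gamma_r\rangle|q||r|$. They can be dominated by the strict margins $\mathcal{M}$ and $\mathcal{C}$, which the SPR property supplies as a positive $\epsilon$ in the Kalman--Yakubovich--Popov lemma.

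If this domination fails globally, I would settle for local asymptotic stability. That follows by linearizing Eq. \eqref{eq:19} at the equilibrium of Eq. \eqref{eq:20} and checking that the Jacobian has negative trace $-(\mathcal{M}+\mathcal{C})$ and positive determinant. This requires $\mathcal{M}\mathcal{C}>\langle\gamma_r\rangle\langle\gamma_q\rangle\,\phi_r'\phi_q'$, which is guaranteed since the derivatives of the saturating terms are at most $1$ and lie in the sector.
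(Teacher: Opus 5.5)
Up to the positive-realness check you follow the paper's appendix proof exactly: scalar Hurwitz $-\mathcal{M}$, the condition $1+\lambda\chi\neq0$, and the real part $\frac{1}{k_r}+\langle\gamma\rangle\frac{\mathcal{M}+\chi\omega^2}{\mathcal{M}^2+\omega^2}>0$. The paper stops there and treats the two-dimensional conclusion as immediate.

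\textbf{The lifting step has a gap.} Your own non-vacuity computation shows that this SPR inequality holds for \emph{every} positive $\mathcal{M},\mathcal{C},\langle\gamma_r\rangle,\langle\gamma_q\rangle$. So the KYP margin it supplies says nothing about coupling strength, and it cannot dominate cross terms of size $\langle\gamma_r\rangle|r||q|$ and $\langle\gamma_q\rangle|r||q|$. Concretely, the Jacobian of Eq.~\eqref{eq:19} ($h=1$) at the origin is $\begin{pmatrix}-\mathcal{M}&\langle\gamma_r\rangle\\ \langle\gamma_q\rangle&-\mathcal{C}\end{pmatrix}$. Its determinant $\mathcal{M}\mathcal{C}-\langle\gamma_r\rangle\langle\gamma_q\rangle$ is negative whenever $\langle\gamma_r\rangle\langle\gamma_q\rangle>\mathcal{M}\mathcal{C}$, and the origin is then a saddle although every hypothesis of the theorem holds. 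Optimizing the weights in $p_rr^2+p_qq^2$ shows that the domination you want is precisely $\langle\gamma_r\rangle\langle\gamma_q\rangle<\mathcal{M}\mathcal{C}$, which is an extra assumption.

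\textbf{The fallback fails in the complementary regime.} That regime is the only one with a positive equilibrium, $r^*=\frac{\langle\gamma_r\rangle\langle\gamma_q\rangle-\mathcal{M}\mathcal{C}}{\mathcal{M}\langle\gamma_q\rangle(\mathcal{C}+\langle\gamma_r\rangle)}$ (solve Eq.~\eqref{eq:19} directly; the printed Eq.~\eqref{eq:20} does not satisfy it). There, $\phi_r',\phi_q'\le1$ only bounds the off-diagonal product by $\langle\gamma_r\rangle\langle\gamma_q\rangle>\mathcal{M}\mathcal{C}$, which proves nothing. What works is the equilibrium identity $\mathcal{M}\mathcal{C}=\frac{\langle\gamma_r\rangle\langle\gamma_q\rangle}{(1+\langle\gamma_r\rangle q^*)(1+\langle\gamma_q\rangle r^*)}$ together with concavity. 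The actual off-diagonal entries $\frac{\langle\gamma_r\rangle}{(1+\langle\gamma_r\rangle q^*)^2}$ and $\frac{\langle\gamma_q\rangle}{(1+\langle\gamma_q\rangle r^*)^2}$ have a strictly smaller product, so the determinant is positive.

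A correct two-dimensional statement therefore needs a case split on $\langle\gamma_r\rangle\langle\gamma_q\rangle$ versus $\mathcal{M}\mathcal{C}$, not per-channel SPR. Note also that Popov--Lyapunov functions $px^2+\chi\int_0^y\phi$ certify the origin. In the paper's reading ($x_{Gra}\to0$) that is the destroyed graph, not the nontrivial critical state.

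\textbf{Your Lur'e embedding also has a sign inconsistency.} Eq.~\eqref{Equ:GeneDyn} feeds back $-\phi$, whereas the saturating term in Eq.~\eqref{eq:19} enters with a plus sign.
\begin{itemize}
\item If you absorb that sign into $\phi_r$, then $\phi_r\le0$ and $\phi_r(y)[y-\phi_r(y)/k_r]<0$ for $y>0$. This contradicts the sector check you run with $\phi=y/(1+y)$.
\item If you move it into $\mathbf{B}_r=-\langle\gamma_q\rangle$ instead, your real part becomes $\frac{1}{k_r}-\langle\gamma_q\rangle\frac{\mathcal{M}+\psi\omega^2}{\mathcal{M}^2+\omega^2}$. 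Its positivity at $\omega=0$ already requires the nontrivial condition $k_r\langle\gamma_q\rangle<\mathcal{M}$.
\end{itemize}
The paper's appendix makes the same silent $+\langle\gamma\rangle$ choice, so you inherited this. Fixing it is instructive, though. The corrected channel conditions are $k_r\langle\gamma_q\rangle<\mathcal{M}$ and $k_q\langle\gamma_r\rangle<\mathcal{C}$. Multiplying them and using $k_rk_q\ge1$ (from the intervals $[1/\theta,\infty)$ and $[\theta,\infty)$) gives $\langle\gamma_r\rangle\langle\gamma_q\rangle<\mathcal{M}\mathcal{C}$. That is exactly the small-gain condition missing from your coupling step, and again it is a statement about the origin.
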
  

\textit{Proof.} See Appendix \ref{Prf_CondnsFormlaAsymStab} 
\end{sloppypar}

\subsection{Surface of Asymptotic Stability}
\begin{sloppypar}
%We can construct a two-dimensional model using the Equation \eqref{eq:20} to represent the entire network. 
The aforementioned theoretical framework mainly presents that ASEP can be used as an indicator to reconstruct a robust graph, by which our method can purify the contaminated graph to restore its robust state.
The two variables $r_{Gra}$ and $q_{Gra}$ reflect the unified perturbation of the graph under adversarial perturbations, and the inferred trajectory points (ASEPs) $\left({r}_{Gra}, \left \langle {\gamma }_{r}\right \rangle, \left \langle {\gamma }_{q}\right \rangle \right)$ and $\left({q}_{Gra}, \left \langle {\gamma }_{r}\right \rangle, \left \langle {\gamma }_{q}\right \rangle \right)$ will form a surface in theory to satisfy the asymptotically-stable property. %In other words, such a situation denotes the graph's critical state of attack resilience.  

Note that, the values $r_{Gra}$ and $q_{Gra}$ in Eq. \eqref{eq:20} is our theoretical anticipation through bridging the mapping relationship between stability and resilience. To verify the correlation is reasonable and our defined entanglement function can indeed capture the perturbations, we exhibit the actual point distribution using real-world datasets. Towards that goal, the dataset Cora\_ML is utilized to output the surface under three popular GAAs, i.e. Metattack \cite{r4}, CE-PGD \cite{pgd}, and DICE \cite{dice}. The dataset statistics information is presented in Table \ref{tab:1}. %wherein the degree correlation is very low even negative for Cora\_ML and Cora, which indicates the our proposed HMF-based one-dimension condensed formula is reasonable to sketch the node-state dynamics of graph as a whole under continuous perturbations.
In our experiments, the perturbation rate increases gradually from from 3\% to 30\% with the increment 3\%, and the execution is depicted in Algorithm \ref{alg:1}, which aims to sketch an ASEP curved plane that well-matches the different-level adversarial perturbations, that is to say, the proposed theoretical framework provides the prerequisite on how to design a proper ASEP curved plane.

Algorithm \ref{alg:1} takes the adjacency matrix of graph data and its feature matrix as input, and outputs two dynamic variables and key parameters corresponding to the graph data. In the loop, we can obtain the structural state (degree centrality) $x$ of each node, the feature weight $f$ related to its node state, the number of neighboring nodes $n$ of each node, while $s_{out}$ serves as the structural weight of the node state. Finally, the weighted average of $\sum\nolimits_{j = 1}^N {\gamma _{ij}^r }$ and $\sum\nolimits_{j = 1}^N {\gamma _{ij}^q }$ is calculated. It can be seen that from Eq. \ref{eq:19} the weighted averages of the node states based on structural weights and feature weights are computed separately to obtain dynamic variables $r_{Gra}$ and $q_{Gra}$, The calculation of the two variables is adapted from Eq. \ref{eq:9}. 
In principle, for a graph regime, no matter degree centrality, betweenness, or closeness, the graph's robustness can be obtained once the ASEP obtained.
\end{sloppypar}

\begin{algorithm2e}[tb]
\caption{ASEP Distribution under Perturbations}
\label{alg:1}
\SetAlgoLined			% 增添end行
\DontPrintSemicolon		% 不显示行末尾的分号
\SetKwInOut{Input}{\textbf{Input}}		% Set the Input
\SetKwInOut{Output}{\textbf{Output}}	% set the Output

\Input{Graph $ G =\left(\mathcal{A},\mathcal{X}\right) $, $\mathcal{A}$ and $\mathcal{X}$ are adjacency and feature matrices}
\Output{$r_{Gra}, q_{Gra}, \left \langle \gamma_r \right \rangle, \left \langle \gamma_q \right \rangle $}

$r_{Gra} \gets 0, q_{Gra} \gets 0, \left \langle {\gamma }_{r} \right \rangle \gets 0, \left \langle {\gamma }_{q} \right \rangle \gets 0$ 

$S_{out} \gets \mathrm{sum}\left(\mathcal{A}, \mathrm{axis}=1\right)$ \;

\For{each \(\mathrm{node}\) in \( G \)} {
$x \gets \mathrm{getTopology}\left(\mathrm{node}, \mathcal{A}\right)$ \;
$f \gets \mathrm{getFeatureDistinction}\left(\mathrm{node}, \mathcal{X} \right)$\;
$n \gets \mathrm{getNeighbors}\left(\mathrm{node}, \mathcal{A}\right)$\;
$s\_vector.append(x)$\;
$f\_vector.append(f)$\;
$n\_vector.append(n)$\;
}

\For{each \(\mathrm{node}\) in \( G \)} {
$\left \langle {\gamma }_{r} \right \rangle \gets f\_vector[\mathrm{node}] / n\_vector[\mathrm{node}] + \left \langle {\gamma }_{r} \right \rangle$\;
$\left \langle {\gamma }_{q} \right \rangle \gets n\_vector[\mathrm{node}] / f\_vector[\mathrm{node}] + \left \langle {\gamma }_{q} \right \rangle$\;
}

$r_{Gra} \gets \mathrm{sum}(S_{out} \times s\_vector) / \mathrm{sum}(S_{out})$\;
$q_{Gra} \gets \mathrm{sum}(f\_vector \times s\_vector) / \mathrm{sum}(f\_vector)$\;
$\left \langle {\gamma }_{r} \right \rangle \gets \left \langle {\gamma }_{r} \right \rangle / \mathrm{sum}(f\_vector)$\;
$\left \langle {\gamma }_{q} \right \rangle \gets \left \langle {\gamma }_{q} \right \rangle / \mathrm{sum}(n\_vector)$\;

\Return \ \ \( r_{Gra}, q_{Gra}, \left \langle {\gamma }_{r} \right \rangle, \left \langle {\gamma }_{q} \right \rangle \)
\end{algorithm2e}

The experimental results are drawn in Fig. \ref{fig:x} and Fig. \ref{fig:q}, from which we observe that: i) different datasets may have different theoretical surfaces to present the entangled variables $r_{Gra}$ and $q_{Gra}$; ii) the topology state and feature state of the actual network Cora\_ML converge into the inferred 2D theoretical surface of our defined two-dimensional condensed formula, which implies there exists a latent correlation indeed between asymptotic stability and attack resilience, and our bridging is reasonable and effective; and iii) as the rate of adversarial perturbations ascends, the two entangled variables exhibit a decreasing tendency under the three adversarial attacks, which indicates the attack resilience gradually diminishes with the increase of perturbation strength.

\begin{figure}[tb]
	\centering
	\hspace*{0pt} 
	\subfigure[Metattack] 
	{
		\begin{minipage}[b]{.3\linewidth}
			\centering
			\includegraphics[width=1.14in, height=1.in]{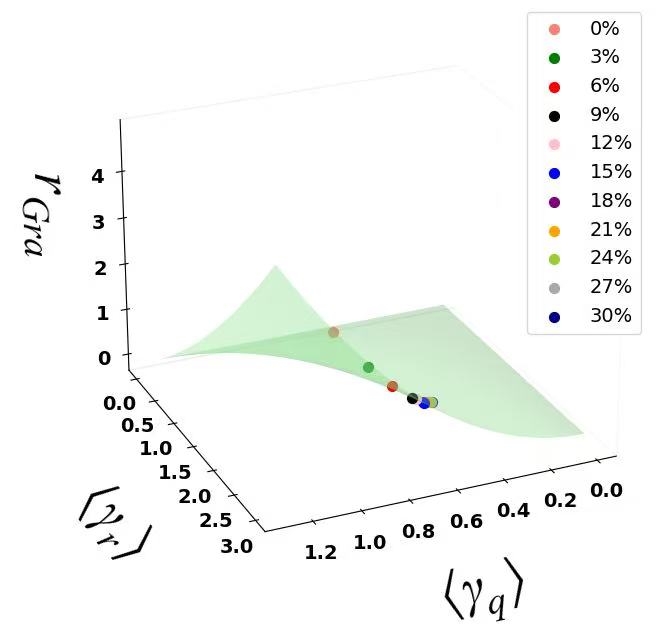}
		\end{minipage}
	}
	\hfill 
	\subfigure[CE-PGD] 
	{
		\begin{minipage}[b]{.3\linewidth}
			\centering
			\includegraphics[width=1.14in, height=1.in]{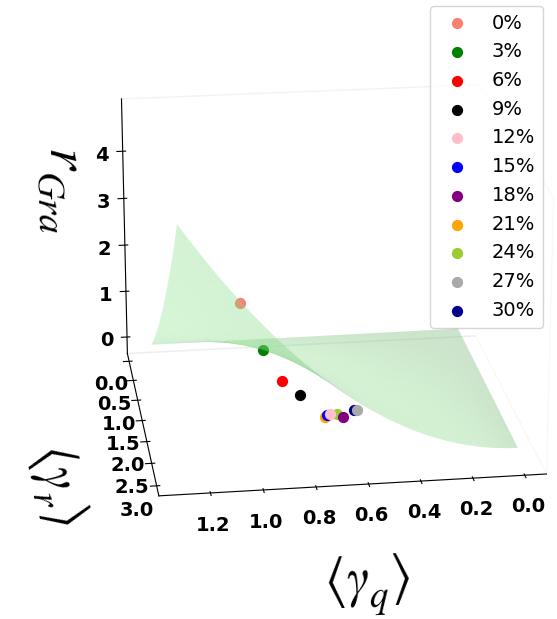}
		\end{minipage}
	}
	\hfill 
	\subfigure[DICE] 
	{
		\begin{minipage}[b]{.3\linewidth}
			\centering
			\includegraphics[width=1.14in, height=1.in]{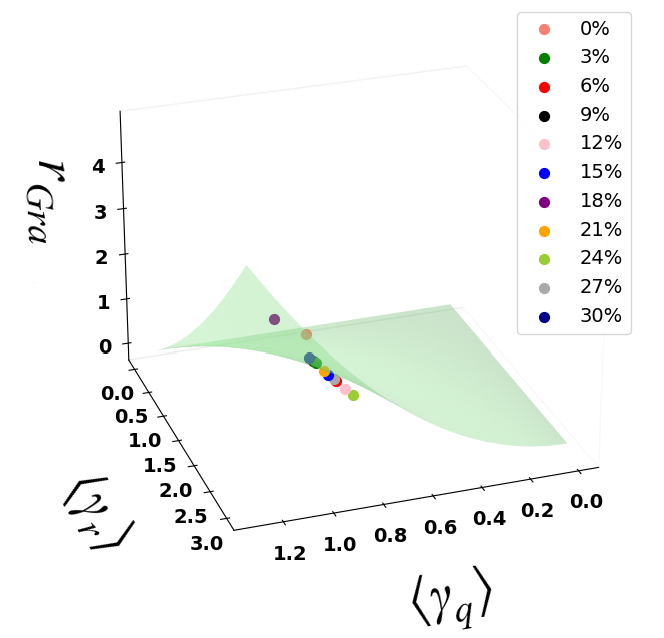}
		\end{minipage}
	}
	\caption{ASEP surface with $r_{Gra}$ on Cora\_ML}
	\label{fig:x}
\end{figure}

\begin{figure}[tb]
	\centering
	\hspace*{0pt} 
	\subfigure[Metattack] 
	{
		\begin{minipage}[b]{.3\linewidth}
			\centering
			\includegraphics[width=1.14in, height=1.in]{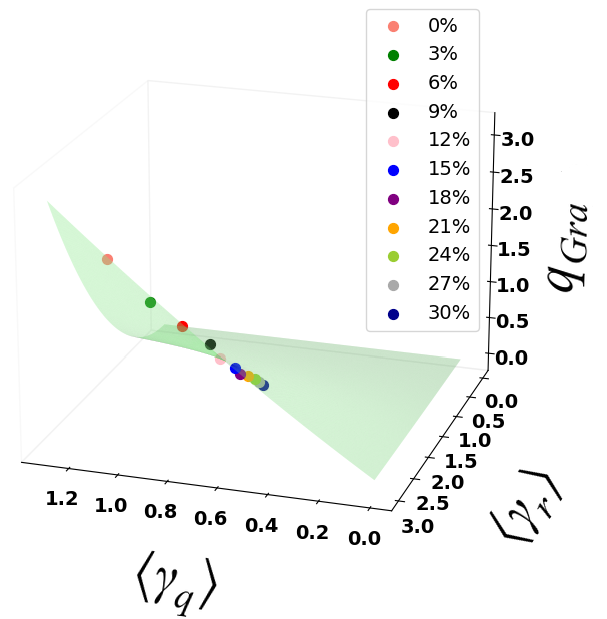}
		\end{minipage}
	}
	\hfill 
	\subfigure[CE-PGD] 
	{
		\begin{minipage}[b]{.3\linewidth}
			\centering
			\includegraphics[width=1.14in, height=1.in]{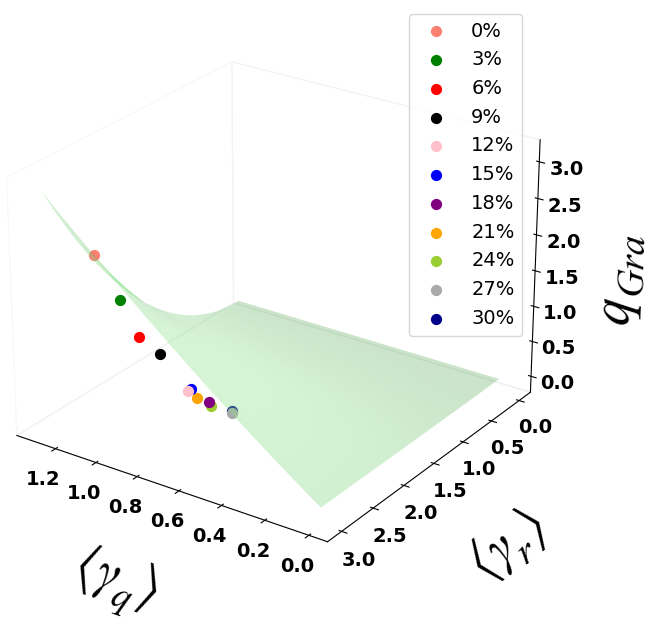}
		\end{minipage}
	}
	\hfill 
	\subfigure[DICE] 
	{
		\begin{minipage}[b]{.3\linewidth}
			\centering
			\includegraphics[width=1.14in, height=1.in]{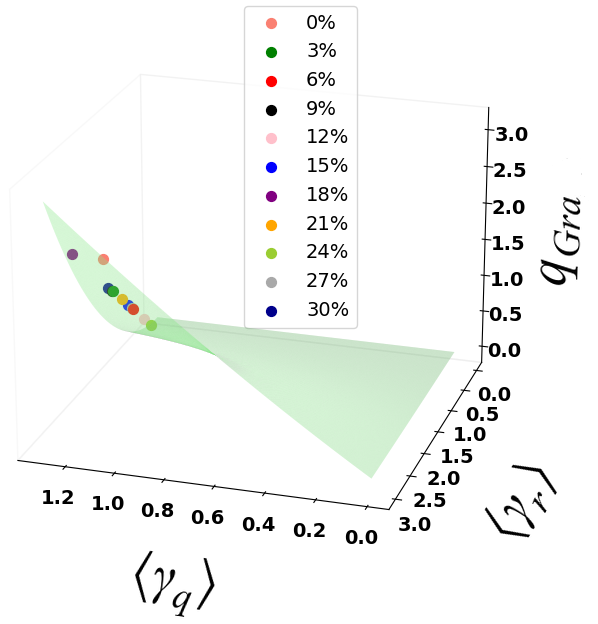}
		\end{minipage}
	}
	\caption{ASEP surface with $q_{Gra}$ on Cora\_ML}
	\label{fig:q}
\end{figure}

\section{Implementation}
\begin{sloppypar}
%Through the observation of the two-dimensional simplification model, we can subsequently implement a graph optimization-based defense method to restore the resilience of the graph. 
In this section, we detail the implementation on adversarial defense. After the entangled mapping of topology and feature, we know that the dual variables $\left(\left \langle \gamma_{r}\right \rangle, \left \langle \gamma_{q} \right \rangle \right)$ strictly affect the robustness. From Fig. \ref{fig:AttackGraphProperty} we see the GAAs induce the increase of adjacency-matrix's rank and singular values, as well as right-transfer of the density distribution of feature discrepancy. Hence, through this phenomenon, a conclusion can be drawn, that is, the GAAs aim to introduce perturbed edges into the original graph, and these edges are inclined to create links between those nodes that have dissimilar features. To counter that, we need to identify and remove such mischievously-perturbed edges. Towards this purpose, we at first give the following analysis on the basic GCN.
\end{sloppypar}

In order to achieve accurate prediction for graph data, GCN  \cite{gcnKipf} usually considers the neighbor's affection to learn node representation. The propagation between layers in neural network is defined as
%\begin{equation}
%    \label{eq:24}
$ {\hat H}^{(l+1)}=\sigma({\tilde{D}}^{-\frac{1}{2}}\tilde{A}{\tilde{D}}^{-\frac{1}{2}}{\hat H}^{(l)}{W}^{(l)}), $
%\end{equation}
where $\hat H$ represents the node features at each layer, $\tilde{A}$ is the sum of adjacency matrix and identity matrix, $\tilde{D}$ denotes the diagonal degree-matrix of $\tilde{A}$, and $W$ is the trainable weight matrix. In the process of aggregating node features, GCN updates the representation vector of each node through $l$-hop message passing \cite{k-hop}.
At $l = 0$, it means the node only aggregates its own feature information, while at $l = 1$, the node collects information from its 1-hop neighbors, and so on. Therefore, GCN relies on nearby nodes to execute representation learning for downstream task. If two nodes are similar, the learning effect of GCN on one node will be better when it transmits information to the other node. Conversely, if they are dissimilar, the information transfer leads to incorrect learning.

For each pair of nodes, we resort to their common neighboring nodes to determine their topology similarity. 
Instead of only considering the direct neighbors, a more complex approach are implemented, i.e. set $l = 2$. 
%We use the $k$-hop method to identify key nodes, where two nodes are considered similar if they share the same 2-hop neighbors. 
That is to say, we use the two sets of neighboring nodes within 2-hop as parameters to calculate Jaccard similarity, 
%\begin{equation}
%    \label{eq:25}
$ J\left({Set}_{i},{Set}_{j}\right)=\frac{\left |{Set}_{i}\cap {Set}_{j} \right |}{\left |{Set}_{i}\cup  {Set}_{j} \right |} $,
%\end{equation}
where $Set_i$ denotes the 2-hop neighboring individuals of node $i$. 
On the other hand, the cosine similarity is employed to calculate the pairwise feature similarity via the feature vectors,
%\begin{equation}
%    \label{eq:26}
$   C(f_{i},f_{j})=\frac{{f}_{i}\cdot {f}_{j}}{\lVert{f}_{i}\rVert \lVert{f}_{j} \rVert} $.
%\end{equation}
The larger the Jaccard similarity $J$ and cosine similarity $C$, the more similar the pairwise nodes $i$ and $j$. Then, to balance the affections of topology and feature in the procedure of node representation learning, we fuse the neighbor-based similarity and feature-based similarity in a proper way, i.e. assign different weights ${\mathcal{W}}_{J}$ and ${\mathcal{W}}_{C}$ to them,
\begin{equation}
\label{eq:27}
S({node}_{i},{node}_{j})={\mathcal{W}}_{J}\times J+{\mathcal{W}}_{C}\times C.
\end{equation}
% \begin{equation}
%     Similarity({node}_{i},{node}_{j})={\mathcal{W}}_{J}\times J({Set}_{i},{Set}_{j})+{\mathcal{W}}_{C}\times C({feature}_{i},{feature}_{j})
% \end{equation}

% \begin{equation}
%     (\textstyle\sum_{j=1}^{N}{{\gamma }_{(i+1)j}}^{(r)}>\textstyle\sum_{j=1}^{N}{{\gamma }_{ij}}^{(r)})\cup (\textstyle\sum_{j=1}^{N}{{\gamma }_{(i+1)j}}^{(q)}>\textstyle\sum_{j=1}^{N}{{\gamma }_{ij}}^{(q)})
% \end{equation}

\begin{sloppypar}
We calculate the similarity between all pairs of nodes and sort in ascending order in a list. By removing the lower similarity edges at the beginning of the list, we can optimize the graph topology into the attack-resilient state. However, to minimize the risk of mistakenly deleting important edges, the two pivotal variables $\left \langle {\gamma }_{r}\right \rangle$ and $\left \langle {\gamma }_{q}\right \rangle$ are referred as the criterion to decide whether to delete or not,
%Based on the known decreasing trend under GAAs, we can identify the nodes connected by the edges marked for removal. Furthermore, we can use the critical parameters of the nodes themselves as additional constraints to guide the decision on whether to remove low-similarity edges
%γr与γq是所有节点连接强度之和的，30式是反映j节点与周围相连节点的连接强度之和，在防御过程中，我们通过删除低相似度的边优化图结构来恢复图的韧性。如果一个节点j与其邻居节点的关系较强，那么该节点在图中就具有更大的影响力，该节点的特征信息在图卷积传播过程中对预测结果的影响更大，我们防御的目标是识别并移除那些通过扰动引入的低相似度边，这些边通常会使节点从错误的邻居那里接收到不准确的信息，引入30式约束条件能够找到既低相似度又连接强度弱的边。
\begin{equation}
\begin{aligned}
\label{eq:28}
\textstyle\sum_{j=1}^{N}{{\gamma }_{ij}}^{r}|_{(m+1)} &> \textstyle\sum_{j=1}^{N}{{\gamma }_{ij}}^{r}|_{m}\\
\textstyle\sum_{j=1}^{N}{{\gamma }_{ij}}^{q}|_{(m+1)} &> \textstyle\sum_{j=1}^{N}{{\gamma }_{ij}}^{q}|_{m}
\end{aligned}.
\end{equation}
\begin{algorithm2e}[tb]
	\caption{Adversarial Defense Implementation}
	\label{alg:2}
	\SetAlgoLined
	\DontPrintSemicolon
	\SetKwInOut{Input}{\textbf{Input}}
	\SetKwInOut{Output}{\textbf{Output}}
	
	\Input{Graph $G =(\mathcal{A}, \mathcal{X})$, Weights $\mathcal{W}_{J}$, $\mathcal{W}_{C}$, rate of adversarial perturbation $\alpha$}
	\Output{Attack-resilient graph $\hat{G}$}
	
	Initialize $\mathrm{node}_i^{(\gamma)}, \mathrm{node}_j^{(\gamma)} \gets 0, \mathrm{similarity\_list} \gets []$ 
	
	\For{each $\mathrm{node}_i, \mathrm{node}_j$ in $ G $} {
		$\mathrm{Set}_i, \mathrm{Set}_j \gets \mathrm{l-HopNeighbors}(\mathrm{node}_i, \mathrm{node}_j, \mathcal{A})$\;
		$\mathrm{f}_i, \mathrm{f}_j \gets \mathrm{getFeature}(\mathrm{node}_i, \mathrm{node}_j, \mathcal{X})$\;
		$ J \gets \mathrm{JaccardSimilarity}(\mathrm{Set}_i, \mathrm{Set}_j)$\;
		$ C \gets \mathrm{cosineSimilarity}(\mathrm{f}_i, \mathrm{f}_j)$\;
		$ S \gets \mathcal{W}_{J} \times J + \mathcal{W}_{C} \times C $\;
		$ \mathrm{similarity\_list.append}(S)$\;
	}    
	
	$ L \gets \mathrm{sortAscending}(\mathrm{similarity\_list}) $\;
	$ \mathrm{initial\_length} \gets \mathrm{len}(L) $\;
	
	\While{ $\mathrm{True} $} {
		$ \hat{G}, \mathrm{node}_i, \mathrm{node}_j, L \gets \mathrm{removeEdge}(G, L)$\;
		$ \mathrm{node}_i^{(\gamma)}, \mathrm{node}_j^{(\gamma)} \gets \mathrm{getValues}\left(\mathcal{A}, \mathrm{node}_i, \mathrm{node}_j \right)$\;
		
		\If{$ \mathrm{node}_i^{(\gamma)}|_{m} < \mathrm{node}_i^{(\gamma)}|_{(m-1)} $ \textbf{or} $ \mathrm{node}_j^{(\gamma)}|_{m} < \mathrm{node}_{j}^{(\gamma)}|_{(m-1)} $} {
			$ \hat{G}, L \gets \mathrm{restoreEdge}(\mathrm{node}_i, \mathrm{node}_j)$\;
		}
		
		\If{ $ \mathrm{len}(L) < \alpha \times \mathrm{initial\_length} $} {
			break\;
		}
	}
	\Return \ \ $ \hat{G} $ 
\end{algorithm2e}

%That is to say, %after the $m$-th edge deletion operation, the topology variable of the node becomes $\textstyle\sum_{j=1}^{N}{{\gamma }_{ij}}^{(r)}$. 
The condition to execute the ($m$+1)th edge-deletion operation must satisfy the above constraint, 
%$\textstyle\sum_{j=1}^{N}{{\gamma }_{ij}}^{(r)}|_{(m+1)} > \textstyle\sum_{j=1}^{N}{{\gamma }_{ij}}^{(r)}|_m$; 
otherwise the deleted edge will be restored. The changes in the feature parameters follow a similar logic, satisfying either of the two constraint conditions is sufficient. The implementation procedure is described in Algorithm \ref{alg:2}.
It takes the adjacency matrix of graph, feature matrix, weights on Jaccard similarity and cosine similarity, an adjustable optimization ratio $\alpha$ as input, and outputs the optimized graph in its desired robust state. First, calculate the weighted similarity between nodes using Eq. \ref{eq:28} and store them in a similarity list in ascending order. Furthermore, within the loop identify and remove undesired edges. Each time an edge with low similarity over nodes is removed, the connection strength between nodes, namely ${\gamma_{ij}^r}$ and ${\gamma_{ij}^q}$, will be calculated in each iteration of the loop. If, after removing an edge with low similarity, it is found that the connection strength between these two nodes is actually weaker than it was before the removal. The constraint constructs by Eq. \ref{eq:27}. Therefore, the previously-removed but desired edge will be restored, and parameter $\alpha$ determines the optimization ratio.

\end{sloppypar}

\section{Experiment Evaluation}
%\begin{sloppypar}
%We evaluate the performance of our method through answering four important research questions.
%\begin{itemize}
%\item \textcolor{red}{\textbf{RQ1:} Compared to the conventional preprocessing-base baselines, how does our ASEP-referred adjacency matrix-purification method perform?}
%\item \textbf{RQ2} Does the purified image help different GNNs learn better?
%\item \textbf{RQ3} While enhancing the adversarial robustness of the image, do its graph attribute rank, singular values, and feature smoothness optimize towards that of a clean image?
%\item \textbf{RQ4} Our defense strategy is based on prior knowledge of changes in graph robustness in two-dimensional models, and these disturbances fall under non-targeted attacks. Is our method equally effective when facing targeted attacks?
%\end{itemize}
%\end{sloppypar}

\subsection{Configuration}

\begin{table}[tb]
\caption{Dataset Statistics}
\label{tab:1}
\fontsize{9}{10}\selectfont
\begin{tabular}{cccccc}
\toprule
% &Feature&Classification&Degree Correlation
\textbf{Dataset}&\textbf{Nodes}&\textbf{Edges}&\textbf{Feat.}&\textbf{Classes}&\textbf{Deg. Cor.}\\
\midrule
Cora\_ML & 2810 & 7981 & 2879 & 7 & -0.0855\\
Cora & 2485 & 5069 & 1433 & 7 & -0.0588\\
Citeseer & 2110 & 3668 & 3703 & 6 & 0.0438\\
Photo & 7487 & 119043 & 745 & 8 & 0.0247\\
Pubmed & 19717 & 44325 & 500 & 3 & 0.0395\\

\bottomrule
\end{tabular}
\end{table}

The configurations are detailed in Appendix \ref{EperimentConfig}: i) \textbf{Datasets}: five commonly-used scalable realistic graph datasets: Cora, Cora\_ML, Citeseer, Amazon Photo, and PubMed. The statistics are sketched in Table \ref{tab:1}; ii) \textbf{GAAs:} three non-targeted attacks Metattack \cite{r4}, CE-PGD \cite{pgd}, and DICE \cite{dice}, as well as the targeted attack Nettack\cite{nettack}; iii) \textbf{Baselines:} three neural network architectures-focused models: GCN \cite{gcnKipf}, GAT \cite{GATVelickovic}, and HANG \cite{hang}, as well as three purification/preprocessing-based models: GCN-SVD \cite{svd}, GCN-Jaccard \cite{jaccard}, and Mid-GCN \cite{HuangJin25}; and iv) \textbf{Execution Settings:} the running environment and hyperparameters settings are detailed.

\begin{table*}[htbp]
\centering
%\caption{ACCURACY OF NODE CLASSIFICATION ON Cora\_ML (\textbf{BOLD}-THE BEST)}
\caption{Accuracy of node classification on Cora\_ML (\textbf{Bold}-the best)}
\label{tab:Cora_ML1}
% \begin{tabular}{|c|c|c|c|c|c|c|c|} % 三列，均为居中对齐
	\begin{tabular}{|>{\centering\arraybackslash}m{0.6cm}|>{\centering\arraybackslash}m{1.95cm}|>{\centering\arraybackslash}m{1.98cm}|>{\centering\arraybackslash}m{1.98cm}|>{\centering\arraybackslash}m{1.98cm}|>{\centering\arraybackslash}m{1.98cm}|>{\centering\arraybackslash}m{1.98cm}|>{\centering\arraybackslash}m{1.98cm}|}
		\hline
		\textbf{GAA} & \diagbox{{\textbf{Mod.}}}{{\textbf{RAP}}} & \textbf{0\%} & \textbf{5\%} & \textbf{10\%} & \textbf{15\%} & \textbf{20\%} & \textbf{25\%}\\ % 表头
		\hline
		
		\multirow{3}{*}{\makecell[c]{Meta\\ttack}} & GCN-SVD & 0.8271$\pm$0.0080 & 0.8164$\pm$0.0084 & 0.8054$\pm$0.0092 & 0.7857$\pm$0.0095 & 0.6967$\pm$0.1935 & 0.6279$\pm$0.1729\\ % 第一行
		\cline{2-8} % 只画第二行到第八列的横线
		& GCN-Jaccard &0.8423$\pm$0.0094&0.7896$\pm$0.0087&0.7441$\pm$0.0123&0.7028$\pm$0.0145&0.6552$\pm$0.0294&0.5923$\pm$0.0471\\ % 第二行
		\cline{2-8} % 只画第二行到第八列的横线
		& TopFeaRe &  {\textbf{0.8524$\pm$0.0096}} \(\textcolor{red}{(\uparrow \textbf{1.01\%})}\)& {\textbf{0.8293$\pm$0.0088}} \(\textcolor{red}{(\uparrow \textbf{1.29\%})}\)& {\textbf{0.8219$\pm$0.0114}} \(\textcolor{red}{(\uparrow \textbf{1.65\%})}\)& {\textbf{0.8150$\pm$0.0107}} \(\textcolor{red}{(\uparrow \textbf{2.93\%})}\)& {\textbf{0.8086$\pm$0.0104}} \(\textcolor{red}{(\uparrow \textbf{11.19\%})}\) & {\textbf{0.7920$\pm$0.0128}} \(\textcolor{red}{(\uparrow \textbf{16.41\%})}\)\\ % 第三行
		
		\hline
		
		\multirow{3}{*}{\makecell[c]{CE-\\PGD}} & GCN-SVD & 0.8271$\pm$0.0080 & 0.8229$\pm$0.0066 & 0.8164$\pm$0.0080 & 0.8118$\pm$0.0122 & 0.8110$\pm$0.0089 & 0.8080$\pm$0.0055\\ % 第一行
		\cline{2-8} % 只画第二行到第八列的横线
		& GCN-Jaccard & 0.8423$\pm$0.0094 & 0.8334$\pm$0.0106 & 0.8248$\pm$0.0112 & 0.8153$\pm$0.0086 & 0.8131$\pm$0.0109 & 0.8117$\pm$0.0082\\ % 第二行
		\cline{2-8} % 只画第二行到第八列的横线
		& TopFeaRe &  {\textbf{0.8524$\pm$0.0096}} \(\textcolor{red}{(\uparrow \textbf{1.01\%})}\)& {\textbf{0.8357$\pm$0.0092}} \(\textcolor{red}{(\uparrow \textbf{0.23\%})}\)& {\textbf{0.8327$\pm$0.0112}} \(\textcolor{red}{(\uparrow \textbf{0.79\%})}\)& {\textbf{0.8307$\pm$0.0086}} \(\textcolor{red}{(\uparrow \textbf{1.54\%})}\)& {\textbf{0.8305$\pm$0.0092}} \(\textcolor{red}{(\uparrow \textbf{1.74\%})}\) & {\textbf{0.8294$\pm$0.0090}} \(\textcolor{red}{(\uparrow \textbf{1.77\%})}\)\\ % 第三行
		
		\hline
		
		\multirow{3}{*}{DICE} & GCN-SVD & 0.8271$\pm$0.0080 & 0.8148$\pm$0.0065 & 0.8000$\pm$0.0064 & 0.7901$\pm$0.0089 & 0.7730$\pm$0.0119 & 0.7590$\pm$0.0087\\ % 第一行
		\cline{2-8} % 只画第二行到第八列的横线
		& GCN-Jaccard & 0.8423$\pm$0.0094 & 0.8300$\pm$0.0092 & 0.8177$\pm$0.0101 & 0.8061$\pm$0.0088 & 0.7915$\pm$0.0135 & 0.7752$\pm$0.0128\\ % 第二行
		\cline{2-8} % 只画第二行到第八列的横线
		& TopFeaRe &  {\textbf{0.8524$\pm$0.0096}} \(\textcolor{red}{(\uparrow \textbf{1.01\%})}\)& {\textbf{0.8354$\pm$0.0089}} \(\textcolor{red}{(\uparrow \textbf{0.54\%})}\)& {\textbf{0.8291$\pm$0.0113}} \(\textcolor{red}{(\uparrow \textbf{1.14\%})}\)& {\textbf{0.8242$\pm$0.0103}} \(\textcolor{red}{(\uparrow \textbf{1.81\%})}\)& {\textbf{0.8160$\pm$0.0122}} \(\textcolor{red}{(\uparrow \textbf{2.45\%})}\) & {\textbf{0.8078$\pm$0.0079}} \(\textcolor{red}{(\uparrow \textbf{3.26\%})}\)\\ % 第三行
		
		\hline
	\end{tabular}
\end{table*}

\begin{table*}[htbp]
\centering
%\caption{ACCURACY OF NODE CLASSIFICATION UNDER COMBINATION WITH Ours USING METATTACK (\textbf{BOLD}-THE BEST)}
\caption{Accuracy of node classification on Citeseer under combination with Ours using Metattack}
\label{tab:Combine-Citeseer}
% \begin{tabular}{|c|c|c|c|c|c|c|c|} % 三列，均为居中对齐
\begin{tabular}{|>{\centering\arraybackslash}m{0.6cm}|>{\centering\arraybackslash}m{1.95cm}|>{\centering\arraybackslash}m{1.98cm}|>{\centering\arraybackslash}m{1.98cm}|>{\centering\arraybackslash}m{1.98cm}|>{\centering\arraybackslash}m{1.98cm}|>{\centering\arraybackslash}m{1.98cm}|>{\centering\arraybackslash}m{1.98cm}|}
\hline
\textbf{\makecell[c]{Data\\set}} & \diagbox{{\textbf{Mod.}}}{{\textbf{RAP}}} & \textbf{0\%} & \textbf{5\%} & \textbf{10\%} & \textbf{15\%} & \textbf{20\%} & \textbf{25\%}\\ % 表头

\hline
\multirow{6}{*}{\makecell[c]{Cite\\seer}} & GCN & 0.7211$\pm$0.0099 & 0.6661$\pm$0.0219 & 0.6132$\pm$0.0325 & 0.5538$\pm$0.0273 & 0.4935$\pm$0.0261 & 0.4556$\pm$0.0318\\ 
\cline{2-8} 
& TopFeaRe &  {\textbf{0.7228$\pm$0.0105}} \(\textcolor{red}{(\uparrow \textbf{0.17\%})}\)& {\textbf{0.7159$\pm$0.0170}} \(\textcolor{red}{(\uparrow \textbf{4.98\%})}\)&  {\textbf{0.7005$\pm$0.0162}} \(\textcolor{red}{(\uparrow \textbf{8.73\%})}\)& {\textbf{0.6885$\pm$0.0170}} \(\textcolor{red}{(\uparrow \textbf{13.47\%})}\)& {\textbf{0.6754$\pm$0.0147}} \(\textcolor{red}{(\uparrow \textbf{18.19\%})}\) & {\textbf{0.6647$\pm$0.0189}} \(\textcolor{red}{(\uparrow \textbf{20.91\%})}\)\\ 
\cline{2-8}
& GAT & 0.7335$\pm$0.0084&0.7140$\pm$0.0161&0.6876$\pm$0.0256&0.6511$\pm$0.0301&0.6170$\pm$0.0203&0.5960$\pm$0.0283\\ 
\cline{2-8} 
& GAT-Our &  {\textbf{0.7393$\pm$0.0127}} \(\textcolor{red}{(\uparrow \textbf{0.58\%})}\)& {\textbf{0.7274$\pm$0.0094}} \(\textcolor{red}{(\uparrow \textbf{1.34\%})}\)& {\textbf{0.7210$\pm$0.0120}} \(\textcolor{red}{(\uparrow \textbf{3.34\%})}\)& {\textbf{0.7105$\pm$0.0099}} \(\textcolor{red}{(\uparrow \textbf{5.94\%})}\)& {\textbf{0.6991$\pm$0.0137}} \(\textcolor{red}{(\uparrow \textbf{8.21\%})}\) & {\textbf{0.6934$\pm$0.0201}} \(\textcolor{red}{(\uparrow \textbf{9.74\%})}\)\\
\cline{2-8}
& HANG & \textbf{0.7406$\pm$0.0116}&0.7268$\pm$0.0117&0.7075$\pm$0.0272&0.6851$\pm$0.0156&0.6722$\pm$0.0214&0.6678$\pm$0.0305\\ 
\cline{2-8} 
& HANG-Our &  {0.7399$\pm$0.0104} \(\textcolor{red}{(\downarrow \textbf{0.07\%})}\)& {\textbf{0.7348$\pm$0.0126}} \(\textcolor{red}{(\uparrow \textbf{0.80\%})}\)& {\textbf{0.7281$\pm$0.0118}} \(\textcolor{red}{(\uparrow \textbf{2.06\%})}\)& {\textbf{0.7160$\pm$0.0106}} \(\textcolor{red}{(\uparrow \textbf{3.09\%})}\)& {\textbf{0.7095$\pm$0.0145}} \(\textcolor{red}{(\uparrow \textbf{3.73\%})}\) & {\textbf{0.7093$\pm$0.0110}} \(\textcolor{red}{(\uparrow \textbf{4.15\%})}\)\\
\cline{2-8}
& MidGCN &      0.7452$\pm$0.0032&0.7237$\pm$0.0067&0.7105$\pm$0.0066&0.6584$\pm$0.0106&0.6700$\pm$0.0052&0.6306$\pm$0.0099\\ 
\cline{2-8} 
& MidGCN-Our &  {\textbf{0.7466$\pm$0.0047}} \(\textcolor{red}{(\uparrow \textbf{0.14\%})}\)& {\textbf{0.7307$\pm$0.0027}} \(\textcolor{red}{(\uparrow \textbf{0.70\%})}\)& {\textbf{0.7127$\pm$0.0080}} \(\textcolor{red}{(\uparrow \textbf{0.22\%})}\)& {\textbf{0.6626$\pm$0.0102}} \(\textcolor{red}{(\uparrow \textbf{0.42\%})}\)& {\textbf{0.6712$\pm$0.0054}} \(\textcolor{red}{(\uparrow \textbf{0.12\%})}\) & {\textbf{0.6349$\pm$0.0134}} \(\textcolor{red}{(\uparrow \textbf{0.43\%})}\)\\

\hline
\end{tabular}
\end{table*}

\begin{table*}[htbp]
\centering
\caption{Ablation experiment on node classification accuracy under Metattack}
\label{tab:Eq.}
% \begin{tabular}{|c|c|c|c|c|c|c|c|} % 三列，均为居中对齐
\begin{tabular}{|>{\centering\arraybackslash}m{0.65cm}|>{\centering\arraybackslash}m{1.88cm}|>{\centering\arraybackslash}m{1.98cm}|>{\centering\arraybackslash}m{1.98cm}|>{\centering\arraybackslash}m{1.98cm}|>{\centering\arraybackslash}m{1.98cm}|>{\centering\arraybackslash}m{1.98cm}|>{\centering\arraybackslash}m{1.98cm}|}
\hline
\textbf{\makecell[c]{Data\\set}} & \diagbox{{\textbf{Mod.}}}{{\textbf{RAP}}} & \textbf{0\%} & \textbf{5\%} & \textbf{10\%} & \textbf{15\%} & \textbf{20\%} & \textbf{25\%}\\ % 表头
\hline

\multirow{3}{*}{\makecell[c]{Cora\\\_ML}} & \makecell[c]{w/: Feature} 
& 0.8512$\pm$0.0033 & 0.8160$\pm$0.0045 & 0.7576$\pm$0.0031 & 0.7182$\pm$0.0059 & 0.6963$\pm$0.0087 & 0.6307±0.0082\\ % 第一行
\cline{2-8} % 只画第二行到第八列的横线
& \makecell[c]{w/: Topology} &0.8515$\pm$0.0031&0.8150$\pm$0.0043&0.7594$\pm$0.0030&0.7241$\pm$0.0041&0.7011$\pm$0.0071&0.6415$\pm$0.0081\\ % 第二行
\cline{2-8} % 只画第二行到第八列的横线
& w/: Both 
& {\textbf{0.8524$\pm$0.0096}} \(\textcolor{red}{(\uparrow \textbf{0.09\%})}\)& {\textbf{0.8293$\pm$0.00886}} \(\textcolor{red}{(\uparrow \textbf{1.33\%})}\)& {\textbf{0.8219$\pm$0.0114}} \(\textcolor{red}{(\uparrow \textbf{6.25\%})}\)& {\textbf{0.8150$\pm$0.0107}} \(\textcolor{red}{(\uparrow \textbf{9.09\%})}\)& {\textbf{0.8086$\pm$0.0104}} \(\textcolor{red}{(\uparrow \textbf{10.75\%})}\) & {\textbf{0.7920$\pm$0.0128}} \(\textcolor{red}{(\uparrow \textbf{15.05\%})}\)\\ % 第三行

\hline

\multirow{3}{*}{Cora} & \makecell[c]{w/: Feature}
& 0.8230$\pm$0.0047 & 0.7864$\pm$0.0152 & 0.7358$\pm$0.0116 & 0.6309$\pm$0.0133 & 0.6282$\pm$0.0189 & 0.6282$\pm$0.0189\\ % 第一行
\cline{2-8} % 只画第二行到第八列的横线
& \makecell{w/: Topology} &0.8232$\pm$0.0044&0.7893$\pm$0.0129&0.7364$\pm$0.0093&0.6242$\pm$0.0112&0.6322$\pm$0.0221&0.5451$\pm$0.0231\\ % 第二行
\cline{2-8} % 只画第二行到第八列的横线
& w/: Both &  {\textbf{0.8326$\pm$0.0083}} \(\textcolor{red}{(\uparrow \textbf{0.94\%})}\)& {\textbf{0.8032$\pm$0.0128}} \(\textcolor{red}{(\uparrow \textbf{1.39\%})}\)& {\textbf{0.7886$\pm$0.0131}} \(\textcolor{red}{(\uparrow \textbf{5.22\%})}\)& {\textbf{0.7807$\pm$0.012}} \(\textcolor{red}{(\uparrow \textbf{14.98\%})}\)& {\textbf{0.7709$\pm$0.0126}} \(\textcolor{red}{(\uparrow \textbf{13.87\%})}\) & {\textbf{0.7920$\pm$0.0128}} \(\textcolor{red}{(\uparrow \textbf{16.38\%})}\)\\ % 第三行

\hline

\multirow{3}{*}{\makecell[c]{Cite\\seer}} & \makecell[c]{w/: Feature}
& 0.7107$\pm$0.0023 & 0.6857$\pm$0.0062 & 0.6578$\pm$0.0115 & 0.6174$\pm$0.0079 & 0.6192$\pm$0.0071 & 0.5755$\pm$0.0092\\ % 第一行
\cline{2-8} % 只画第二行到第八列的横线
& \makecell[c]{w/: Topology} &0.6192$\pm$0.0071&0.6857$\pm$0.0071&0.6857$\pm$0.0071&0.6067$\pm$0.0076&0.6083$\pm$0.0098&0.5566$\pm$0.0160\\ % 第二行
\cline{2-8} % 只画第二行到第八列的横线
& w/: Both &  {\textbf{0.7228$\pm$0.0105}} \(\textcolor{red}{(\uparrow \textbf{1.21\%})}\)& {\textbf{0.7159$\pm$0.017}} \(\textcolor{red}{(\uparrow \textbf{3.02\%})}\)& {\textbf{0.7005$\pm$0.0162}} \(\textcolor{red}{(\uparrow \textbf{1.48\%})}\)& {\textbf{0.6885$\pm$0.017}} \(\textcolor{red}{(\uparrow \textbf{7.11\%})}\)& {\textbf{0.6754$\pm$0.0147}} \(\textcolor{red}{(\uparrow \textbf{5.62\%})}\) & {\textbf{0.6647$\pm$0.0189}} \(\textcolor{red}{(\uparrow \textbf{8.92\%})}\)\\ % 第三行

\hline
\end{tabular}
\end{table*}

\subsection{Performance}
\textbf{Comparison with Preprocessing-based Baselines.} First, we compare our approach to two adjacency-matrix-based preprocessing approaches: GCN-SVD and GCN-Jaccard. Since our TopFeaRe enhances the adversarial resilience through introducing constraints (Eq. \eqref{eq:28}) under the guidance of ASEP during the purification of the adjacency matrix, we input the purified graphs into the GCN to execute node-classification tasks. Table \ref{tab:Cora_ML1} and Tables \ref{tab:Cora1}-\ref{tab:PubMed} (Appendix \ref{Exp:Preprocess&Neural}) display the accuracy across the five datasets under three non-targeted GAAs.
\par
From the experimental results, we can observe that: i) our TopFeaRe significantly enhances the adversarial resilience compared to GCN-SVD and GCN under all the three adversarial attacks. For example, on Cora\_ML under the RAP of 25\%, TopFeaRe outperforms the second-best baseline by 16.41\%, 1.77\%, and 3.26\%, respectively. Furthermore, at RAP 0\%, which means there are no adversarial perturbations, our method can further promote the adversarial resilience of the clean graph by removing a small number of dissimilar edges. This indicates our proposed method not only performs exceptionally well in adversarial environments, but it also enhances the learning capability of the graph topology under clean conditions. 
In other words, it is unnecessary to know clean graph and attack signals beforehand, once a graph (even contaminated) is obtained, we can employ existing attacks to draw ASEP surface, then enhance it by such ASEP.

%Additionally, regarding the Metattack attack, it inflicts greater damage on the graph topology compared to CE-PGD and DICE. The results also show that under such severe adversarial interference, TopFeaRe can effectively purify the perturbed graph, giving rise to the significant improvement of adversarial resilience. 
%This finding answers \textbf{RQ1} and underscores the importance of graph purification mitigation while facing strong adversarial attacks.
%\subsubsection{Effects of Purified Graphs on Learning}

\textbf{Comparison with Neural Network-Optimized Baselines.} 
%To study \textbf{RQ2}, 
We combine our method into GCN, GAT, HANG, and Mid-GCN, and compare to their original version under Metattack attack, that is to say, we in advance modify the adjacency matrix referring to ASEP, then perform the baselines. The experimental results are shown in Table \ref{tab:Combine-Citeseer} and Tables \ref{tab:Combine-Cora_ML}-\ref{tab:Combine-PubMed} (Appendix \ref{Exp:Preprocess&Neural}), from which, we can see that our TopFeaRe can significantly boost the adversarial resilience of these neural network-optimized baselines under RAP ranging from 5\%-25\%, with the accuracy promotion of GCN, GAT, HANG, and Mid-GCN on average [13.095\%, 8.67\%, 1.06\%, 0.43\%] on Cora\_ML, [18.75\%, 7.78\%, 4.04\%, 1.81\%] on Cora, [13.26\%, 5.71\%, 2.77\%, 0.38\%] on Citeseer, [4.54\%, 9.39\%, 1.01\%, 1.23\%] on Amazon Photo, and [7.30\%, 14.86\%, 7.29\%, 2.26\%] on PubMed respectively.

%When no adversarial attacks exist, i.e. RAP 0\%, only HANG-Our showed a slight decrease of 0.07\%, 0.94\%, and 0.01\% on the datasets Citeseer, Amazon Photo, and PubMed, while the other cases all show clear improvements. Under adversarial perturbations, as RAP increases, the resistance assistance provided by our method becomes more pronounced. For example, on Cora dataset, TopFeaRe improves GCN by 31.24\%, GAT-Our promotes GAT by 15.26\%, and HANG-Our boosts HANG by 7.25\%.

\par
%It is noteworthy that HANG itself has strong robustness against adversarial perturbations. Therefore, in the combination of TopFeaRe and HANG, we observe that the less robust GNNs (such as GCN) benefit more significantly from TopFeaRe. This indicates that TopFeaRe not only enhances the adversarial resilience of the models but also effectively improves the performance of different GNNs, particularly when facing stronger adversarial attacks. Additionally, models combined with TopFeaRe demonstrate better learning capabilities when handling complex graph structures, further validating the effectiveness of TopFeaRe in enhancing the robustness of GNNs.

\subsection{Ablation Studies}
\textbf{Effect of Topology and Feature.} 
To inspect the effects of topology and feature, we also run a set of ablation experiments from facets: i) only using graph topology-based ASEP to guide the purification of adjacency matrix, called TopRe; and ii) only using node features-based ASEP to guide the purification of adjacency matrix, called FeaRe. The results are listed in Table \ref{tab:Eq.}, from which we can observe that for the three datasets, using the single feature (topology or feature) can obtain similar performance, however, when both are combined, the accuracy can promote by 1.33\% to 10.75\% for Cora\_ML, 1.39\% to 16.41\% for Cora, and 3.02\% to 8.92\% for Citeseer under the adversarial attack Metattack. 
Additionally, we run anther set of experiments with/without two parameters: degree centrality and feature distinction. The performance in Table \ref{tab:WithoutPara} is obviously better in consideration of the parameters.

\begin{table*}[bt]
\centering
\caption{Accuracy of node classification w/(w/o) degree centrality and feature distinction under Metattack}
\label{tab:WithoutPara}
\begin{tabular}{|>{\centering\arraybackslash}m{1.2cm}|>{\centering\arraybackslash}m{1.98cm}|>{\centering\arraybackslash}m{1.98cm}|>{\centering\arraybackslash}m{1.98cm}|>{\centering\arraybackslash}m{1.98cm}|>{\centering\arraybackslash}m{1.98cm}|>{\centering\arraybackslash}m{1.98cm}|}
\hline
\textbf{Dataset} & \textbf{RAP} & \textbf{5\%} & \textbf{10\%} & \textbf{15\%} & \textbf{20\%} & \textbf{25\%}\\ % 表头
\hline
\multirow{2}{*}{Citeseer} 
& \makecell[c]{w/o}
& 0.6831$\pm$0.0045	& 0.6560$\pm$0.0100 & 0.6117$\pm$0.0078 & 0.6175$\pm$0.0072 & 0.5870$\pm$0.0100\\ % 第一行
\cline{2-7} % 只画第二行到第八列的横线
& \makecell[c]{w/}
& 0.7159$\pm$0.0170 & 0.7005$\pm$0.0162 & 0.6885$\pm$0.0170 & 0.6754$\pm$0.0147 & 0.6647$\pm$0.0189\\ % 第一行
\hline
\end{tabular}
\end{table*}

\textbf{HMF vs. Degree Correlations.} We run experiments w.r.t. different-level degree correlations (DCs) via pruning on Citeseer, the results in Table \ref{tab:DegreeCorrelation} show DC has little influence.

\begin{table*}[bt]
\centering
\caption{Accuracy of node classification w.r.t. different degree correlations on Citeseer}
\label{tab:DegreeCorrelation}
\begin{tabular}{||>{\centering\arraybackslash}m{0.8cm}|>{\centering\arraybackslash}m{2.2cm}|>{\centering\arraybackslash}m{1.0cm}||>{\centering\arraybackslash}m{0.8cm}|>{\centering\arraybackslash}m{2.2cm}|>{\centering\arraybackslash}m{1.cm}||>{\centering\arraybackslash}m{0.8cm}|>{\centering\arraybackslash}m{2.2cm}|>{\centering\arraybackslash}m{1.cm}|}
\hline
\textbf{RAP} & \textbf{Accuracy} & \textbf{DCs} & \textbf{RAP} & \textbf{Accuracy} & \textbf{DCs} & \textbf{RAP} & \textbf{Accuracy} & \textbf{DCs}\\ % 表头
\hline
\multirow{3}{*}{\makecell[c]{5\%}} & 0.7125$\pm$0.0036 & 0.1965  & \multirow{3}{*}{\makecell[c]{15\%}} & 0.7029$\pm$0.0060 & 0.2242 & \multirow{3}{*}{\makecell[c]{25\%}} & 0.7165$\pm$0.0086 & 0.2270\\ % 第一行
\cline{2-3} \cline{5-6} \cline{8-9} 
& 0.7274$\pm$0.0070 & 0.0147 & & 0.7152$\pm$0.0074 & 0.0281& & 0.7226$\pm$0.0074 & 0.0312 \\
\cline{2-3} \cline{5-6} \cline{8-9}
& 0.7267$\pm$0.0062 & 0.0015 & & 0.7104$\pm$0.0049 & 0.0174 & & 0.7137$\pm$0.0057 & 0.0220 \\
\hline
\end{tabular}
\end{table*}

\subsection{Properties Variation}
This subsection studies whether our proposed method has a reverse affection on the adversarial phenomena in Fig. \ref{fig:AttackGraphProperty}. %To this end, we analyzed the purified graph referring to ASEP, in terms of the three attributes: feature smoothness, adjacency matrix's rank and singular values, and compared with the corresponding states under the adversarial perturbations.

%\subsubsection{Rank Optimization}
\begin{figure}[bt] % 用不带星号的 figure 环境，就不会跨栏，但是会上下浮动。可以去查一下关于浮动体的控制。
	\centering
	\subfigure[Rank growth under attacks]{
		\includegraphics[width=1.5in, height=1.2in]{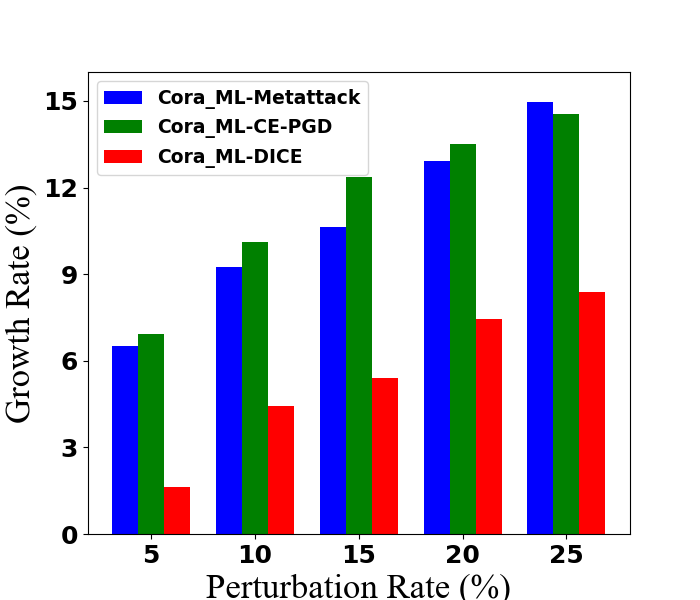}}
	\subfigure[Rank decreased by TopFeaRe]{
		\includegraphics[width=1.5in, height=1.2in]{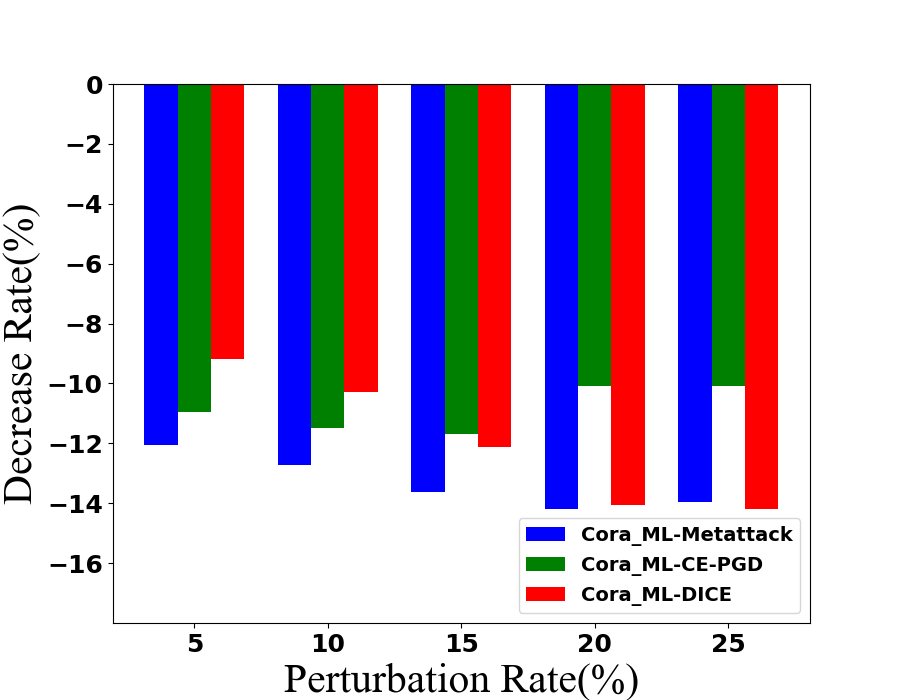}}
	\caption{Rank variation of adjacency matrix of Cora\_ML.}
	\label{fig:Rank Cora_ML}
\end{figure}	
\begin{figure}[bt]	
	%\begin{minipage}[b]{.45\linewidth}
	\subfigure[Rank growth under attacks]{
		\includegraphics[width=1.6in, height=1.2in]{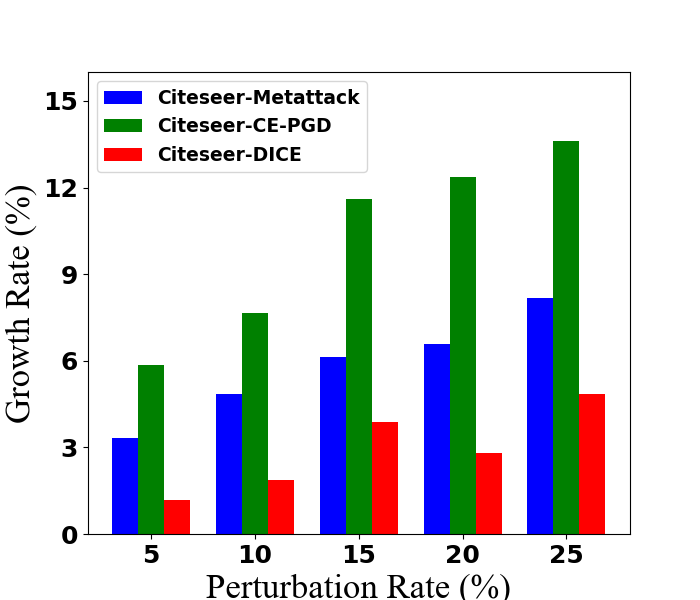}}
	\subfigure[Rank decreased by TopFeaRe]{
		\includegraphics[width=1.55in, height=1.2in]{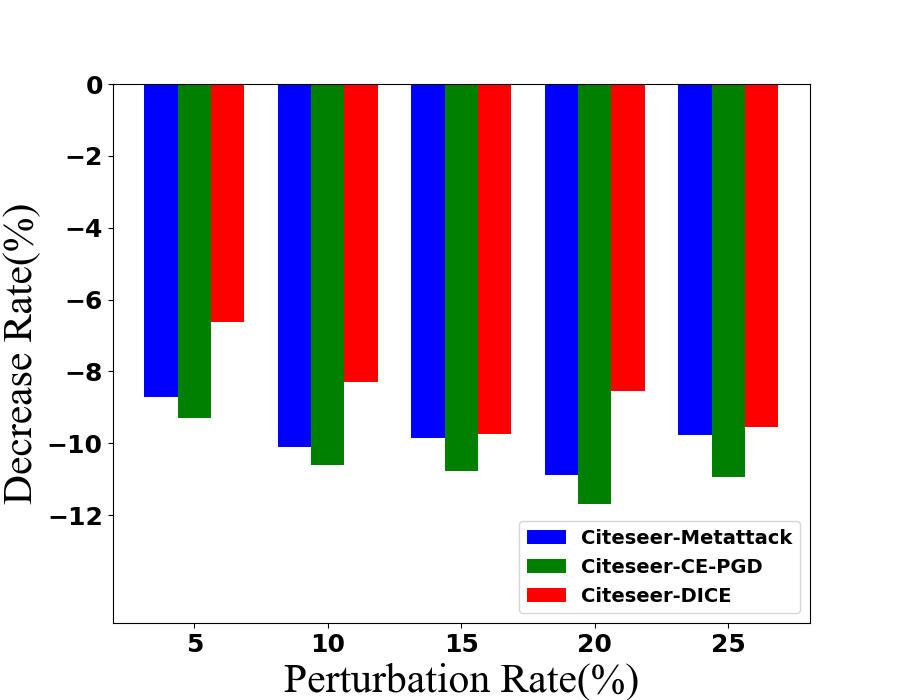}}
	\caption{Rank variation of adjacency matrix of citeseer.}
	\label{fig:Rank citeseer}
	%	\end{minipage}
\end{figure}
\textbf{Rank Variation.} Under the three GAAs, we analyzed the rank growth rates and examined the rank descent rates after purifying by TopFeaRe, as shown in Fig. \ref{fig:Rank Cora_ML} and Fig. \ref{fig:Rank citeseer}, the results indicate as RAP gradually increases from 5\% to 25\%, the rank of adjacency matrix correspondingly raises. This phenomenon suggests adversarial attacks tend to establish connections between nodes, leading to a more complex and dense graph structure. In response, our TopFeaRe significantly reduces the rank through effectively removing the edges over dissimilar nodes. This process not only mitigates the impact of adversarial perturbations, but it also optimizes the graph's topology, making it closer to a robust state.
%\begin{figure}[htbp]
%	\centering
%	\hfill
%	\subfigure[Rank growth under attacks]{
%		\includegraphics[width=1.6in, height=1.3in]{Figure/rank/cora_ml_grow.png}}
%	\hfill
%	\subfigure[Rank decreased by TopFeaRe]{
%		\includegraphics[width=1.6in, height=1.3in]{Figure/rank/cora_ml_decrease.png}}
%	\caption{Rank variation of adjacency matrix of Cora\_ML.}
%	\label{fig:Rank Cora_ML}
%\end{figure}
%% %秩图2
%% \begin{figure}[htbp]
%	% \centering
%	% \subfigure[Rank growth under attacks]{
%		% 		\includegraphics[scale=0.16]{Figure/rank/cora_grow.png}}
%	% \hfill
%	% \subfigure[Rank decrease with TopFeaRe]{
%		% 		\includegraphics[scale=0.135]{Figure/rank/cora_decrease.png}}
%	% \caption{Rank variation of adjacency matrix of Cora.}
%	% \label{fig:Rank Cora}
%	% \end{figure}
%%秩图3
%\begin{figure}[htbp]
%	\centering
%	\hfill
%	\subfigure[Rank growth under attacks]{
%		\includegraphics[width=1.6in, height=1.3in]{Figure/rank/citeseer_grow.png}}
%	\hfill
%	\subfigure[Rank decreased by TopFeaRe]{
%		\includegraphics[width=1.6in, height=1.3in]{Figure/rank/citeseer_decrease.png}}
%	\caption{Rank variation of adjacency matrix of citeseer.}
%	\label{fig:Rank citeseer}
%\end{figure}

%\subsubsection{Singular Value Variation}
\textbf{Singular-Value Variation.} To further analyze the changes of singular values, we select Cora\_ML and Citeseer datasets to observe the performance under the three attacks. Singular values are part of the spectral properties of a graph, reflecting its structural information. GAAs can alter the spectral properties of a graph, making an originally simple graph more complex to deceive the target models, which usually renders the increase in singular values. As shown in Fig. \ref{fig:sv cora_ml} and Fig. \ref{fig:sv citeseer}, all three attacks introduce unnecessary edges into the perturbed graphs. However, our TopFeaRe successfully reduces the singular values of the adjacency matrices for both datasets, effectively mitigating the impact of adversarial attacks.
%\subsubsection{Feature Smoothing Variation}
\begin{sloppypar}
\textbf{Feature-Smoothness Variation.} Feature smoothness is a crucial foundation for effective learning from different datasets. Seen from the previous experiments, we know adversarial attacks causing feature non-smoothness can mislead the target models during both training and inference. We present the changes between perturbed graphs and the clean at RAD 25\% on Cora\_ML and Citeseer datasets, as shown in Fig. \ref{fig:fs cora_ml} and Fig. \ref{fig:fs citeseer}. In clean graphs, feature values are typically concentrated in the regions with low feature distinctions, whereas after adversarial perturbations, this phenomenon is reversed. The purified graphs by TopFeaRe cause the edges in high feature-distinction regions to disperse, while the features become concentrated in the areas with low feature distinctions.
\end{sloppypar}

\subsection{Against Targeted Adversarial Attack}
%To study \textbf{RQ2}, 
We also use the dataset Citeseer to validate the effectiveness of our proposed method against targeted adversarial attack-Nettack. We compare TopFeaRe with other baseline methods, the experiments are shown in Table \ref{tab:Citeseer_Nettack}, from which, we can observe that our proposed TopFeaRe obviously outperforms other baselines across all perturbation counts from 0 to 5. Notably, our TopFeaRe surpasses the popular graph purification methods, such as GCN-SVD and GCN-Jaccard. To sum up, our proposed TopFeaRe demonstrates exceptional performance not only in resisting non-targeted attacks but also in effectively mitigating targeted attacks.

%奇异值1
\begin{figure}[tb]
\centering
\subfigure[Singular-Value under Metattack] % 为子图添加标题
{
\begin{minipage}[b]{.3\linewidth}
\centering
\includegraphics[width=1.25in, height=0.75in]{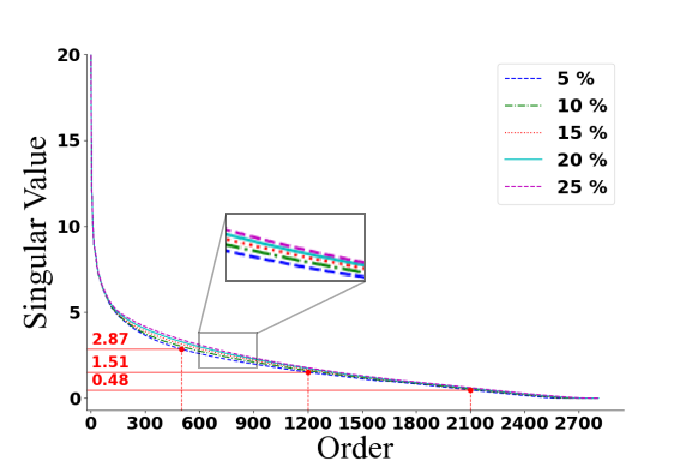}
\end{minipage}
}
\hfill % 在这里插入水平填充，使子图之间有间距
\subfigure[Singular-Value under CE-PGD] % 为子图添加标题
{
\begin{minipage}[b]{.3\linewidth}
\centering
\includegraphics[width=1.25in, height=0.75in]{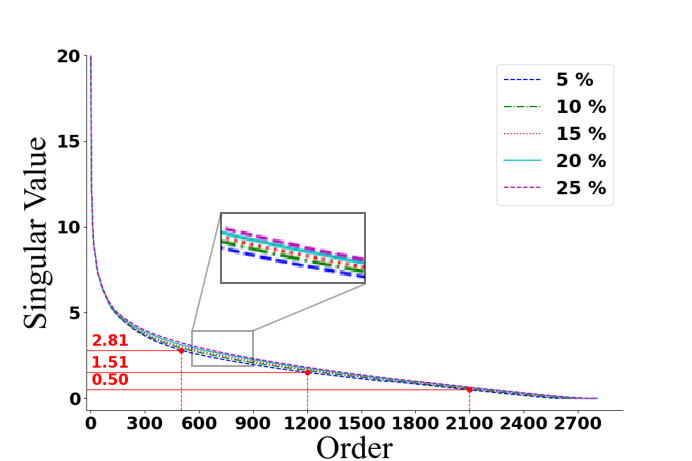}
\end{minipage}
}
\hfill % 在这里插入水平填充
\subfigure[Singular-Value under DICE] % 为子图添加标题
{
\begin{minipage}[b]{.3\linewidth}
\centering
\includegraphics[width=1.25in, height=0.75in]{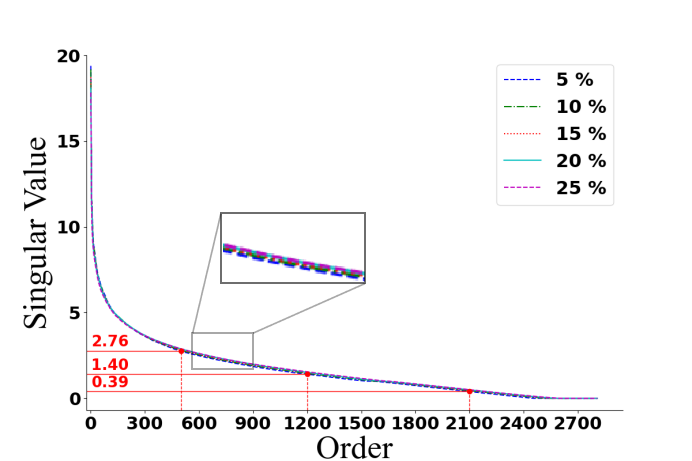}
\end{minipage}
}

\subfigure[Singular-Value decrease (w.r.t. Metattack)] % 为子图添加标题
{
\begin{minipage}[d]{.3\linewidth}
\centering
\includegraphics[width=1.25in, height=0.75in]{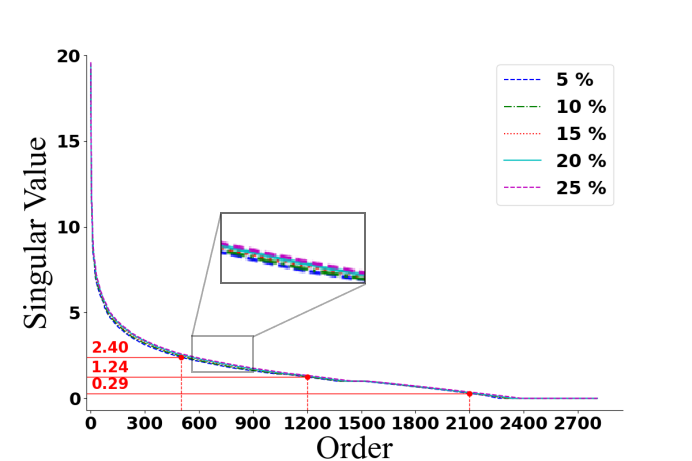}
\end{minipage}
}
\hfill % 在这里插入水平填充，使子图之间有间距
\subfigure[Singular-Value decrease (w.r.t. CE-PGD)] % 为子图添加标题
{
\begin{minipage}[e]{.3\linewidth}
\centering
\includegraphics[width=1.25in, height=0.75in]{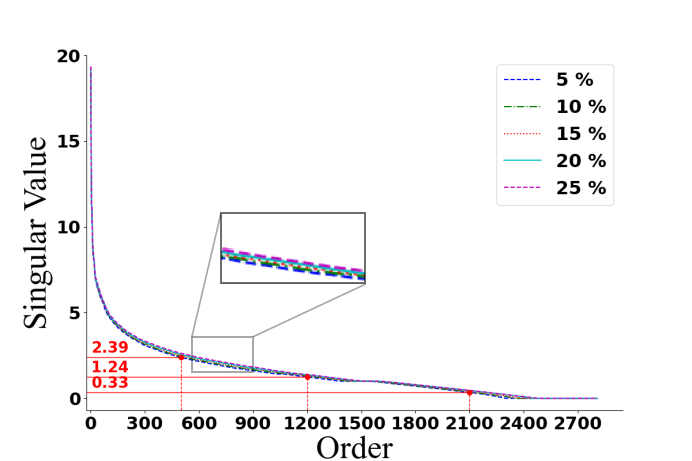}
\end{minipage}
}
\hfill % 在这里插入水平填充
\subfigure[Singular-Value decrease (w.r.t. DICE)] % 为子图添加标题
{
\begin{minipage}[f]{.3\linewidth}
\centering
\includegraphics[width=1.25in, height=0.75in]{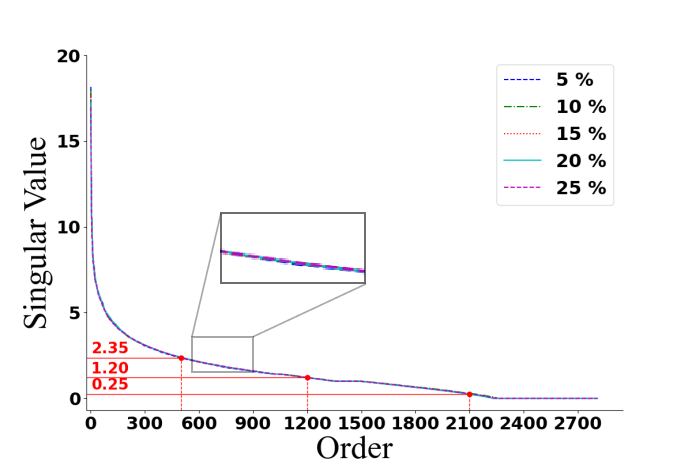}
\end{minipage}
}
\caption{Singular-Value variation of Cora\_ML} 
\label{fig:sv cora_ml}
\end{figure}
\begin{figure}[tb]
	\centering
	\subfigure[Singular-Value under Metattack] % 为子图添加标题
	{
		\begin{minipage}[b]{.3\linewidth}
			\centering
			\includegraphics[width=1.25in, height=0.75in]{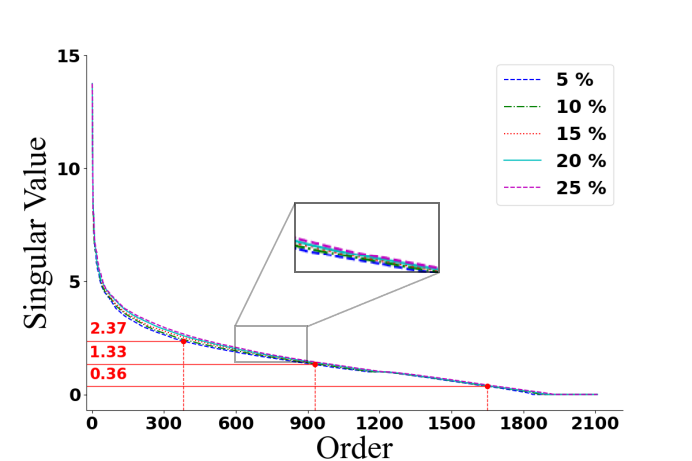}
		\end{minipage}
	}
	\hfill % 在这里插入水平填充，使子图之间有间距
	\subfigure[Singular-Value under CE-PGD] % 为子图添加标题
	{
		\begin{minipage}[b]{.3\linewidth}
			\centering
			\includegraphics[width=1.25in, height=0.75in]{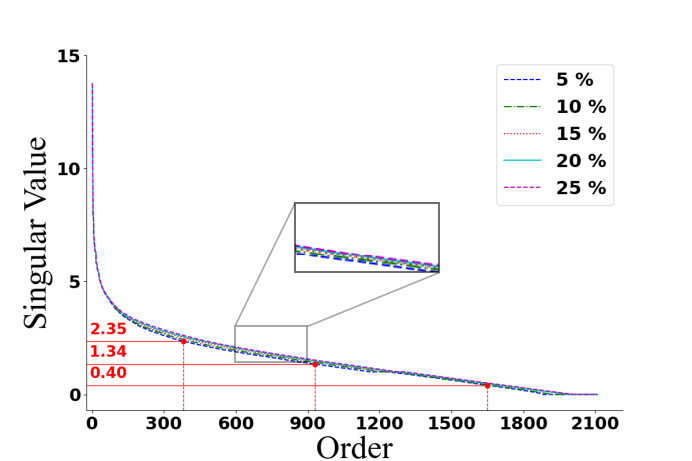}
		\end{minipage}
	}
	\hfill % 在这里插入水平填充
	\subfigure[Singular-Value under DICE] % 为子图添加标题
	{
		\begin{minipage}[b]{.3\linewidth}
			\centering
			\includegraphics[width=1.25in, height=0.75in]{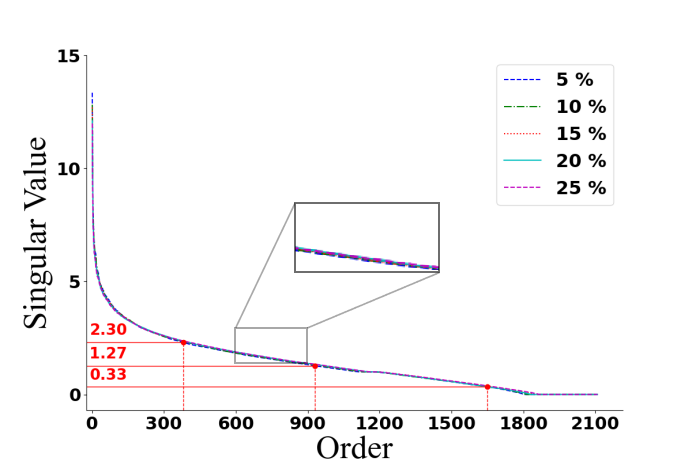}
		\end{minipage}
	}
	
	\subfigure[Singular-Value decrease (w.r.t. Metattack)] % 为子图添加标题
	{
		\begin{minipage}[d]{.3\linewidth}
			\centering
			\includegraphics[width=1.25in, height=0.75in]{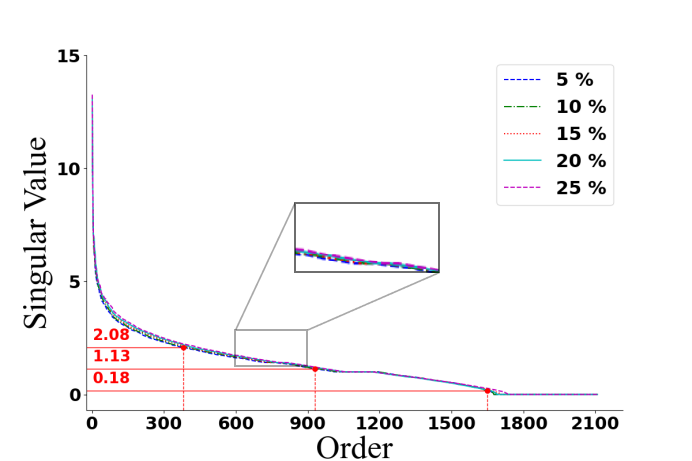}
		\end{minipage}
	}
	\hfill % 在这里插入水平填充，使子图之间有间距
	\subfigure[Singular-Value decrease (w.r.t. CE-PGD)] % 为子图添加标题
	{
		\begin{minipage}[e]{.3\linewidth}
			\centering
			\includegraphics[width=1.25in, height=0.75in]{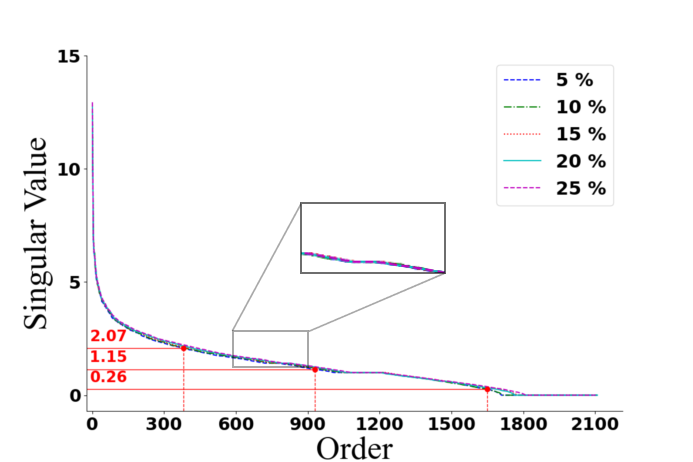}
		\end{minipage}
	}
	\hfill % 在这里插入水平填充
	\subfigure[Singular-Value decrease (w.r.t. DICE)] % 为子图添加标题
	{
		\begin{minipage}[f]{.3\linewidth}
			\centering
			\includegraphics[width=1.25in, height=0.75in]{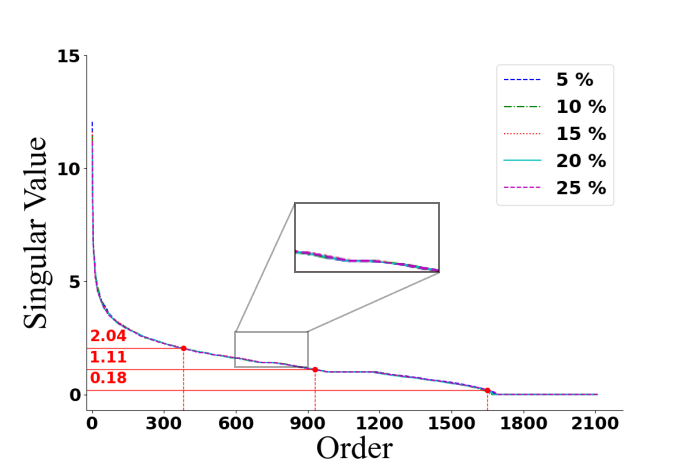}
		\end{minipage}
	}
	
	\caption{Singular-Value variation of Citeseer}
	\label{fig:sv citeseer}
\end{figure}
%特征平滑1
\begin{figure}[tb]
	\centering
	\subfigure[Feature-Smoothness under Metattack] % 为子图添加标题
	{
		\begin{minipage}[b]{.3\linewidth}
			\centering
			\includegraphics[width=1.25in, height=0.7in]{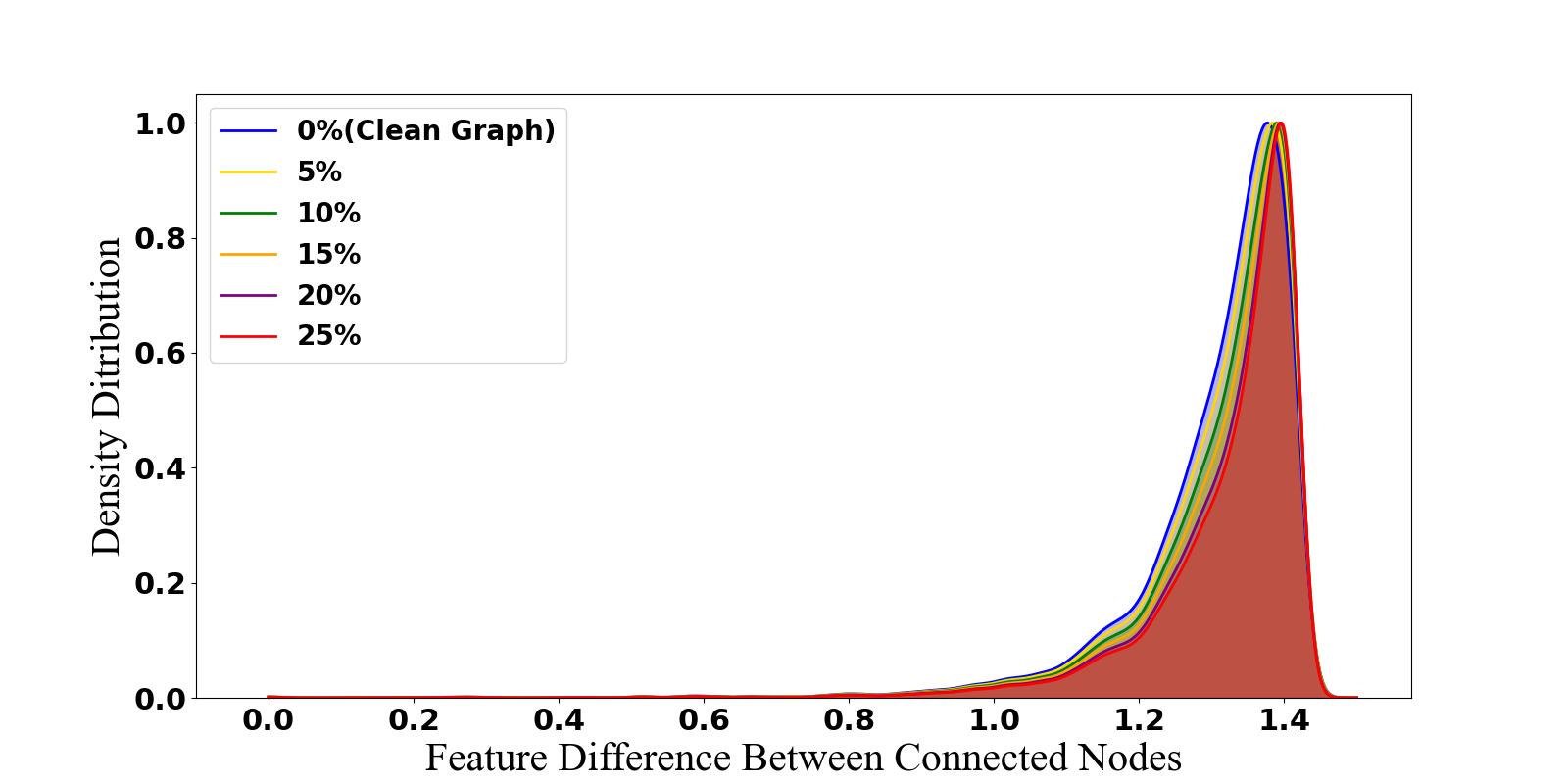}
		\end{minipage}
	}
	\hfill
	\subfigure[Feature-Smoothness under CE-PGD] % 为子图添加标题
	{
		\begin{minipage}[b]{.3\linewidth}
			\centering
			\includegraphics[width=1.25in, height=0.7in]{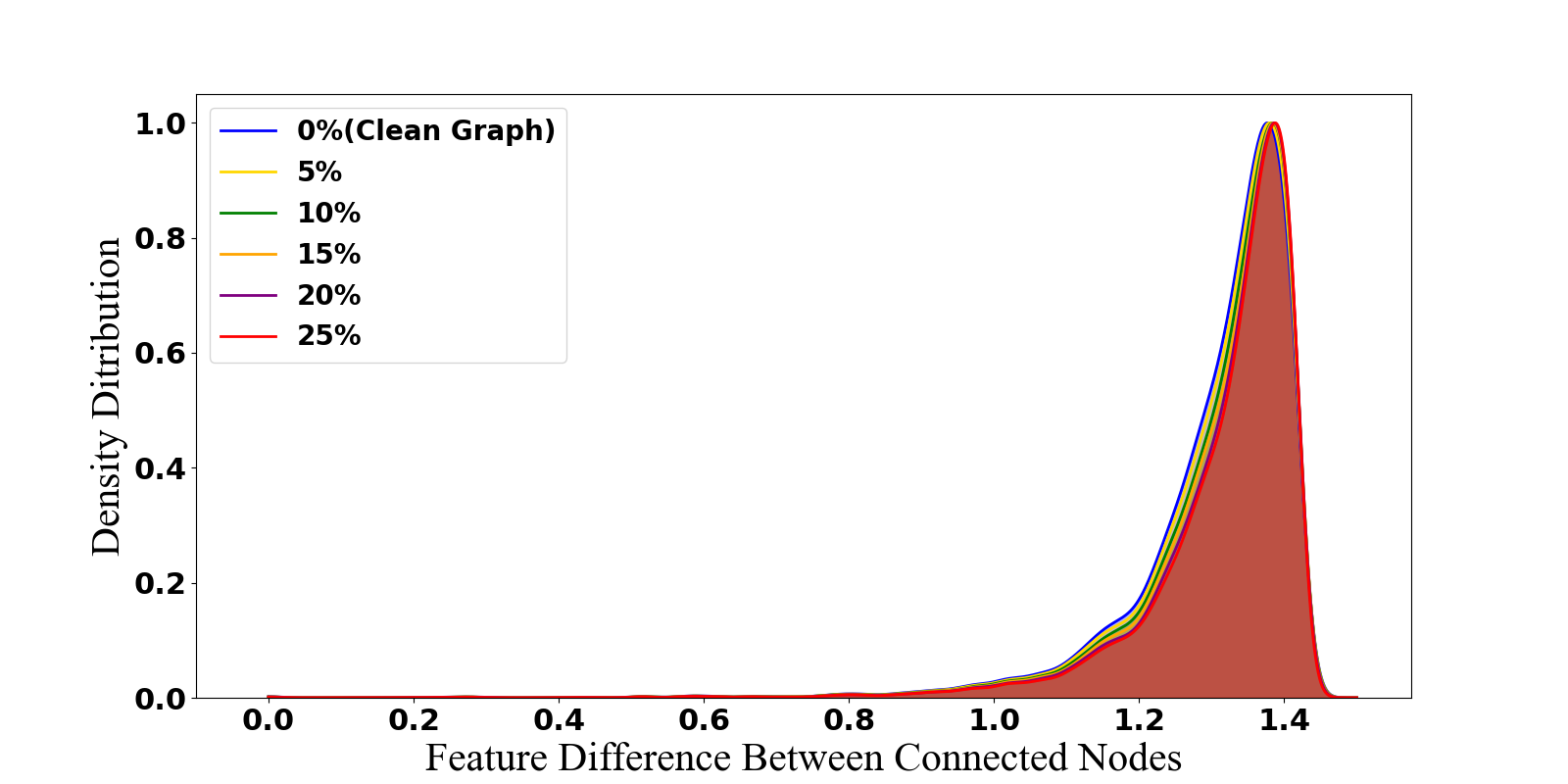}
		\end{minipage}
	}
	\hfill
	\subfigure[Feature-Smoothness under DICE] % 为子图添加标题
	{
		\begin{minipage}[b]{.3\linewidth}
			\centering
			\includegraphics[width=1.25in, height=0.7in]{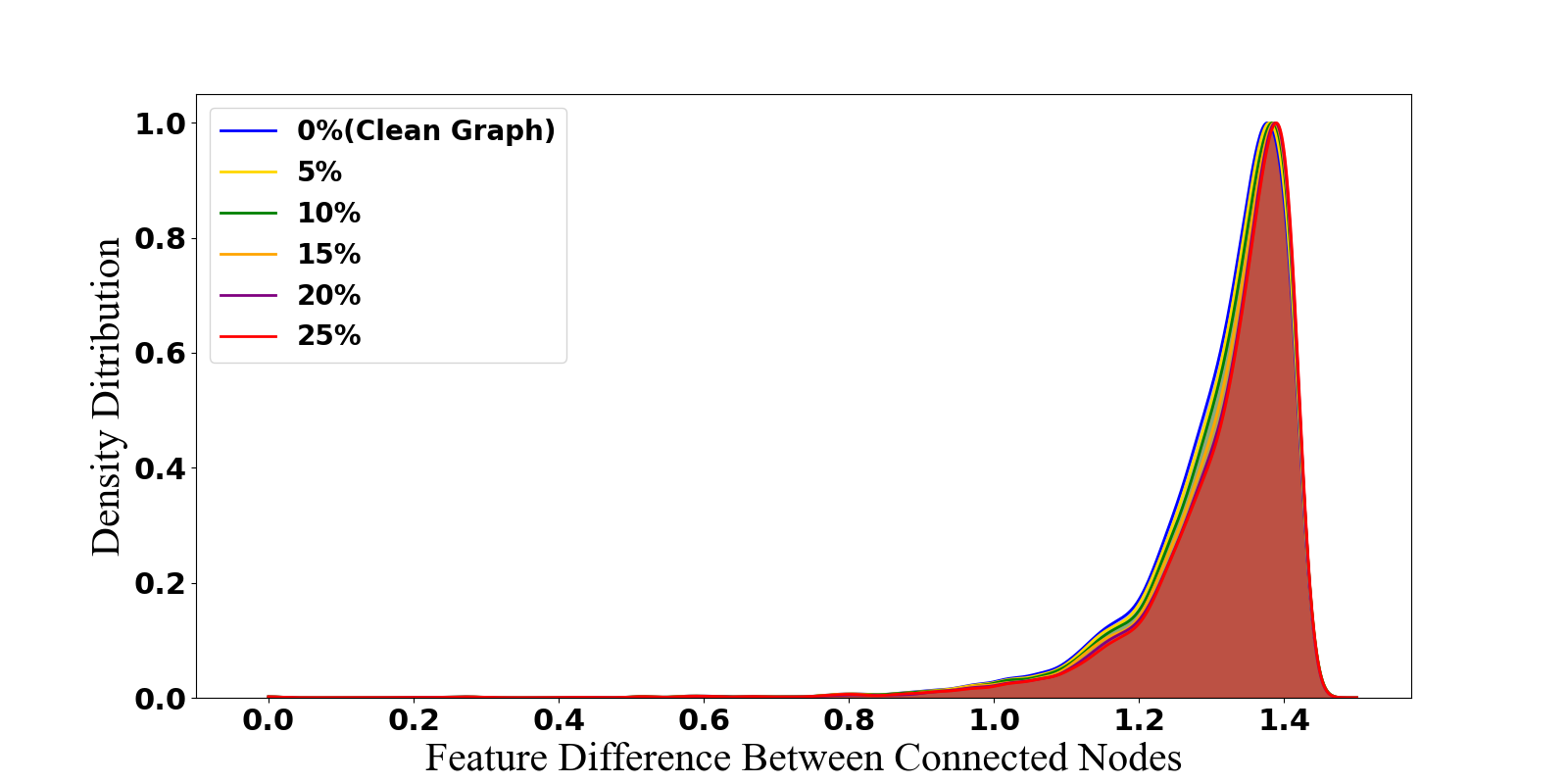}
		\end{minipage}
	}
	
	\subfigure[Feature-Smoothness: Metattack vs. TopFeaRe] % 为子图添加标题
	{
		\begin{minipage}[b]{.3\linewidth}
			\centering
			\includegraphics[width=1.25in, height=0.7in]{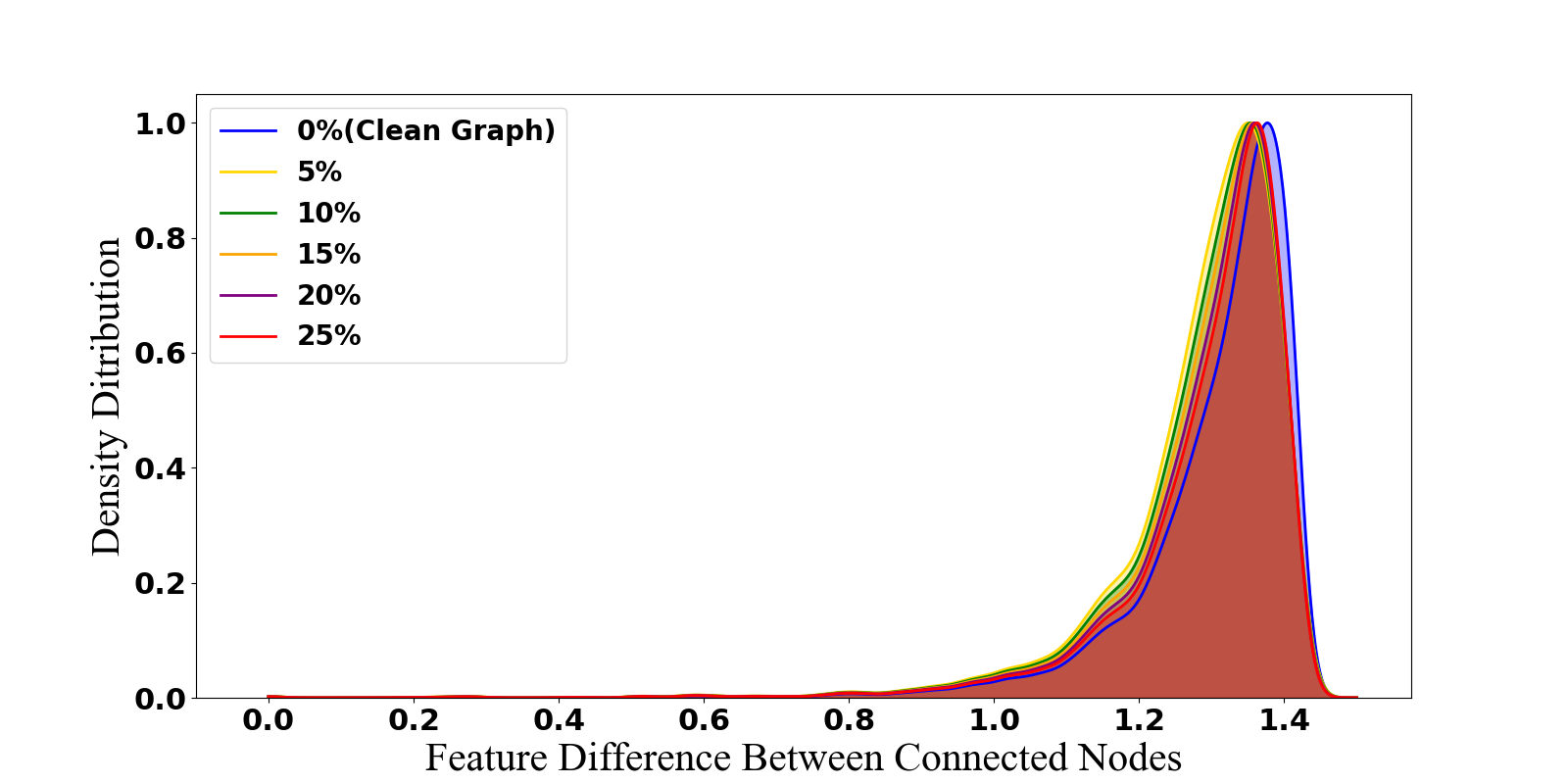}
		\end{minipage}
	}
	\hfill
	\subfigure[Feature-Smoothness: CE-PGD vs. TopFeaRe] % 为子图添加标题
	{
		\begin{minipage}[b]{.3\linewidth}
			\centering
			\includegraphics[width=1.25in, height=0.7in]{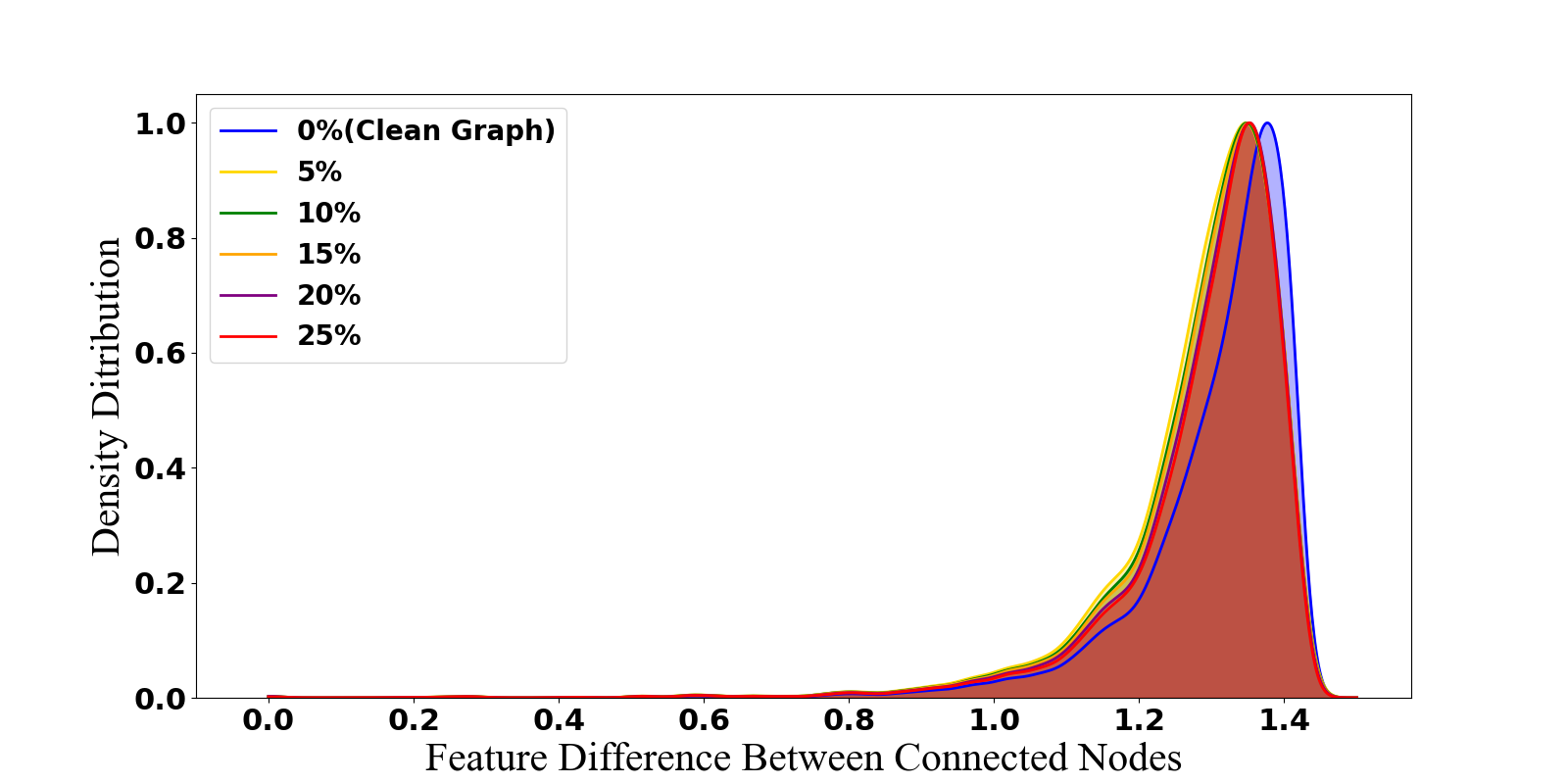}
		\end{minipage}
	}
	\hfill
	\subfigure[Feature-Smoothness: DICE vs. TopFeaRe] % 为子图添加标题
	{
		\begin{minipage}[b]{.3\linewidth}
			\centering
			\includegraphics[width=1.25in, height=0.7in]{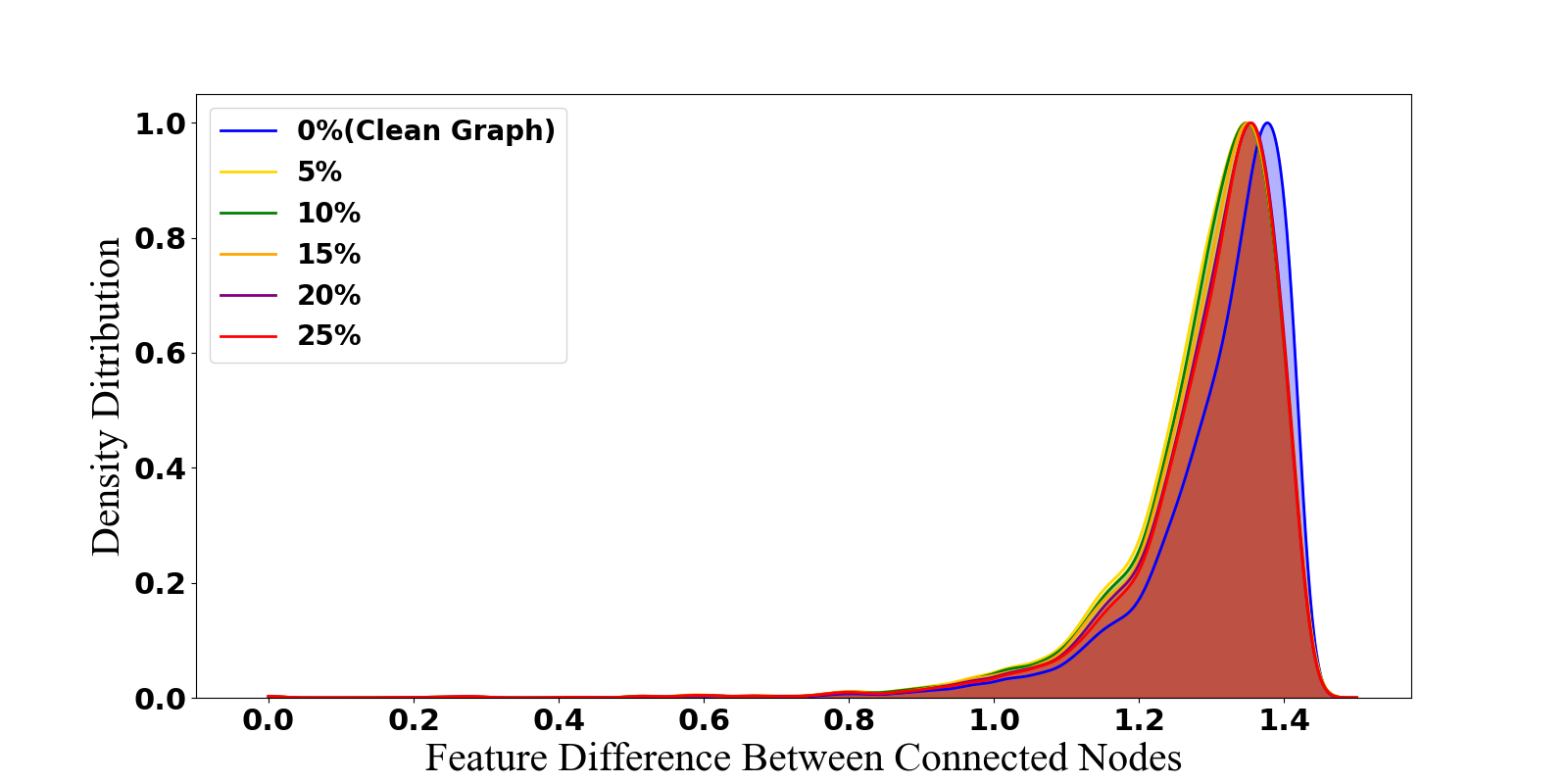}
		\end{minipage}
	}
	
	\caption{Feature-Smoothness variation of Cora\_ML} 
	\label{fig:fs cora_ml}
\end{figure}
\begin{figure}[tb]
	\centering
	\subfigure[Feature-Smoothness under Metattack] % 为子图添加标题
	{
		\begin{minipage}[b]{.3\linewidth}
			\centering
			\includegraphics[width=1.25in, height=0.7in]{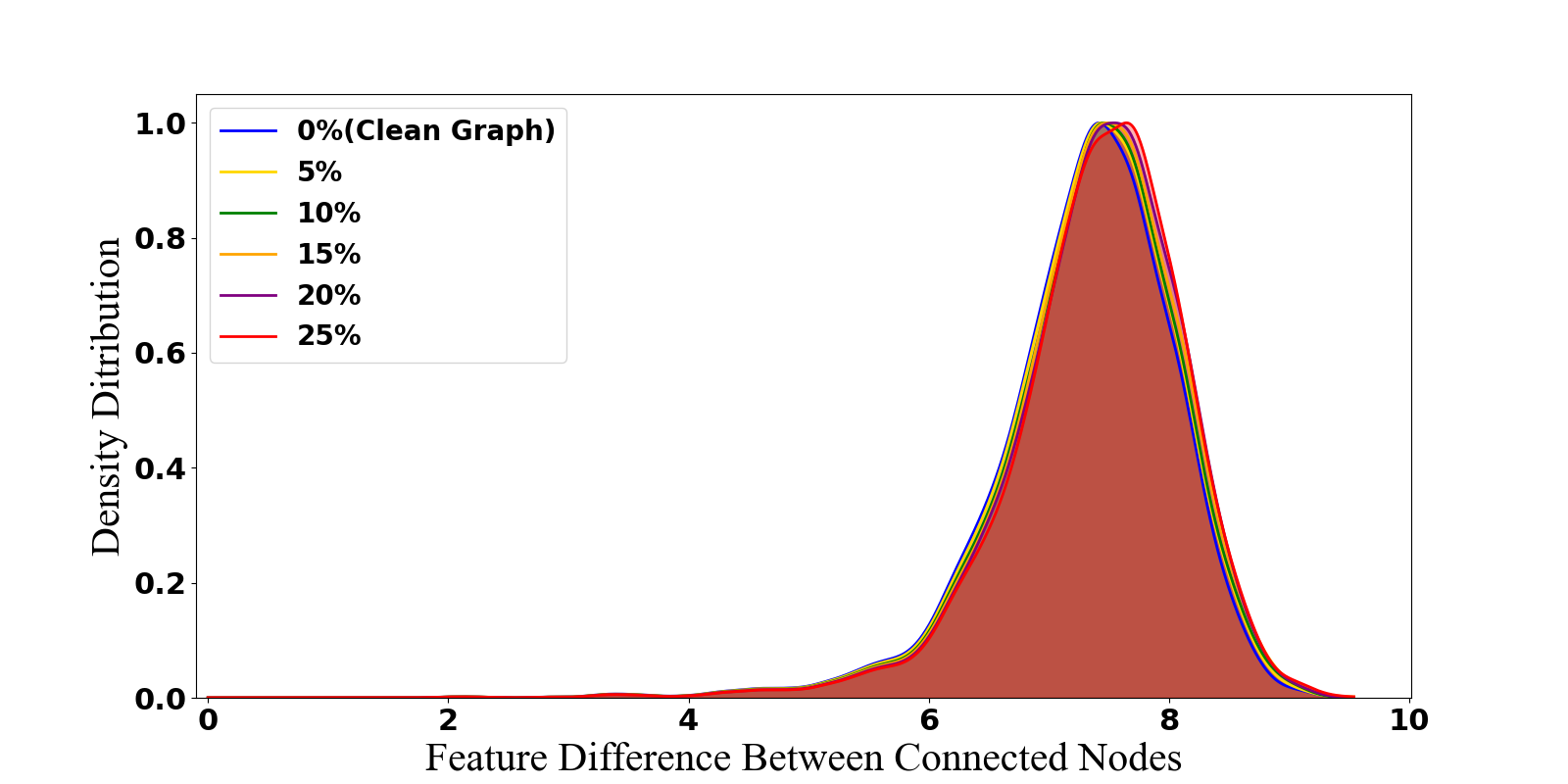}
		\end{minipage}
	}
	\hfill
	\subfigure[Feature-Smoothness under CE-PGD] % 为子图添加标题
	{
		\begin{minipage}[b]{.3\linewidth}
			\centering
			\includegraphics[width=1.25in, height=0.7in]{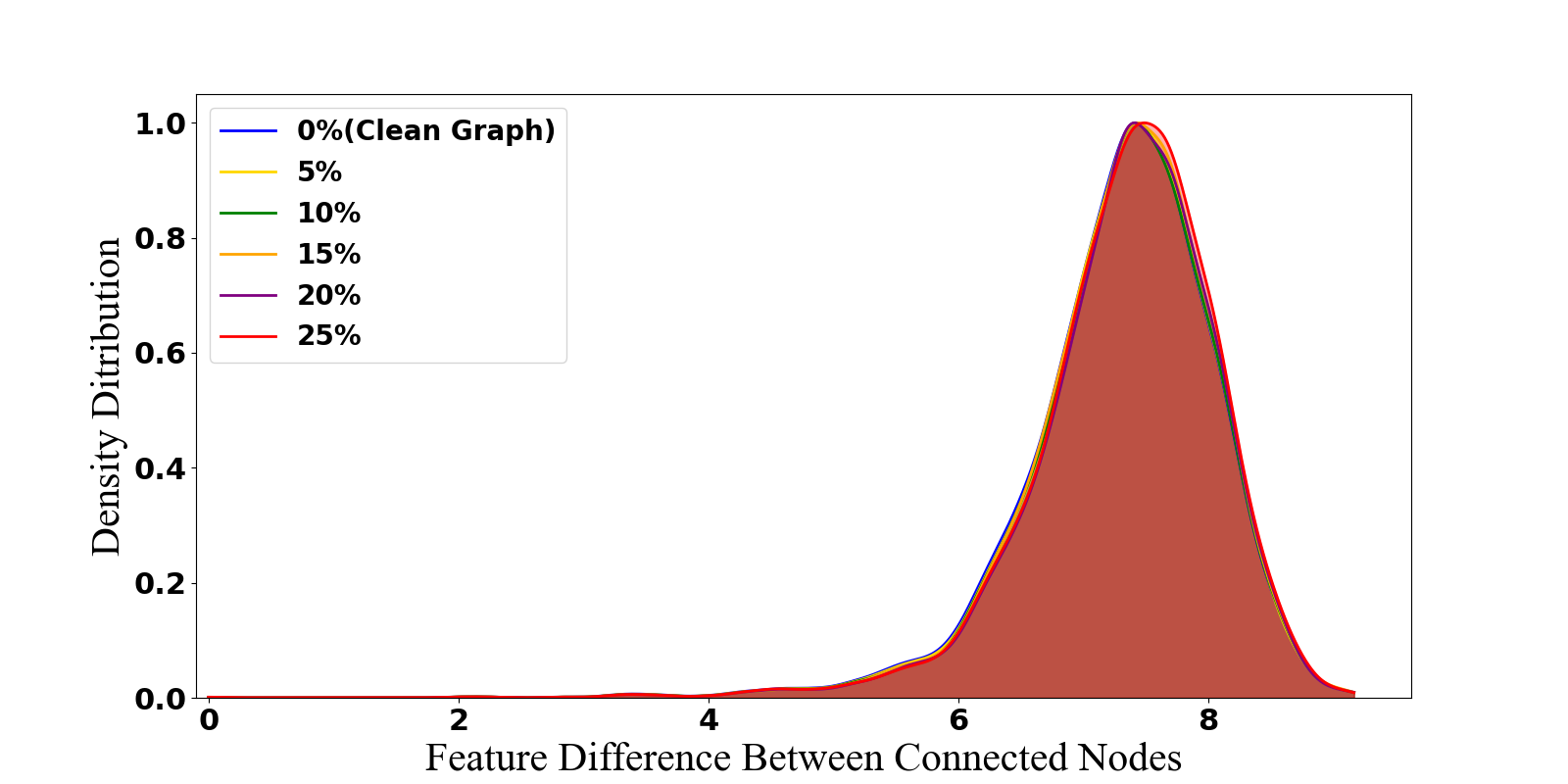}
		\end{minipage}
	}
	\hfill
	\subfigure[Feature-Smoothness under DICE] % 为子图添加标题
	{
		\begin{minipage}[b]{.3\linewidth}
			\centering
			\includegraphics[width=1.25in, height=0.7in]{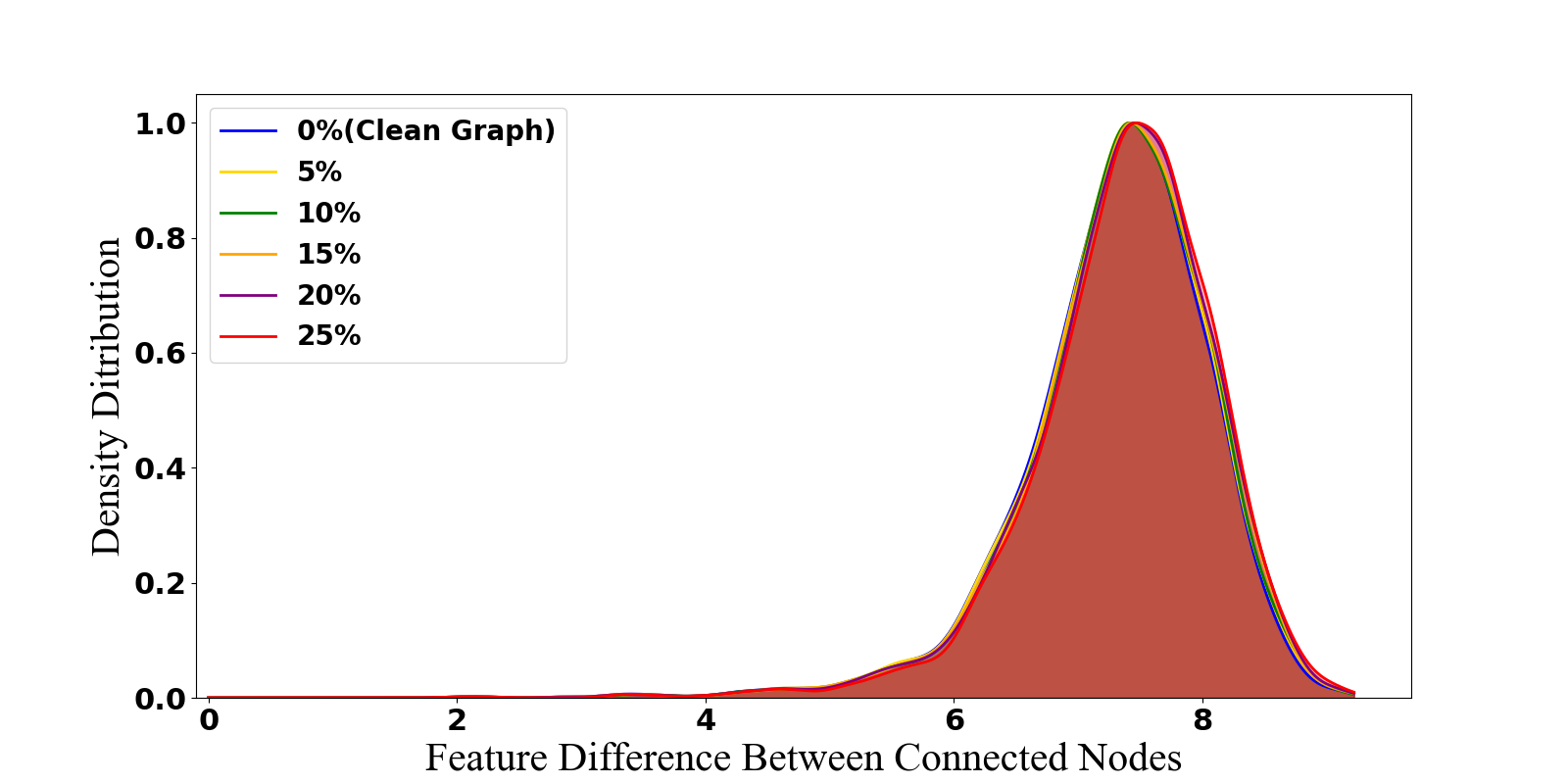}
		\end{minipage}
	}
	
	\subfigure[Feature-Smoothness: Metattack vs. TopFeaRe] % 为子图添加标题
	{
		\begin{minipage}[b]{.3\linewidth}
			\centering
			\includegraphics[width=1.25in, height=0.7in]{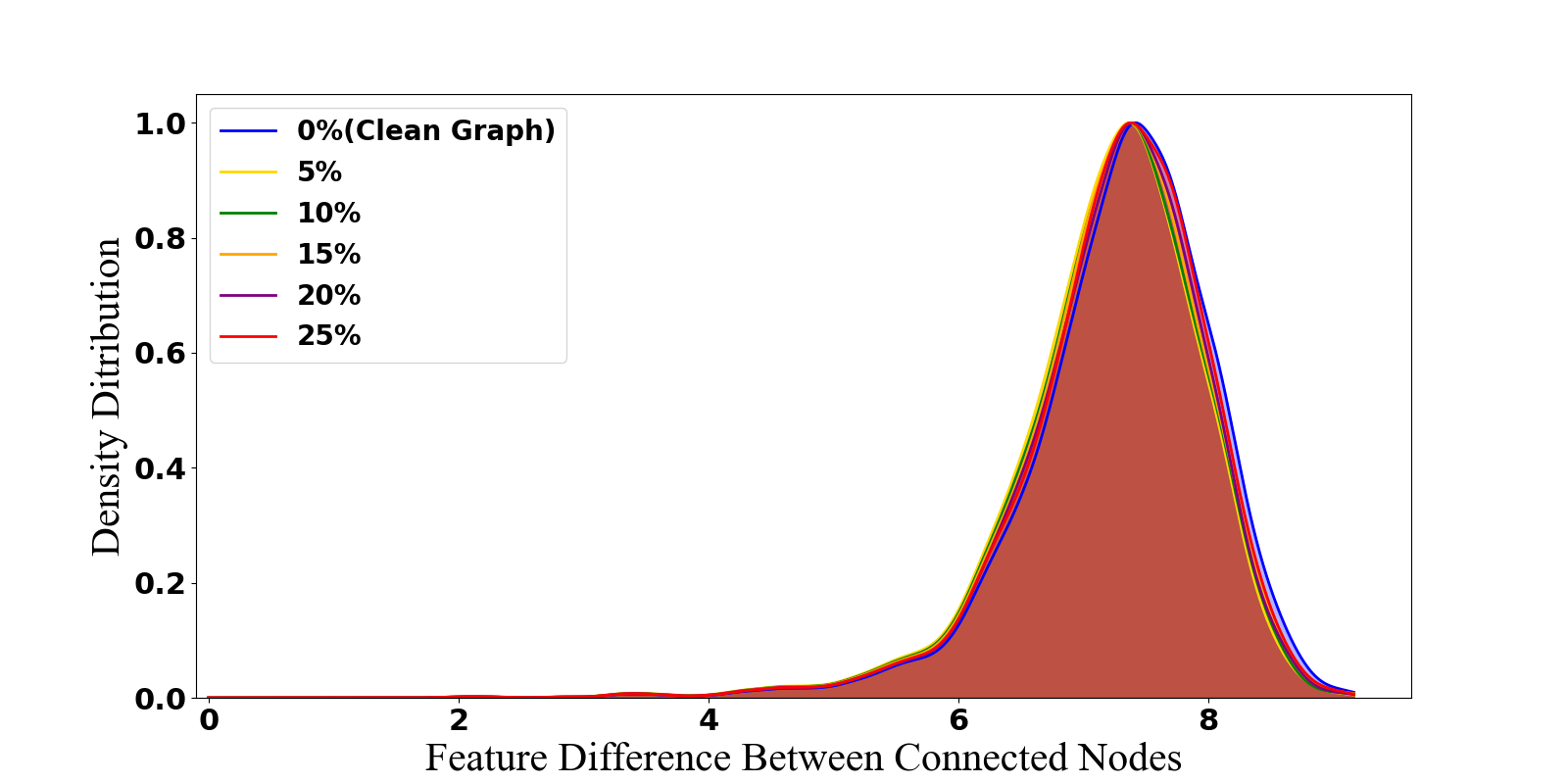}
		\end{minipage}
	}
	\hfill
	\subfigure[Feature-Smoothness: CE-PGD vs. TopFeaRe] % 为子图添加标题
	{
		\begin{minipage}[b]{.3\linewidth}
			\centering
			\includegraphics[width=1.25in, height=0.7in]{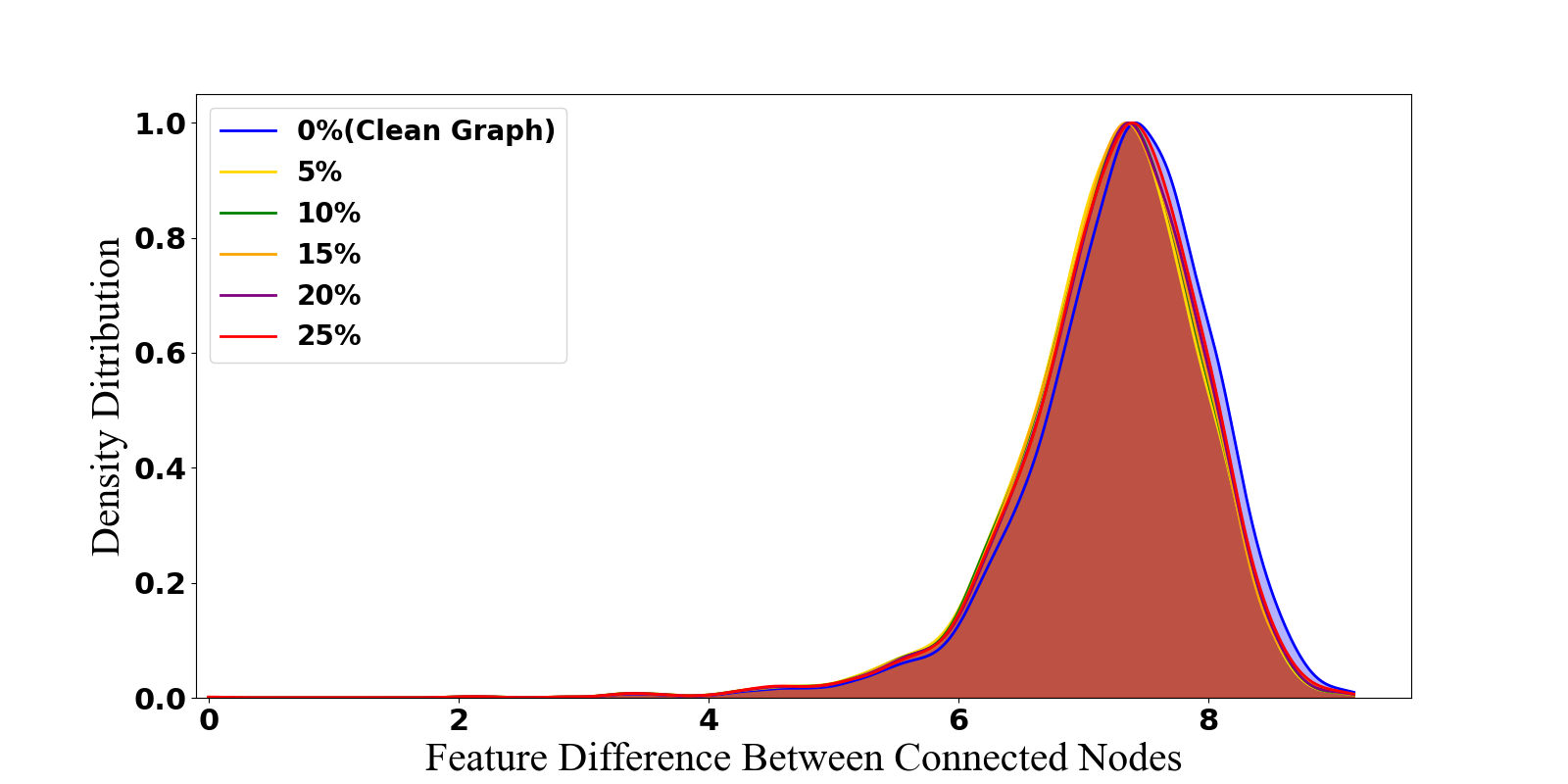}
		\end{minipage}
	}
	\hfill
	\subfigure[Feature-Smoothness: DICE vs. TopFeaRe] % 为子图添加标题
	{
		\begin{minipage}[b]{.3\linewidth}
			\centering
			\includegraphics[width=1.25in, height=0.7in]{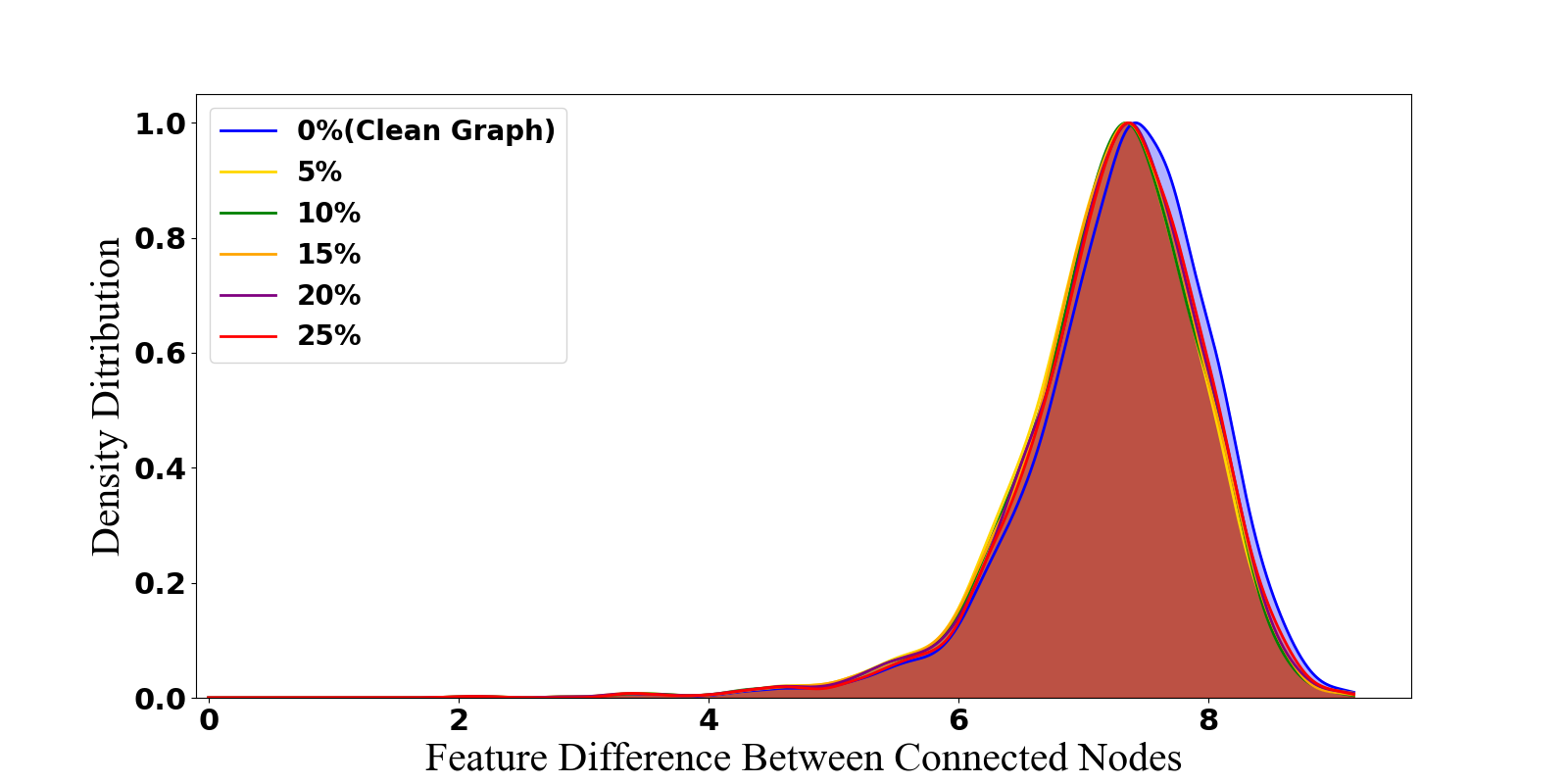}
		\end{minipage}
	}
	
	\caption{Feature-Smoothness variation on Citeseer}
	\label{fig:fs citeseer}
\end{figure}

\begin{table*}[htbp]
\centering
%\caption{ACCURACY OF NODE CLASSIFICATION USING NETTACK (\textbf{BOLD}-THE BEST)}
\caption{Accuracy of node classification using Nettack}
\label{tab:Citeseer_Nettack}
\begin{tabular}{|>{\centering\arraybackslash}m{0.6cm}|>{\centering\arraybackslash}m{1.95cm}|>{\centering\arraybackslash}m{1.98cm}|>{\centering\arraybackslash}m{1.98cm}|>{\centering\arraybackslash}m{1.98cm}|>{\centering\arraybackslash}m{1.98cm}|>{\centering\arraybackslash}m{1.98cm}|>{\centering\arraybackslash}m{1.98cm}|}
\hline
\textbf{\makecell[c]{Data\\set}} & \diagbox{{\textbf{Mod.}}}{{\textbf{Per.}}} & \textbf{0} & \textbf{1} & \textbf{2} & \textbf{3} & \textbf{4} & \textbf{5}\\
\hline

\multirow{6}{*}{\makecell[c]{Cite\\seer}} & GCN & 0.8344$\pm$0.0344 & 0.8309$\pm$0.0283 & 0.8236$\pm$0.0210 & 0.8330$\pm$0.0202 & 0.8375$\pm$0.0201 & 0.8300$\pm$0.0162\\ 
\cline{2-8} 
& GAT & 0.7303$\pm$0.0095 & 0.7298$\pm$0.0111 & 0.7322$\pm$0.0115 & 0.7323$\pm$0.0150 & 0.7358$\pm$0.0148 & 0.7277$\pm$0.0186\\ 
\cline{2-8}
& HANG & 0.8390$\pm$0.0188 & 0.8371$\pm$0.0223 & 0.8372$\pm$0.0219 & 0.8356$\pm$0.0235 & 0.8331$\pm$0.0229 & 0.8327$\pm$0.0342\\ 
\cline{2-8}
& GCN-SVD & 0.8297$\pm$0.0318 & 0.8174$\pm$0.0233 & 0.8141$\pm$0.0299 & 0.8186$\pm$0.0277 & 0.8216$\pm$0.0217 & 0.8138$\pm$0.0353\\ 
\cline{2-8} 
& GCN-Jaccard & 0.8374$\pm$0.0266 & 0.8356$\pm$0.0180 & 0.8282$\pm$0.0189 & 0.8311$\pm$0.0206 & 0.8269$\pm$0.0140 & 0.8362$\pm$0.0240\\ 
\cline{2-8}
& TopFeaRe &  {\textbf{0.8617$\pm$0.0247}} \(\textcolor{red}{(\uparrow \textbf{2.27\%})}\) & {\textbf{0.8544$\pm$0.0271}} \(\textcolor{red}{(\uparrow \textbf{1.73\%})}\) & {\textbf{0.8573$\pm$0.0247}} \(\textcolor{red}{(\uparrow \textbf{2.01\%})}\) & {\textbf{0.8588$\pm$0.0264}} \(\textcolor{red}{(\uparrow \textbf{2.32\%})}\) & {\textbf{0.8466$\pm$0.0286}} \(\textcolor{red}{(\uparrow \textbf{0.91\%})}\) & {\textbf{0.8491$\pm$0.0278}} \(\textcolor{red}{(\uparrow \textbf{1.29\%})}\) \\
\hline  
\end{tabular}
\end{table*}

\subsection{Computational Complexity}
Assume $E(n)$ is the edge function with respect to the node number $n$, the computational complexity depends on the number of edges, i.e. $O(E(n))$. Furthermore, consider that graph is perturbed pursuant to $RAP$, thus the total complexity can be formulated as $O(E(n + n * RAP))$. The time overhead is sketched in Fig. \ref{fig:TimeOverhead} as graph size enlarges at RAPs 5\%, 15\%.

\begin{figure}[bt]
\centering
\includegraphics[width=1.8in, height=1.2in]{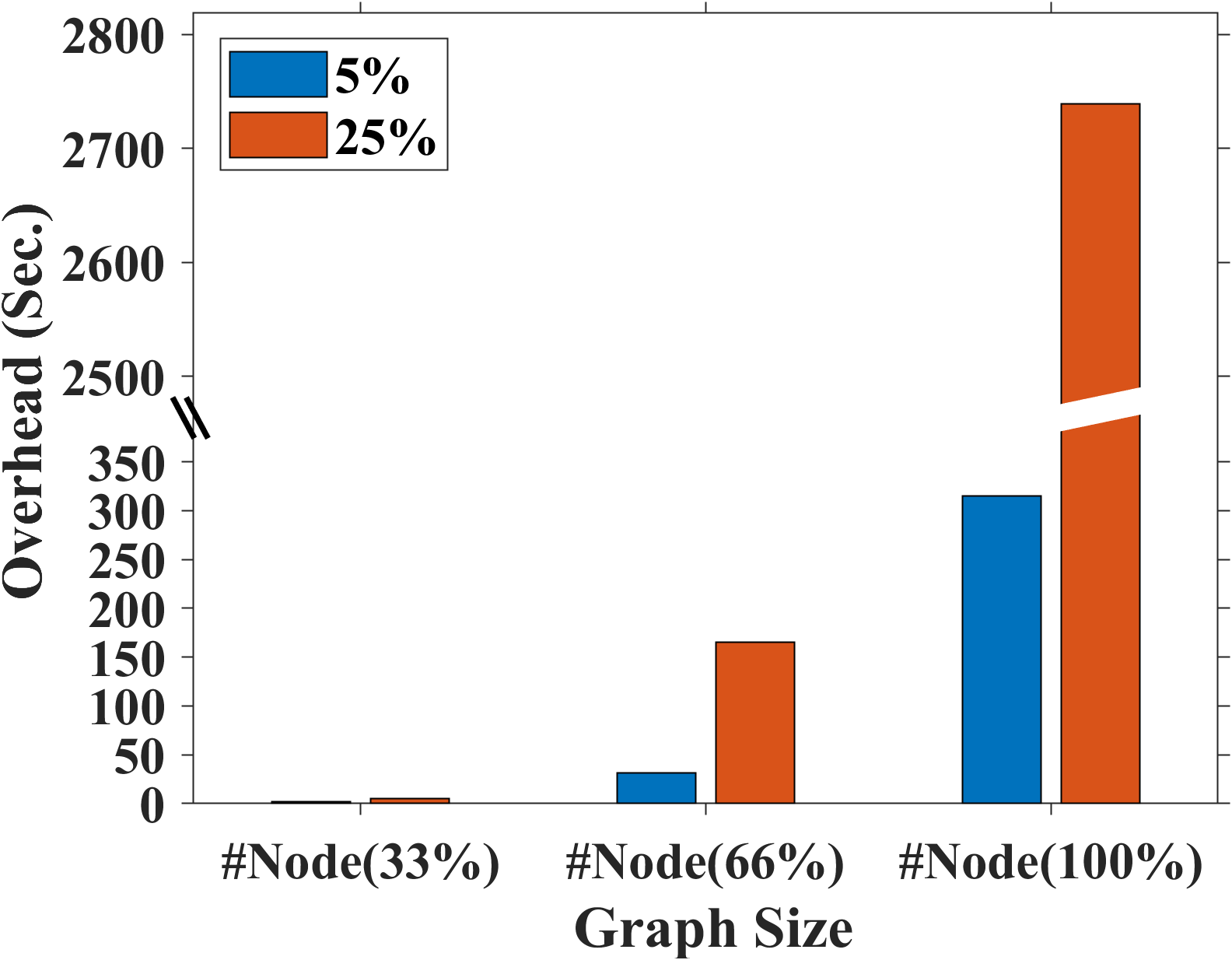}
\caption{The time overhead as graph (Citeseer) size enlarges}
\label{fig:TimeOverhead}
\end{figure}

\section{Related Work}
\begin{sloppypar}
Currently, graph data is facing serious adversarial attacks, primarily due to the complex dependencies (links) between graph regime. For example, from a global perspective, the graph structure can be manipulated to address the collective classification problem \cite{WangNeil19}. 
The attack Metattack employs a meta-learning strategy to attack nodes and edges, in addition to adjusting weights, it can rapidly raise the rank of the adjacency matrix and tend to establish connections between different nodes. Xu et al. \cite{pgd} utilize the popular attack PGD based on negative cross-entropy (CE-PGD) and the strong-attack CW (CW-PGD) to generate topological perturbations. Furthermore, from the perspective of network communities, DICE \cite{dice} attacks social networks by deliberately removing internal connections and establishing external connections, thereby reducing the effectiveness of node classification. More seriously, for training graph data, even if it is inaccessible, certain key attributes (such as node degree and subgraph structure) can still be explored through Property Inference Attacks (PIA) \cite{PIA} and Structural Member Inference Attacks (SMIA) \cite{SMIA}. He et al. \cite{HeMichael21} proposed a method for stealing graph links from training datasets under black-box access to GNN models. Overall, large graphs are regarded as complex systems with the entanglement of topology and features, and the numerous interactions (information transmission) between nodes make it extremely difficult to defend against persistent adversarial perturbations.
\par
Correspondingly, the adversarial defense mainly focuses on two aspects: graph itself and GNN model. The former enhances adversarial robustness by modifying the graph topology and node features. For example, from a preprocessing perspective, Wu et al. \cite{jaccard} removed dissimilar edges by comparing them with a preset threshold based on Jaccard similarity. GCN-SVD \cite{svd}, also as a preprocessing method, utilizes Singular Value Decomposition (SVD) to decompose the adjacency matrix of the perturbed graph, obtaining a low-rank approximation that represents a cleaner graph. Another research direction aims to resist adversarial perturbations by enhancing the robustness of GNN (neural-flow) models. For instance, the core idea of ProGNN \cite{JinMa20} is to combine the training process of GNNs with the structural properties of the graph, simultaneously learning a new structural graph and a robust GNN on the perturbed graph to ensure that the model can withstand attacks to some extent. At present, the neural ODEs have been successfully applied to GNNs, achieving this by modeling information exchange. References \cite{r1} and \cite{r2} treat the message passing process as a heat diffusion model, while references \cite{r3} and \cite{r4} view it as Beltrami diffusion. Additionally, for graph nodes, reference \cite{r5} employs an oscillator model to describe nodes and conducts the message passing process under the guidance of coupled ODEs.
\end{sloppypar}

\section{Conclusion}
We gain an in-depth understanding of attack patterns of GAAs and develop corresponding defense mechanism. GAAs tend to disrupt the connection and feature relationships among nodes, leading us to believe that a 1D simplified model is insufficient to adequately describe graph changes in the context. Specifically, for a given graph, we adopt weighting approach to construct a 2D model with two dynamic variables: one to describe topology adversarial resilience and the other to feature adversarial resilience. We demonstrate this 2D simplified model effectively captures the essential characteristics of disrupted graphs under three non-targeted and one targeted attacks, serving as a general paradigm for understanding attack patterns. On the other hand, inspired by the 2D model, we propose a graph purification strategy that effectively defends against various GAAs, showcasing its effectiveness across different datasets and models. In summary, our work sheds light on the potential adversarial defenses from both theoretical and practical angles as the future direction.
%%-------------------------------------------------------------------------------

%%-------------------------------------------------------------------------------
%
%The USENIX latex style is old and very tired, which is why
%there's no \textbackslash{}acks command for you to use when
%acknowledging. Sorry.
%
%\textbf{Do not include any acknowledgements in your submission which may deanonymize you (e.g., because of specific affiliations or grants you acknowledge)}

%-------------------------------------------------------------------------------
% optional clearing of the page
%\cleardoublepage
\appendix

\section*{Open Science}
%\textbf{Within up to one page, this appendix must list all artifacts necessary to evaluate the contribution of the paper and make clear how the review committees can access each artifact. This appendix must have exactly this title, otherwise you will risk desk rejection. }

In our work, all the experiments of ours and baselines are performed in an open-source benchmark for adversarial attack and defense, thus, the experimental results are reproducible. %We publicize our source code at Github to enhance sharing on our scientific findings once this work accepted, also guarantee its availability and functionality, as stated by the Artifact Evaluation committee. 

\textbf{Data Collection and Use.} The five datasets used in our experiments, i.e. Cora\_ML, Cora, Citeseer Amazon Photo, and PubMed are all open-sourced and commonly-used in academic domain. 
%Polblogs is a political-blogosphere relationship graph without the blog features involved: \url{https://pytorch-geometric.readthedocs.io/en/latest/generated/torch\_geometric.datasets.PolBlogs.html}  
%They all can be downloaded directly from Internet:\url{http://www.cs.umd.edu/~sen/lbc-proj/LBC.html}. 
The five datasets are only used for the experiment evaluation and performance analytics in this paper.

\textbf{Code and Datasets.}
%We open-source the code at Github: \url{https://github.com/Anony4OpenScience/TopFeaRe}.
We open-source code and datasets at Zenodo: \url{https://doi.org/10.5281/zenodo.17920431}.

\section*{Acknowledgments}
This work has been supported by Strategic Priority Research Program of CAS under Grant No. XDB0680301. 

% optional clearing of the page
%\cleardoublepage
\bibliographystyle{plain}
\bibliography{Mybib}

\appendix
\section {Boundary of Perturbation-Domain} \label{Prf_PerturbDomain}
\textbf{Proof.}
According to Theorem \ref{Theorem: AsymptoticStability}, there exists a diagonal matrix $M$ such that the perturbed output $\mathord{\buildrel{\lower3pt\hbox{$\scriptscriptstyle\rightharpoonup$}}\over y} \left( t \right)$ and its associated non-linear mapped input $ \phi \left( {\mathord{\buildrel{\lower3pt\hbox{$\scriptscriptstyle\rightharpoonup$}} \over y} \left( t \right)} \right) $ satisfy the condition $ \mathord{\buildrel{\lower3pt\hbox{$\scriptscriptstyle\rightharpoonup$}} 
\over \phi } \left( {\mathord{\buildrel{\lower3pt\hbox{$\scriptscriptstyle\rightharpoonup$}} 
\over y} \left( t \right)} \right)^T  \cdot \left[ {\mathord{\buildrel{\lower3pt\hbox{$\scriptscriptstyle\rightharpoonup$}} 
\over y} \left( t \right) - M\mathord{\buildrel{\lower3pt\hbox{$\scriptscriptstyle\rightharpoonup$}} 
\over \phi } \left( {\mathord{\buildrel{\lower3pt\hbox{$\scriptscriptstyle\rightharpoonup$}} 
\over y} \left( t \right)} \right)} \right] \geqslant 0 $. The diagonal elements of the diagonal $ M $ are influenced by $\phi \left( {\mathord{\buildrel{\lower3pt\hbox{$\scriptscriptstyle\rightharpoonup$}} \over y} \left( t \right)} \right) $. Using Eq. \eqref{eq:12}, $\frac{\left({{x}_{Gra}}\right)^{h}}{1+ \left({{x}_{Gra}}\right)^{h}}$ is deemed as the nonlinear perturbation input, and correspondingly $\phi \left( {\mathord{\buildrel{\lower3pt\hbox{$\scriptscriptstyle\rightharpoonup$}} \over y} \left( t \right)} \right) = \frac{\left({{x}_{Gra}}\right)^{h}}{1+ \left({{x}_{Gra}}\right)^{h}}$ and $ \mathord{\buildrel{\lower3pt\hbox{$\scriptscriptstyle\rightharpoonup$}}\over y} \left( t \right) = {x}_{Gra}$. Given that $ \phi \left( {\mathord{\buildrel{\lower3pt\hbox{$\scriptscriptstyle\rightharpoonup$}} \over y} \left( t \right)} \right)^T > 0 $ and $ {\mathord{\buildrel{\lower3pt\hbox{$\scriptscriptstyle\rightharpoonup$}} 
\over y} \left( t \right) - M\mathord{\buildrel{\lower3pt\hbox{$\scriptscriptstyle\rightharpoonup$}} 
\over \phi } \left( {\mathord{\buildrel{\lower3pt\hbox{$\scriptscriptstyle\rightharpoonup$}} 
\over y} \left( t \right)} \right)} \geqslant 0 $, the inequality $\frac{\phi \left( {\mathord{\buildrel{\lower3pt\hbox{$\scriptscriptstyle\rightharpoonup$}} \over y} \left( t \right)} \right)}{\mathord{\buildrel{\lower3pt\hbox{$\scriptscriptstyle\rightharpoonup$}} \over y} \left( t \right)} = \frac{\left({{x}_{Gra}}\right)^{h-1}}{1+ \left({{x}_{Gra}}\right)^{h}} \leqslant k$ is inferred. 

\section{Asymptotic Stability of Two-Dimensional Dynamic-Variation Mapping Equation} 
\label{Prf_CondnsFormlaAsymStab}
\textbf{Proof.} According to the definition in Eq. \eqref{eq:19}, it is clear that $-\mathcal{M}$ with a single element, its eigenvalue is simply the element itself. Therefore, the eigenvalue of $\mathbf{A}_r$ is $-\mathcal{M}$, which implies that $\mathbf{A}_r$ is a Hurwitz matrix. At this point, there exists ${\chi }_{r} \geq 0$, and when the condition $(1-\mathcal{M}{\chi }_{r}) \neq 0$ is satisfied, the one-dimensional dynamic equation can be computed as
\begin{equation}
	\begin{aligned}
		\label{eq:22}
		{ \widetilde{G}}_{r}(s) &= -\mathcal{M} + (I+s{\chi}_{r} ){(sI-\mathbf{A}_r) }^{-1}\mathbf{B}_{r} \\ 
		&= \frac{1}{{k}_{r}}+(1-{\chi }_{r}\mathcal{M}){(s+\mathcal{M})}^{-1}\left \langle {\gamma }_{r}\right \rangle+{\chi }_{r}\left \langle {\gamma }_{r}\right \rangle,
	\end{aligned}
\end{equation}
where $s=\sigma +\omega j$ and the complex unit $j = \sqrt{-1}$. First, it is clear that ${\widetilde{G}}_{r}(s)+{\widetilde{G}}_{r}(-s)$ is not identically zero. Moreover, since $\mathcal{M}$ is positive, it can be computed that
\begin{equation}
	\begin{aligned}
		\label{eq:23}
		{\widetilde{G}}_{r}(\omega j)+{{\widetilde{G}}_{r}}^{T}(-\omega j)=\frac{2}{{k}_{r}}+2\left \langle {\gamma }_{r}\right \rangle\frac{\mathcal{M}+{\chi }_{r}{\omega }^{2}}{{\mathcal{M}}^{2}+{\omega }^{2}}.
	\end{aligned}
\end{equation}
Similarly, it can be obtained that
\begin{equation}
	\begin{aligned}
		\label{eq:24}
		{\widetilde{G}}_{q}(\omega j)+{{\widetilde{G}}_{q}}^{T}(-\omega j)=\frac{2}{{k}_{q}}+2\left \langle {\gamma }_{q}\right \rangle\frac{\mathcal{C}+{\chi }_{q}{\omega }^{2}}{{\mathcal{C}}^{2}+{\omega }^{2}}. 
	\end{aligned}
\end{equation}
Therefore, for all real numbers $\omega$, the matrices ${ \widetilde{G}}_{r}(s)$ and ${ \widetilde{G}}_{q}(s)$ are positive definite and strictly positive real.

%\section{Surface of Asymptotic Stability} \label{Surface_ASEP}
%Fig. \ref{fig:q} exhibits the ASEP surface with dataset Cora\_ML.
%\begin{figure}[tb]
%	\centering
%	\hspace*{0pt} 
%	\subfigure[Metattack] 
%	{
%		\begin{minipage}[b]{.3\linewidth}
%			\centering
%			\includegraphics[width=1.4in, height=1.in]{Figure/reductionfigure/meta_cora_ml_f_.png}
%		\end{minipage}
%	}
%	\hfill 
%	\subfigure[CE-PGD] 
%	{
%		\begin{minipage}[b]{.3\linewidth}
%			\centering
%			\includegraphics[width=1.4in, height=1.in]{Figure/reductionfigure/pgd_cora_ml_f_.png}
%		\end{minipage}
%	}
%	\hfill 
%	\subfigure[DICE] 
%	{
%		\begin{minipage}[b]{.3\linewidth}
%			\centering
%			\includegraphics[width=1.4in, height=1.in]{Figure/reductionfigure/dice_cora_ml_f_.png}
%		\end{minipage}
%	}
%	\caption{ASEP surface with Cora\_ML}
%	\label{fig:q}
%\end{figure}

\section {Configuration} \label{EperimentConfig}

\textbf{Datasets.} The experiments are performed on five commonly-used scalable realistic graph datasets%\footnote{https://pytorch-geometric.readthedocs.io/en/latest/cheatsheet/data\_che\\atsheet.html}
: Cora, Cora\_ML, Citeseer, Amazon Photo, and PubMed. %The detailed statistics information is sketched in Table \ref{tab:1}.

\textbf{Baselines.} %Our goal is to enhance the adversarial resilience (robustness) of perturbed graphs by locating the critical state of adversarial resilience regarding the entanglement of graph topology and node feature. 
To evaluate our approach, we employ four types of graph adversarial attacks, including three non-targeted attacks Metattack \cite{r4}, CE-PGD \cite{pgd}, and DICE \cite{dice}, as well as the targeted attack Nettack\cite{nettack}. For the target models, we incorporate three neural network architectures-focused and two preprocessing-based models: %This comprehensive evaluation aims to demonstrate the effectiveness and scalability of our defense strategy.
%\begin{itemize}
%	\begin{sloppypar}
%		\item \textbf{GCN \cite{gcnKipf}}: As the representative of GNNs, it has been designed to extract features in various downstream tasks.
%		\item \textbf{GAT \cite{GATVelickovic}}: By utilizing multiple attention layers to dynamically adjust the weights of neighboring nodes, it can better learn the relationships and feature information.
%		\item \textbf{HANG \cite{hang}}: It is a recently-proposed graph neural flow method, and resorting to Hamiltonian neural flows to learn robust node embedding.
%		\item \textbf{GCN-SVD \cite{svd}}: It serves as a preprocessing method to defend against adversarial attacks by improving the robustness of GCN through low-rank approximation. % of the perturbed graph.
%		\item \textbf{GCN-Jaccard \cite{jaccard}}: By removing edges with low Jaccard similarity smaller than a predefined threshold, the graph topology is optimized to minimize the distinctions in node features caused by adversarial attacks.
%		\item \textbf{Mid-GCN \cite{HuangJin25}}: This approach is novel and recently proposed, it designs a mid-pass filtering GCN model to leverage mid-frequency signals (i.e. Laplacian eigenvalue around 1) to mitigate the adversarial attacks.
%	\end{sloppypar}
%\end{itemize}
i) \textbf{GCN \cite{gcnKipf}}: As the representative of GNNs, it has been designed to extract features in various downstream tasks; ii) \textbf{GAT \cite{GATVelickovic}}: By utilizing multiple attention layers to dynamically adjust the weights of neighboring nodes, it can better learn the relationships and feature information; iii) \textbf{HANG \cite{hang}}: It is a recently-proposed graph neural flow method, and resorting to Hamiltonian neural flows to learn robust node embedding; iv) \textbf{GCN-SVD \cite{svd}}: It serves as a preprocessing method to defend against adversarial attacks by improving the robustness of GCN through low-rank approximation of the perturbed graph; v) \textbf{GCN-Jaccard \cite{jaccard}}: By removing edges with low Jaccard similarity smaller than a predefined threshold, the graph topology is optimized to minimize the distinctions in node features caused by adversarial attacks; and vi) \textbf{Mid-GCN \cite{HuangJin25}}: It designs a mid-pass filtering GCN to leverage mid-frequency signals (i.e. Laplacian eigenvalue around 1) to mitigate the adversarial attacks.
\par
\begin{sloppypar}
\textbf{Execution Settings.} For non-targeted attacks, we set the rate of adversarial perturbation (RAP), i.e. the ratio of perturbed edges over all edges, to range from 0\% to 25\% in steps of 5\%. For targeted attacks, we select those nodes whose degree is greater than 10 in the test set for perturbation, with the number of perturbations ranging from 1 to 5 in steps of 1. For each graph, we randomly select 10\% of the nodes for model training, 10\% for validation, and 80\% for testing. The accuracy of node classification will be averaged over ten-time experiments. For baselines' settings, GCN \cite{gcnKipf}, GAT \cite{GATVelickovic}, and HANG \cite{hang} all use their default parameters. For GCN-SVD \cite{svd}, we choose the optimal rank reduction number from \{20, 40, 60, 80, 100\}. For GCN-Jaccard \cite{jaccard}, the threshold is selected from \{0.01, 0.02, 0.04, 0.06, 0.08, 0.1\} to find the optimal value. For our TopFeaRe, $\alpha$ is chosen from \{0.75, 0.8, 0.85, 0.9, 0.95, 0.99\} for the optimal value, with $\mathcal{W}_{J}$ and $\mathcal{W}_{C}$ setting to 0.3 and 0.7. The hardware running environment is: NVIDIA A800 80GB PCIe.
\end{sloppypar}

\section{Comparison with Baselines} \label{Exp:Preprocess&Neural}
Tables \ref{tab:Cora1}-\ref{tab:PubMed} provide the comparative experiments between ours and those baselines using preprocessing-based purification on datasets Cora, Citeseer, Amazon Photo, and PubMed. Tables \ref{tab:Combine-Cora_ML}-\ref{tab:Combine-PubMed} provide the comparative experiments between ours and those baselines using network-optimized defense on datasets Cora\_ML, Cora, Amazon Photo, and PubMed.
\begin{table*}[htbp]
\centering
\caption{Accuracy of node classification on Cora }
\label{tab:Cora1}
% \begin{tabular}{|c|c|c|c|c|c|c|c|} % 三列，均为居中对齐
\begin{tabular}{|>{\centering\arraybackslash}m{0.6cm}|>{\centering\arraybackslash}m{1.95cm}|>{\centering\arraybackslash}m{1.98cm}|>{\centering\arraybackslash}m{1.98cm}|>{\centering\arraybackslash}m{1.98cm}|>{\centering\arraybackslash}m{1.98cm}|>{\centering\arraybackslash}m{1.98cm}|>{\centering\arraybackslash}m{1.98cm}|}
\hline
\textbf{GAA} & \diagbox{{\textbf{Mod.}}}{{\textbf{RAP}}} & \textbf{0\%} & \textbf{5\%} & \textbf{10\%} & \textbf{15\%} & \textbf{20\%} & \textbf{25\%}\\ % 表头
\hline

\multirow{3}{*}{\makecell[c]{Meta\\ttack}} & GCN-SVD & 0.7801$\pm$0.0091 & 0.7622$\pm$0.0145 & 0.7478$\pm$0.0103 & 0.7090$\pm$0.0250 & 0.6849$\pm$0.0216 & 0.5800$\pm$0.1602\\ % 第一行
\cline{2-8} % 只画第二行到第八列的横线
& GCN-Jaccard &0.8221$\pm$0.0099&0.7819$\pm$0.0133&0.7561$\pm$0.0134&0.7066$\pm$0.0288&0.6878$\pm$0.0179&0.6491$\pm$0.0248\\ % 第二行
\cline{2-8} % 只画第二行到第八列的横线
& TopFeaRe &  {\textbf{0.8326$\pm$0.0083}} \(\textcolor{red}{(\uparrow \textbf{1.05\%})}\)& {\textbf{0.8032$\pm$0.0128}} \(\textcolor{red}{(\uparrow \textbf{2.13\%})}\)& {\textbf{0.7886$\pm$0.0131}} \(\textcolor{red}{(\uparrow \textbf{3.25\%})}\)& {\textbf{0.7807$\pm$0.0120}} \(\textcolor{red}{(\uparrow \textbf{7.17\%})}\)& {\textbf{0.7709$\pm$0.0126}} \(\textcolor{red}{(\uparrow \textbf{8.31\%})}\) & {\textbf{0.7521$\pm$0.0159}} \(\textcolor{red}{(\uparrow \textbf{10.30\%})}\)\\ % 第三行

\hline

\multirow{3}{*}{\makecell[c]{CE-\\PGD}} & GCN-SVD & 0.7801$\pm$0.0091 & 0.7682$\pm$0.0067 & 0.7624$\pm$0.0130 & 0.7555$\pm$0.0120 & 0.7488$\pm$0.0072 & 0.7440$\pm$0.0092\\ % 第一行
\cline{2-8} % 只画第二行到第八列的横线
& GCN-Jaccard & 0.8221$\pm$0.0099 & 0.8188$\pm$0.0102 & 0.8179$\pm$0.0092 & 0.8134$\pm$0.0090 & 0.8093$\pm$0.0067 & 0.8091$\pm$0.0096\\ % 第二行
\cline{2-8} % 只画第二行到第八列的横线
& TopFeaRe &  {\textbf{0.8326$\pm$0.0083}} \(\textcolor{red}{(\uparrow \textbf{1.05\%})}\)& {\textbf{0.8194$\pm$0.0105}} \(\textcolor{red}{(\uparrow \textbf{0.06\%})}\)& {\textbf{0.8186$\pm$0.0081}} \(\textcolor{red}{(\uparrow \textbf{0.07\%})}\)& {\textbf{0.8166$\pm$0.0061}} \(\textcolor{red}{(\uparrow \textbf{0.32\%})}\)& {\textbf{0.8158$\pm$0.0099}} \(\textcolor{red}{(\uparrow \textbf{0.65\%})}\) & {\textbf{0.8154$\pm$0.0061}} \(\textcolor{red}{(\uparrow \textbf{0.63\%})}\)\\ % 第三行

\hline

\multirow{3}{*}{DICE} & GCN-SVD & 0.7801$\pm$0.0091 & 0.7235$\pm$0.0080 & 0.7096$\pm$0.0121 & 0.6898$\pm$0.0165 & 0.6775$\pm$0.0063 & 0.6609$\pm$0.0094\\ % 第一行
\cline{2-8} % 只画第二行到第八列的横线
& GCN-Jaccard & 0.8221$\pm$0.0099 & 0.8102$\pm$0.0102 & 0.8024$\pm$0.0100 & 0.7894$\pm$0.0094 & 0.7771$\pm$0.0115 & 0.7678$\pm$0.0078\\ % 第二行
\cline{2-8} % 只画第二行到第八列的横线
& TopFeaRe &  {\textbf{0.8326$\pm$0.0083}} \(\textcolor{red}{(\uparrow \textbf{1.05\%})}\)& {\textbf{0.8152$\pm$0.0080}} \(\textcolor{red}{(\uparrow \textbf{0.50\%})}\)& {\textbf{0.8084$\pm$0.0080}} \(\textcolor{red}{(\uparrow \textbf{0.60\%})}\)& {\textbf{0.7970$\pm$0.0074}} \(\textcolor{red}{(\uparrow \textbf{0.76\%})}\)& {\textbf{0.7881$\pm$0.0104}} \(\textcolor{red}{(\uparrow \textbf{1.10\%})}\) & {\textbf{0.7807$\pm$0.0091}} \(\textcolor{red}{(\uparrow \textbf{1.29\%})}\)\\ % 第三行

\hline
\end{tabular}
\end{table*}

\begin{table*}[htbp]
\centering
\caption{Accuracy of node classification on Citeseer }
\label{tab:Citeseer1}
% \begin{tabular}{|c|c|c|c|c|c|c|c|} % 三列，均为居中对齐
\begin{tabular}{|>{\centering\arraybackslash}m{0.6cm}|>{\centering\arraybackslash}m{1.95cm}|>{\centering\arraybackslash}m{1.98cm}|>{\centering\arraybackslash}m{1.98cm}|>{\centering\arraybackslash}m{1.98cm}|>{\centering\arraybackslash}m{1.98cm}|>{\centering\arraybackslash}m{1.98cm}|>{\centering\arraybackslash}m{1.98cm}|}
\hline
\textbf{GAA} & \diagbox{{\textbf{Mod.}}}{{\textbf{RAP}}} & \textbf{0\%} & \textbf{5\%} & \textbf{10\%} & \textbf{15\%} & \textbf{20\%} & \textbf{25\%}\\ % 表头
\hline

\multirow{3}{*}{\makecell[c]{Meta\\ttack}} & GCN-SVD & 0.6831$\pm$0.0126 & 0.6686$\pm$0.0149 & 0.6456$\pm$0.0216 & 0.5634$\pm$0.1718 & 0.5832$\pm$0.0254 & 0.5708$\pm$0.0285\\ % 第一行
\cline{2-8} % 只画第二行到第八列的横线
& GCN-Jaccard & \textbf{0.7264$\pm$0.0126}&0.7098$\pm$0.0148&0.6802$\pm$0.0197&0.6469$\pm$0.0264&0.6227$\pm$0.0161&0.5956$\pm$0.0310\\ % 第二行
\cline{2-8} % 只画第二行到第八列的横线
& TopFeaRe &  {\textbf{0.7264$\pm$0.0135}} \(\textcolor{red}{(\uparrow \textbf{0\%})}\)& {\textbf{0.7159$\pm$0.0170}} \(\textcolor{red}{(\uparrow \textbf{0.61\%})}\)& {\textbf{0.7005$\pm$0.0162}} \(\textcolor{red}{(\uparrow \textbf{2.03\%})}\)& {\textbf{0.6885$\pm$0.0170}} \(\textcolor{red}{(\uparrow \textbf{4.16\%})}\)& {\textbf{0.6754$\pm$0.0147}} \(\textcolor{red}{(\uparrow \textbf{5.27\%})}\) & {\textbf{0.6647$\pm$0.0189}} \(\textcolor{red}{(\uparrow \textbf{6.91\%})}\)\\ % 第三行

\hline

\multirow{3}{*}{\makecell[c]{CE-\\PGD}} & GCN-SVD & 0.6831$\pm$0.0126 & 0.6773$\pm$0.0106 & 0.6763$\pm$0.0127 & 0.6960$\pm$0.0067 & 0.6875$\pm$0.0142 & 0.6842$\pm$0.0119\\ % 第一行
\cline{2-8} % 只画第二行到第八列的横线
& GCN-Jaccard & \textbf{0.7264$\pm$0.0126} & 0.7184$\pm$0.0140 & 0.7167$\pm$0.0126 & 0.7137$\pm$0.0128 & 0.7136$\pm$0.0134 & 0.7111$\pm$0.0144\\ % 第二行
\cline{2-8} % 只画第二行到第八列的横线
& TopFeaRe &  {\textbf{0.7264$\pm$0.0135}} \(\textcolor{red}{(\uparrow \textbf{0\%})}\)& {\textbf{0.7226$\pm$0.0137}} \(\textcolor{red}{(\uparrow \textbf{0.42\%})}\)& {\textbf{0.7226$\pm$0.0114}} \(\textcolor{red}{(\uparrow \textbf{0.59\%})}\)& {\textbf{0.7221$\pm$0.0113}} \(\textcolor{red}{(\uparrow \textbf{0.84\%})}\)& {\textbf{0.7185$\pm$0.0111}} \(\textcolor{red}{(\uparrow \textbf{0.49\%})}\) & {\textbf{0.7146$\pm$0.0111}} \(\textcolor{red}{(\uparrow \textbf{0.35\%})}\)\\ % 第三行

\hline

\multirow{3}{*}{DICE} & GCN-SVD & 0.6831$\pm$0.0126 & 0.6555$\pm$0.0161 & 0.6461$\pm$0.0230 & 0.6232$\pm$0.0155 & 0.6165$\pm$0.0178 & 0.5978$\pm$0.0213\\ % 第一行
\cline{2-8} % 只画第二行到第八列的横线
& GCN-Jaccard & \textbf{0.7264$\pm$0.0126} & \textbf{0.7185$\pm$0.0136} & 0.7081$\pm$0.0123 & 0.6964$\pm$0.0141 & 0.6834$\pm$0.0137 & 0.6777$\pm$0.0123\\ % 第二行
\cline{2-8} % 只画第二行到第八列的横线
& TopFeaRe &  {\textbf{0.7264$\pm$0.0135}} \(\textcolor{red}{(\uparrow \textbf{0\%})}\)& {0.7126$\pm$0.0156} \(\textcolor{red}{(\downarrow \textbf{0.59\%})}\)& {\textbf{0.7116$\pm$0.0141}} \(\textcolor{red}{(\uparrow \textbf{0.35\%})}\)& {\textbf{0.7008$\pm$0.0134}} \(\textcolor{red}{(\uparrow \textbf{0.44\%})}\)& {\textbf{0.6964$\pm$0.0179}} \(\textcolor{red}{(\uparrow \textbf{1.30\%})}\) & {\textbf{0.6847$\pm$0.0128}} \(\textcolor{red}{(\uparrow \textbf{0.70\%})}\)\\ % 第三行
\hline
\end{tabular}
\end{table*}

\begin{table*}[htbp]
\centering
\caption{Accuracy of node classification on Amazon Photo}
\label{tab:Amazon Photo}
% \begin{tabular}{|c|c|c|c|c|c|c|c|} % 三列，均为居中对齐
\begin{tabular}{|>{\centering\arraybackslash}m{0.6cm}|>{\centering\arraybackslash}m{1.95cm}|>{\centering\arraybackslash}m{1.98cm}|>{\centering\arraybackslash}m{1.98cm}|>{\centering\arraybackslash}m{1.98cm}|>{\centering\arraybackslash}m{1.98cm}|>{\centering\arraybackslash}m{1.98cm}|>{\centering\arraybackslash}m{1.98cm}|}
	\hline
	\textbf{GAA} & \diagbox{{\textbf{Mod.}}}{{\textbf{RAP}}} & \textbf{0\%} & \textbf{5\%} & \textbf{10\%} & \textbf{15\%} & \textbf{20\%} & \textbf{25\%}\\ % 表头
	\hline
	
	\multirow{3}{*}{\makecell[c]{Meta\\ttack}} & GCN-SVD & 0.8713$\pm$0.0047 & 0.8238$\pm$0.0235 & 0.8092$\pm$0.0249 & 0.7651$\pm$0.0827 & 0.6632$\pm$0.1165 & 0.6567$\pm$0.1482\\ % 第一行
	\cline{2-8} % 只画第二行到第八列的横线
	& GCN-Jaccard & 0.9232$\pm$0.0144&0.8327$\pm$0.0399&0.8217$\pm$0.0169&0.7332$\pm$0.1163&0.6344$\pm$0.1348&0.6496$\pm$0.1505\\ % 第二行
	\cline{2-8} % 只画第二行到第八列的横线
	& TopFeaRe &  {\textbf{0.9243$\pm$0.0118}} \(\textcolor{red}{(\uparrow \textbf{0.11\%})}\)& {\textbf{0.8486$\pm$0.0682}} \(\textcolor{red}{(\uparrow \textbf{1.59\%})}\)& {\textbf{0.8532$\pm$0.0338}} \(\textcolor{red}{(\uparrow \textbf{3.15\%})}\)& {\textbf{0.7754$\pm$0.1088}} \(\textcolor{red}{(\uparrow \textbf{1.03\%})}\)& {\textbf{0.7108$\pm$0.1246}} \(\textcolor{red}{(\uparrow \textbf{4.76\%})}\) & {\textbf{0.6847$\pm$0.1278}} \(\textcolor{red}{(\uparrow \textbf{2.80\%})}\)\\ % 第三行
	
	\hline
	
	\multirow{3}{*}{\makecell[c]{CE-\\PGD}} & GCN-SVD & 0.8713$\pm$0.0047 & 0.8756$\pm$0.0029 & 0.8743$\pm$0.0094 & 0.8771$\pm$0.0030 & 0.8672$\pm$0.0037 & 0.8680$\pm$0.0042\\ % 第一行
	\cline{2-8} % 只画第二行到第八列的横线
	& GCN-Jaccard & 0.9232$\pm$0.0144 & 0.9111$\pm$0.0039 & 0.8998$\pm$0.0231 & 0.8989$\pm$0.0043 & 0.8831$\pm$0.0057 & 0.8406$\pm$0.1294\\ % 第二行
	\cline{2-8} % 只画第二行到第八列的横线
	& TopFeaRe &  {\textbf{0.9243$\pm$0.0118}} \(\textcolor{red}{(\uparrow \textbf{0.11\%})}\)& {\textbf{0.9198$\pm$0.0143}} \(\textcolor{red}{(\uparrow \textbf{0.87\%})}\)& {\textbf{0.9091$\pm$0.0072}} \(\textcolor{red}{(\uparrow \textbf{0.93\%})}\)& {\textbf{0.9061$\pm$0.0051}} \(\textcolor{red}{(\uparrow \textbf{0.72\%})}\)& {\textbf{0.8976$\pm$0.0076}} \(\textcolor{red}{(\uparrow \textbf{1.45\%})}\) & {\textbf{0.8957$\pm$0.0075}} \(\textcolor{red}{(\uparrow \textbf{2.77\%})}\)\\ % 第三行
	
	\hline
	
	\multirow{3}{*}{DICE} & GCN-SVD & 0.8713$\pm$0.0047 & 0.8467$\pm$0.0064 & 0.8358$\pm$0.0048 & 0.8077$\pm$0.0163 & 0.7650$\pm$0.0273 & 0.7377$\pm$0.0671\\ % 第一行
	\cline{2-8} % 只画第二行到第八列的横线
	& GCN-Jaccard & 0.9232$\pm$0.0144 & 0.9099$\pm$0.0173 & 0.8995$\pm$0.0095 & 0.8917$\pm$0.0093 & 0.8763$\pm$0.0156 & 0.8481$\pm$0.0273\\ % 第二行
	\cline{2-8} % 只画第二行到第八列的横线
	& TopFeaRe &  {\textbf{0.9243$\pm$0.0118}} \(\textcolor{red}{(\uparrow \textbf{0.11\%})}\)& {\textbf{0.9189$\pm$0.0093}} \(\textcolor{red}{(\uparrow \textbf{0.90\%})}\)& {\textbf{0.9094$\pm$0.0138}} \(\textcolor{red}{(\uparrow \textbf{0.99\%})}\)& {\textbf{0.8975$\pm$0.0113}} \(\textcolor{red}{(\uparrow \textbf{0.58\%})}\)& {\textbf{0.8943$\pm$0.0111}} \(\textcolor{red}{(\uparrow \textbf{1.80\%})}\) & {\textbf{0.8803$\pm$0.0164}} \(\textcolor{red}{(\uparrow \textbf{3.22\%})}\)\\ % 第三行
	\hline
\end{tabular}
\end{table*}

\begin{table*}[htbp]
\centering
\caption{Accuracy of node classification on PubMed }
\label{tab:PubMed}
% \begin{tabular}{|c|c|c|c|c|c|c|c|} % 三列，均为居中对齐
	\begin{tabular}{|>{\centering\arraybackslash}m{0.6cm}|>{\centering\arraybackslash}m{1.95cm}|>{\centering\arraybackslash}m{1.98cm}|>{\centering\arraybackslash}m{1.98cm}|>{\centering\arraybackslash}m{1.98cm}|>{\centering\arraybackslash}m{1.98cm}|>{\centering\arraybackslash}m{1.98cm}|>{\centering\arraybackslash}m{1.98cm}|}
		\hline
		\textbf{GAA} & \diagbox{{\textbf{Mod.}}}{{\textbf{RAP}}} & \textbf{0\%} & \textbf{5\%} & \textbf{10\%} & \textbf{15\%} & \textbf{20\%} & \textbf{25\%}\\ % 表头
		\hline
		
		\multirow{3}{*}{\makecell[c]{Meta\\ttack}} & GCN-SVD & 0.8455$\pm$0.0007 & 0.8457$\pm$0.0014 & 0.8457$\pm$0.0014 & 0.8427$\pm$0.0009 & 0.8164$\pm$0.0024 & 0.7887$\pm$0.0025\\ % 第一行
		\cline{2-8} % 只画第二行到第八列的横线
		& GCN-Jaccard & 0.8685$\pm$0.0007&0.7981$\pm$0.0057&0.6381$\pm$0.0088&0.5588$\pm$0.0079&0.5320$\pm$0.0062&0.4450$\pm$0.0246\\ % 第二行
		\cline{2-8} % 只画第二行到第八列的横线
		& TopFeaRe &  {\textbf{0.8824$\pm$0.0054}} \(\textcolor{red}{(\uparrow \textbf{1.39\%})}\)& {\textbf{0.8672$\pm$0.0054}} \(\textcolor{red}{(\uparrow \textbf{2.15\%})}\)& {\textbf{0.8579$\pm$0.0095}} \(\textcolor{red}{(\uparrow \textbf{1.22\%})}\)& {\textbf{0.8514$\pm$0.0110}} \(\textcolor{red}{(\uparrow \textbf{0.87\%})}\)& {\textbf{0.8365$\pm$0.0083}} \(\textcolor{red}{(\uparrow \textbf{2.01\%})}\) & {\textbf{0.8177$\pm$0.0038}} \(\textcolor{red}{(\uparrow \textbf{2.90\%})}\)\\ % 第三行
		
		\hline
		
		\multirow{3}{*}{\makecell[c]{CE-\\PGD}} & GCN-SVD & 0.8455$\pm$0.0007 & 0.8476$\pm$0.0016 & 0.8469$\pm$0.0009 & 0.8448$\pm$0.0005 & 0.8387$\pm$0.0005 & 0.8374$\pm$0.0007\\ % 第一行
		\cline{2-8} % 只画第二行到第八列的横线
		& GCN-Jaccard & 0.8685$\pm$0.0007 & 0.8539$\pm$0.0004 & 0.8416$\pm$0.0004 & 0.8305$\pm$0.0011 & 0.8261$\pm$0.0004 & 0.8154$\pm$0.0016\\ % 第二行
		\cline{2-8} % 只画第二行到第八列的横线
		& TopFeaRe &  {\textbf{0.8824$\pm$0.0054}} \(\textcolor{red}{(\uparrow \textbf{1.39\%})}\)& {\textbf{0.8644$\pm$0.0095}} \(\textcolor{red}{(\uparrow \textbf{1.05\%})}\)& {\textbf{0.8586$\pm$0.0038}} \(\textcolor{red}{(\uparrow \textbf{1.17\%})}\)& {\textbf{0.8549$\pm$0.0025}} \(\textcolor{red}{(\uparrow \textbf{1.01\%})}\)& {\textbf{0.8476$\pm$0.0068}} \(\textcolor{red}{(\uparrow \textbf{0.89\%})}\) & {\textbf{0.8457$\pm$0.0050}} \(\textcolor{red}{(\uparrow \textbf{0.83\%})}\)\\ % 第三行
		
		\hline
		
		\multirow{3}{*}{DICE} & GCN-SVD & 0.8455$\pm$0.0007 & 0.8408$\pm$0.0010 & 0.8332$\pm$0.0010 & 0.8332$\pm$0.0011 & 0.8258$\pm$0.0012 & 0.8206$\pm$0.0016\\ % 第一行
		\cline{2-8} % 只画第二行到第八列的横线
		& GCN-Jaccard & 0.8685$\pm$0.0007 & 0.8530$\pm$0.0008 & 0.8351$\pm$0.0012 & 0.8236$\pm$0.0012 & 0.8103$\pm$0.0010 & 0.7926$\pm$0.0012\\ % 第二行
		\cline{2-8} % 只画第二行到第八列的横线
		& TopFeaRe &  {\textbf{0.8824$\pm$0.0054}} \(\textcolor{red}{(\uparrow \textbf{1.39\%})}\)& {\textbf{0.8623$\pm$0.0060}} \(\textcolor{red}{(\uparrow \textbf{0.93\%})}\)& {\textbf{0.8579$\pm$0.035}} \(\textcolor{red}{(\uparrow \textbf{2.28\%})}\)& {\textbf{0.8517$\pm$0.045}} \(\textcolor{red}{(\uparrow \textbf{1.85\%})}\)& {\textbf{0.8459$\pm$0.0019}} \(\textcolor{red}{(\uparrow \textbf{2.01\%})}\) & {\textbf{0.8386$\pm$0.015}} \(\textcolor{red}{(\uparrow \textbf{1.80\%})}\)\\ % 第三行
		\hline
	\end{tabular}
\end{table*}

%\section{Comparison with Neural Network-Optimized Baselines} \label{Exp:Combine}
%Tables \ref{tab:Combine-Cora_ML}-\ref{tab:Combine-PubMed} provide the comparative experiments between ours and those baselines using network-optimized defense on datasets Cora\_ML, Cora, Amazon Photo, and PubMed.

\begin{table*}[htbp]
\centering
%\caption{ACCURACY OF NODE CLASSIFICATION UNDER COMBINATION WITH Ours USING METATTACK (\textbf{BOLD}-THE BEST)}
\caption{Accuracy of node classification on Cora\_ML under combination with Ours using Metattack}
\label{tab:Combine-Cora_ML}
% \begin{tabular}{|c|c|c|c|c|c|c|c|} % 三列，均为居中对齐
\begin{tabular}{|>{\centering\arraybackslash}m{0.6cm}|>{\centering\arraybackslash}m{1.95cm}|>{\centering\arraybackslash}m{1.98cm}|>{\centering\arraybackslash}m{1.98cm}|>{\centering\arraybackslash}m{1.98cm}|>{\centering\arraybackslash}m{1.98cm}|>{\centering\arraybackslash}m{1.98cm}|>{\centering\arraybackslash}m{1.98cm}|}
\hline
\textbf{\makecell[c]{Data\\set}} & \diagbox{{\textbf{Mod.}}}{{\textbf{RAP}}} & \textbf{0\%} & \textbf{5\%} & \textbf{10\%} & \textbf{15\%} & \textbf{20\%} & \textbf{25\%}\\ % 表头
\hline

\multirow{6}{*}{\makecell[c]{Cora\\\_ML}} & GCN & 0.8505$\pm$0.0111 & 0.7915$\pm$0.0096 & 0.7353$\pm$0.0152 & 0.6856$\pm$0.0193 & 0.6333$\pm$0.0339 & 0.5665$\pm$0.0471\\ 
\cline{2-8} 
& TopFeaRe &  {\textbf{0.8516$\pm$0.0104}} \(\textcolor{red}{(\uparrow \textbf{0.11\%})}\)& {\textbf{0.8293$\pm$0.0088}} \(\textcolor{red}{(\uparrow \textbf{3.78\%})}\)& {\textbf{0.8219$\pm$0.0114}} \(\textcolor{red}{(\uparrow \textbf{8.66\%})}\)& {\textbf{0.8150$\pm$0.0107}} \(\textcolor{red}{(\uparrow \textbf{12.94\%})}\)& {\textbf{0.8086$\pm$0.0104}} \(\textcolor{red}{(\uparrow \textbf{17.53\%})}\) & {\textbf{0.7920$\pm$0.0128}} \(\textcolor{red}{(\uparrow \textbf{22.55\%})}\)\\ 
\cline{2-8}
& GAT & 0.8466$\pm$0.0108&0.8005$\pm$0.0165&0.7590$\pm$0.0150&0.7238$\pm$0.0172&0.6778$\pm$0.0351&0.6174$\pm$0.0500\\ 
\cline{2-8} 
& GAT-Our &  {\textbf{0.8476$\pm$0.0097}} \(\textcolor{red}{(\uparrow \textbf{0.10\%})}\)& {\textbf{0.8213$\pm$0.0142}} \(\textcolor{red}{(\uparrow \textbf{2.08\%})}\)& {\textbf{0.8146$\pm$0.0124}} \(\textcolor{red}{(\uparrow \textbf{5.56\%})}\)& {\textbf{0.8069$\pm$0.0085}} \(\textcolor{red}{(\uparrow \textbf{8.31\%})}\)& {\textbf{0.7960$\pm$0.0094}} \(\textcolor{red}{(\uparrow \textbf{11.82\%})}\) & {\textbf{0.7730$\pm$0.0140}} \(\textcolor{red}{(\uparrow \textbf{15.56\%})}\)\\
\cline{2-8}
& HANG & 0.8419$\pm$0.0096&0.8287$\pm$0.0100&0.8175$\pm$0.0105&0.8029$\pm$0.0062&0.7927$\pm$0.0170&0.7737$\pm$0.0218\\ 
\cline{2-8} 
& HANG-Our &  {\textbf{0.8466$\pm$0.0099}} \(\textcolor{red}{(\uparrow \textbf{0.47\%})}\)& {\textbf{0.8315$\pm$0.0099}} \(\textcolor{red}{(\uparrow \textbf{0.28\%})}\)& {\textbf{0.8204$\pm$0.0100}} \(\textcolor{red}{(\uparrow \textbf{0.29\%})}\)& {\textbf{0.8131$\pm$0.0063}} \(\textcolor{red}{(\uparrow \textbf{1.02\%})}\)& {\textbf{0.8060$\pm$0.0089}} \(\textcolor{red}{(\uparrow \textbf{1.33\%})}\) & {\textbf{0.7973$\pm$0.0137}} \(\textcolor{red}{(\uparrow \textbf{2.36\%})}\)\\
\cline{2-8}
& MidGCN & 0.8340$\pm$0.0021&0.8001$\pm$0.0037&0.7895$\pm$0.0037&0.7780$\pm$0.0055&0.7566$\pm$0.0057&0.7417$\pm$0.0072\\ 
\cline{2-8} 
& MidGCN-Our &  {\textbf{0.8349$\pm$0.0016}} \(\textcolor{red}{(\uparrow \textbf{0.09\%})}\)& {\textbf{0.8035$\pm$0.0044}} \(\textcolor{red}{(\uparrow \textbf{0.34\%})}\)& {\textbf{0.7943$\pm$0.0047}} \(\textcolor{red}{(\uparrow \textbf{0.48\%})}\)& {\textbf{0.7829$\pm$0.0074}} \(\textcolor{red}{(\uparrow \textbf{0.49\%})}\)& {\textbf{0.7645$\pm$0.0057}} \(\textcolor{red}{(\uparrow \textbf{0.79\%})}\) & {\textbf{0.7422$\pm$0.0085}} \(\textcolor{red}{(\uparrow \textbf{0.05\%})}\)\\

\hline

\end{tabular}
\end{table*}

\begin{table*}[htbp]
\centering
%\caption{ACCURACY OF NODE CLASSIFICATION UNDER COMBINATION WITH Ours USING METATTACK (\textbf{BOLD}-THE BEST)}
\caption{Accuracy of node classification on Cora under combination with Ours using Metattack}
\label{tab:Combine-Cora}
% \begin{tabular}{|c|c|c|c|c|c|c|c|} % 三列，均为居中对齐
\begin{tabular}{|>{\centering\arraybackslash}m{0.6cm}|>{\centering\arraybackslash}m{1.95cm}|>{\centering\arraybackslash}m{1.98cm}|>{\centering\arraybackslash}m{1.98cm}|>{\centering\arraybackslash}m{1.98cm}|>{\centering\arraybackslash}m{1.98cm}|>{\centering\arraybackslash}m{1.98cm}|>{\centering\arraybackslash}m{1.98cm}|}
\hline
\textbf{\makecell[c]{Data\\set}} & \diagbox{{\textbf{Mod.}}}{{\textbf{RAP}}} & \textbf{0\%} & \textbf{5\%} & \textbf{10\%} & \textbf{15\%} & \textbf{20\%} & \textbf{25\%}\\ % 表头

\hline
\multirow{6}{*}{Cora} & GCN & 0.8300$\pm$0.0100 & 0.7535$\pm$0.0184 & 0.6683$\pm$0.0340 & 0.5908$\pm$0.0408 & 0.5055$\pm$0.0330 & 0.4397$\pm$0.0358\\ 
\cline{2-8} 
& TopFeaRe &  {\textbf{0.8324$\pm$0.0101}} \(\textcolor{red}{(\uparrow \textbf{0.24\%})}\)& {\textbf{0.8032$\pm$0.0128}} \(\textcolor{red}{(\uparrow \textbf{4.97\%})}\)& {\textbf{0.7886$\pm$0.0131}} \(\textcolor{red}{(\uparrow \textbf{12.03\%})}\)& {\textbf{0.7807$\pm$0.0120}} \(\textcolor{red}{(\uparrow \textbf{18.99\%})}\)& {\textbf{0.7709$\pm$0.0126}} \(\textcolor{red}{(\uparrow \textbf{26.54\%})}\) & {\textbf{0.7521$\pm$0.0159}} \(\textcolor{red}{(\uparrow \textbf{31.24\%})}\)\\ 
\cline{2-8}
& GAT & 0.8328$\pm$0.0101&0.7965$\pm$0.0175&0.7784$\pm$0.0190&0.7136$\pm$0.0463&0.6655$\pm$0.0390&0.6178$\pm$0.0441\\ 
\cline{2-8} 
& GAT-Our &  {\textbf{0.8388$\pm$0.0110}} \(\textcolor{red}{(\uparrow \textbf{0.60\%})}\)& {\textbf{0.8122$\pm$0.0122}} \(\textcolor{red}{(\uparrow \textbf{1.57\%})}\)& {\textbf{0.8019$\pm$0.0121}} \(\textcolor{red}{(\uparrow \textbf{2.35\%})}\)& {\textbf{0.7974$\pm$0.0109}} \(\textcolor{red}{(\uparrow \textbf{8.38\%})}\)& {\textbf{0.7790$\pm$0.0126}} \(\textcolor{red}{(\uparrow \textbf{11.35\%})}\) & {\textbf{0.7704$\pm$0.0155}} \(\textcolor{red}{(\uparrow \textbf{15.26\%})}\)\\
\cline{2-8}
& HANG & 0.8192$\pm$0.0092&0.7894$\pm$0.0137&0.7724$\pm$0.0185&0.7419$\pm$0.0256&0.7154$\pm$0.0245&0.6946$\pm$0.0438\\ 
\cline{2-8} 
& HANG-Our &  {\textbf{0.8346$\pm$0.0090}} \(\textcolor{red}{(\uparrow \textbf{1.54\%})}\)& {\textbf{0.8036$\pm$0.0117}} \(\textcolor{red}{(\uparrow \textbf{1.42\%})}\)& {\textbf{0.7867$\pm$0.0145}} \(\textcolor{red}{(\uparrow \textbf{1.43\%})}\)& {\textbf{0.7852$\pm$0.0110}} \(\textcolor{red}{(\uparrow \textbf{4.33\%})}\)& {\textbf{0.7731$\pm$0.0132}} \(\textcolor{red}{(\uparrow \textbf{5.77\%})}\) & {\textbf{0.7671$\pm$0.0190}} \(\textcolor{red}{(\uparrow \textbf{7.25\%})}\)\\
\cline{2-8}
& MidGCN & 0.8310$\pm$0.0041&0.7581$\pm$0.1582&0.7888$\pm$0.0093&0.7543$\pm$0.0111&0.7550$\pm$0.0135&0.7443$\pm$0.0083\\ 
\cline{2-8} 
& MidGCN-Our &  {\textbf{0.8323$\pm$0.0036}} \(\textcolor{red}{(\uparrow \textbf{0.13\%})}\)& {\textbf{0.8148$\pm$0.0037}} \(\textcolor{red}{(\uparrow \textbf{5.67\%})}\)& {\textbf{0.7944$\pm$0.0078}} \(\textcolor{red}{(\uparrow \textbf{0.56\%})}\)& {\textbf{0.7641$\pm$0.0118}} \(\textcolor{red}{(\uparrow \textbf{0.98\%})}\)& {\textbf{0.7645$\pm$0.0049}} \(\textcolor{red}{(\uparrow \textbf{0.95\%})}\) & {\textbf{0.7469$\pm$0.0121}} \(\textcolor{red}{(\uparrow \textbf{0.26\%})}\)\\

\hline

\end{tabular}
\end{table*}

\begin{table*}[htbp]
\centering
%\caption{ACCURACY OF NODE CLASSIFICATION UNDER COMBINATION WITH Ours USING METATTACK (\textbf{BOLD}-THE BEST)}
\caption{Accuracy of node classification on Amazon Photo under combination with Ours using Metattack}
\label{tab:Combine-AmazonPhoto}
% \begin{tabular}{|c|c|c|c|c|c|c|c|} % 三列，均为居中对齐
\begin{tabular}{|>{\centering\arraybackslash}m{0.6cm}|>{\centering\arraybackslash}m{1.95cm}|>{\centering\arraybackslash}m{1.98cm}|>{\centering\arraybackslash}m{1.98cm}|>{\centering\arraybackslash}m{1.98cm}|>{\centering\arraybackslash}m{1.98cm}|>{\centering\arraybackslash}m{1.98cm}|>{\centering\arraybackslash}m{1.98cm}|}
\hline
\textbf{\makecell[c]{Data\\set}} & \diagbox{{\textbf{Mod.}}}{{\textbf{RAP}}} & \textbf{0\%} & \textbf{5\%} & \textbf{10\%} & \textbf{15\%} & \textbf{20\%} & \textbf{25\%}\\ % 表头

\hline
\multirow{6}{*}{\makecell[c]{Amaz\\on P\\hoto}} & GCN & 0.9215$\pm$0.0034 & 0.8242$\pm$0.0774 & 0.8281$\pm$0.0581 & 0.7529$\pm$0.1425 & 0.6295$\pm$0.1496 & 0.6108$\pm$0.1526\\ 
\cline{2-8} 
& TopFeaRe &  {\textbf{0.9243$\pm$0.0118}} \(\textcolor{red}{(\uparrow \textbf{0.28\%})}\)& {\textbf{0.8486$\pm$0.0682}} \(\textcolor{red}{(\uparrow \textbf{2.44\%})}\)& {\textbf{0.8532$\pm$0.0338}} \(\textcolor{red}{(\uparrow \textbf{2.51\%})}\)& {\textbf{0.7754$\pm$0.1088}} \(\textcolor{red}{(\uparrow \textbf{2.25\%})}\)& {\textbf{0.7108$\pm$0.1246}} \(\textcolor{red}{(\uparrow \textbf{8.13\%})}\) & {\textbf{0.6847$\pm$0.1278}} \(\textcolor{red}{(\uparrow \textbf{7.39\%})}\)\\ 
\cline{2-8}
& GAT & 0.9354$\pm$0.0018&0.9173$\pm$0.0039&0.7824$\pm$0.1926&0.7239$\pm$0.1575&0.6528$\pm$0.1737&0.5811$\pm$0.1824\\
\cline{2-8} 
& GAT-Our &  {\textbf{0.9448$\pm$0.0018}} \(\textcolor{red}{(\uparrow \textbf{0.94\%})}\)& {\textbf{0.9257$\pm$0.0030}} \(\textcolor{red}{(\uparrow \textbf{0.84\%})}\)& {\textbf{0.8598$\pm$0.1236}} \(\textcolor{red}{(\uparrow \textbf{7.74\%})}\)& {\textbf{0.8079$\pm$0.1337}} \(\textcolor{red}{(\uparrow \textbf{8.40\%})}\)& {\textbf{0.7731$\pm$0.1729}} \(\textcolor{red}{(\uparrow \textbf{12.03\%})}\) & {\textbf{0.7614$\pm$0.1384}} \(\textcolor{red}{(\uparrow \textbf{18.03\%})}\)\\
\cline{2-8}
& HANG & 0.9347$\pm$0.0032&0.9162$\pm$0.0054&0.9151$\pm$0.0039&0.9093$\pm$0.0068&0.9016$\pm$0.0056&0.8992$\pm$0.0061\\ 
\cline{2-8} 
& HANG-Our &  {\textbf{0.9441$\pm$0.0025}} \(\textcolor{red}{(\uparrow \textbf{0.94\%})}\)& {\textbf{0.9245$\pm$0.0045}} \(\textcolor{red}{(\uparrow \textbf{0.83\%})}\)& {\textbf{0.9236$\pm$0.0033}} \(\textcolor{red}{(\uparrow \textbf{0.85\%})}\)& {\textbf{0.9203$\pm$0.0058}} \(\textcolor{red}{(\uparrow \textbf{1.10\%})}\)& {\textbf{0.9184$\pm$0.0048}} \(\textcolor{red}{(\uparrow \textbf{1.68\%})}\) & {\textbf{0.9052$\pm$0.0052}} \(\textcolor{red}{(\uparrow \textbf{0.60\%})}\)\\
\cline{2-8}
& MidGCN &      0.7725$\pm$0.0025&0.7439$\pm$0.0049&0.6595$\pm$0.0082&0.5708$\pm$0.0170&0.4875$\pm$0.0088&0.4675$\pm$0.0113\\ 
\cline{2-8} 
& MidGCN-Our &  {\textbf{0.7903$\pm$0.0053}} \(\textcolor{red}{(\uparrow \textbf{1.78\%})}\)& {\textbf{0.7473$\pm$0.0064}} \(\textcolor{red}{(\uparrow \textbf{0.34\%})}\)& {\textbf{0.6651$\pm$0.0067}} \(\textcolor{red}{(\uparrow \textbf{0.56\%})}\)& {\textbf{0.5969$\pm$0.0102}} \(\textcolor{red}{(\uparrow \textbf{2.61\%})}\)& {\textbf{0.5020$\pm$0.0088}} \(\textcolor{red}{(\uparrow \textbf{1.45\%})}\) & {\textbf{0.4759$\pm$0.0101}} \(\textcolor{red}{(\uparrow \textbf{0.84\%})}\)\\

\hline

\end{tabular}
\end{table*}

\begin{table*}[htbp]
\centering
%\caption{ACCURACY OF NODE CLASSIFICATION UNDER COMBINATION WITH Ours USING METATTACK (\textbf{BOLD}-THE BEST)}
\caption{Accuracy of node classification on PubMed under combination with Ours using Metattack}
\label{tab:Combine-PubMed}
% \begin{tabular}{|c|c|c|c|c|c|c|c|} % 三列，均为居中对齐
\begin{tabular}{|>{\centering\arraybackslash}m{0.6cm}|>{\centering\arraybackslash}m{1.95cm}|>{\centering\arraybackslash}m{1.98cm}|>{\centering\arraybackslash}m{1.98cm}|>{\centering\arraybackslash}m{1.98cm}|>{\centering\arraybackslash}m{1.98cm}|>{\centering\arraybackslash}m{1.98cm}|>{\centering\arraybackslash}m{1.98cm}|}
\hline
\textbf{\makecell[c]{Data\\set}} & \diagbox{{\textbf{Mod.}}}{{\textbf{RAP}}} & \textbf{0\%} & \textbf{5\%} & \textbf{10\%} & \textbf{15\%} & \textbf{20\%} & \textbf{25\%}\\ % 表头

\hline
\multirow{6}{*}{\makecell[c]{PubM\\ed}} & GCN & 0.8340$\pm$0.0021 & 0.8001$\pm$0.0037 & 0.7895$\pm$0.0037 & 0.7780$\pm$0.0055 & 0.7566$\pm$0.0057 & 0.7417$\pm$0.0072\\ 
\cline{2-8} 
& TopFeaRe &  {\textbf{0.8824$\pm$0.0054}} \(\textcolor{red}{(\uparrow \textbf{4.84\%})}\)& {\textbf{0.8672$\pm$0.0054}} \(\textcolor{red}{(\uparrow \textbf{6.71\%})}\)& {\textbf{0.8579$\pm$0.0095}} \(\textcolor{red}{(\uparrow \textbf{6.84\%})}\)& {\textbf{0.8514$\pm$0.0110}} \(\textcolor{red}{(\uparrow \textbf{7.34\%})}\)& {\textbf{0.8365$\pm$0.0083}} \(\textcolor{red}{(\uparrow \textbf{7.99\%})}\) & {\textbf{0.8177$\pm$0.0038}} \(\textcolor{red}{(\uparrow \textbf{7.60\%})}\)\\ 
\cline{2-8}

& GAT & 0.8583$\pm$0.0012&0.8271$\pm$0.0032&0.6808$\pm$0.0236&0.5419$\pm$0.0128&0.3640$\pm$0.0416&0.3165$\pm$0.0394\\
\cline{2-8} 
& GAT-Our &  {\textbf{0.8640$\pm$0.0004}} \(\textcolor{red}{(\uparrow \textbf{0.57\%})}\)& {\textbf{0.8348$\pm$0.0061}} \(\textcolor{red}{(\uparrow \textbf{0.77\%})}\)& {\textbf{0.7923$\pm$0.0140}} \(\textcolor{red}{(\uparrow \textbf{11.15\%})}\)& {\textbf{0.6546$\pm$0.0271}} \(\textcolor{red}{(\uparrow \textbf{11.27\%})}\)& {\textbf{0.6070$\pm$0.0091}} \(\textcolor{red}{(\uparrow \textbf{24.30\%})}\) & {\textbf{0.5848$\pm$0.0220}} \(\textcolor{red}{(\uparrow \textbf{26.83\%})}\)\\
\cline{2-8}
& HANG & 0.8712$\pm$0.0006&0.8001$\pm$0.0037&0.7895$\pm$0.0037&0.7780$\pm$0.0055&0.7566$\pm$0.0057&0.7417$\pm$0.0072\\ 
\cline{2-8} 
& HANG-Our &  {\textbf{0.8711$\pm$0.0008}} \(\textcolor{red}{(\downarrow \textbf{0.01\%})}\)& {\textbf{0.8665$\pm$0.0027}} \(\textcolor{red}{(\uparrow \textbf{6.64\%})}\)& {\textbf{0.8637$\pm$0.0014}} \(\textcolor{red}{(\uparrow \textbf{7.42\%})}\)& {\textbf{0.8571$\pm$0.0042}} \(\textcolor{red}{(\uparrow \textbf{7.91\%})}\)& {\textbf{0.8492$\pm$0.0023}} \(\textcolor{red}{(\uparrow \textbf{9.26\%})}\) & {\textbf{0.7940$\pm$0.0103}} \(\textcolor{red}{(\uparrow \textbf{5.23\%})}\)\\
\cline{2-8}
& MidGCN &      0.8340$\pm$0.0021&0.8001$\pm$0.0037&0.7895$\pm$0.0037&0.7780$\pm$0.0055&0.7566$\pm$0.0057&0.7417$\pm$0.0072\\ 
\cline{2-8} 
& MidGCN-Our &  {\textbf{0.8581$\pm$0.0053}} \(\textcolor{red}{(\uparrow \textbf{2.41\%})}\)& {\textbf{0.8403$\pm$0.0062}} \(\textcolor{red}{(\uparrow \textbf{4.02\%})}\)& {\textbf{0.8111$\pm$0.0127}} \(\textcolor{red}{(\uparrow \textbf{2.16\%})}\)& {\textbf{0.7959$\pm$0.0219}} \(\textcolor{red}{(\uparrow \textbf{1.79\%})}\)& {\textbf{0.7784$\pm$0.0123}} \(\textcolor{red}{(\uparrow \textbf{2.18\%})}\) & {\textbf{0.7532$\pm$0.0308}} \(\textcolor{red}{(\uparrow \textbf{1.15\%})}\)\\

\hline

\end{tabular}
\end{table*}

\end{document}